\newcommand{\mr}[1]{\mbox{\footnotesize \color{RedViolet}$\triangleright\;$#1}\quad\quad} 
\newcommand{\CoreSt}
{{\sc Verify Core-Stabilty}\xspace}
\newcommand{\mU}{\mathcal U}
\newcommand{\eji}{\bm{e}_{j, i}}
\title{On the Existence and Complexity of Core-Stable Data Exchanges \thanks{The paper has been accepted by the 39th Annual Conference on Neural Information Processing Systems (NeurIPS'25).}}
\author[1]{Jiaxin Song \thanks{Email: \texttt{jiaxins8@illinois.edu}}}
\author[1]{Pooja Kulkarni \thanks{Email: \texttt{poojark2@ illinois.edu}}}
\author[2]{Parnian Shahkar \thanks{Email: \texttt{shakarp@uci.edu}}}
\author[1]{Bhaskar Ray Chaudhury \thanks{Email: \texttt{braycha@illinois.edu}}}
\affil[1]{University of Illinois, Urbana-Champaign}
\affil[2]{University of California, Irvine}
\date{}
\begin{document}
\maketitle

\begin{abstract}
The rapid growth of data-driven technologies, coupled with the widespread distribution of high-quality data across various organizations, has made the development of efficient data exchange protocols increasingly important. However, agents must navigate the delicate balance between acquiring valuable data and bearing the costs of sharing their own. Ensuring stability in these exchanges is essential to prevent agents—or groups of agents—from departing and conducting local exchanges independently.

To address this, we study a model where $n$ agents participate in a data exchange. Each agent has an associated concave payoff and convex cost function -- a setting typical in domains such as PAC learning and random discovery models. The net utility of an agent is payoff minus the cost. We study the classical notion of \emph{core-stability}. An exchange is core stable if no subset of agents has any incentive to deviate to a different exchange. This notion also guarantees \emph{individual rationality} and \emph{Pareto optimality}.
Our key contributions are:
\begin{itemize}[leftmargin=*]
    \item \textbf{Existence and Computation.} Modeling the data exchange as an $n$ person game we prove the game is \emph{balanced}, thereby guaranteeing the existence of core-stable exchanges. This approach also naturally gives us a pivoting algorithm via Scarf's theorem \cite{scarf1967core}. Further, we show that the algorithm works well in practice through our empirical results. 
    \item \textbf{Complexity.} We prove that computing a core-stable exchange is \emph{\PPAD-hard}, even when the potential blocking coalitions are restricted to a constant number of agents. To the best of our knowledge, this is the first \PPAD-hardness result for problems seeking core-like guarantees in data economies~\cite{BhaskaraGIKMS24, ACGM'24}.
\end{itemize}

We further show that relaxing either the concavity of the payoff function or the convexity of the cost function can lead to settings where core-stable exchanges may not exist. Further, given an arbitrary instance, determining whether a core-stable data exchange exists is \NP-hard. Together, these findings delineate the existential and computational boundaries of achieving core stability in data exchange economies.
\end{abstract}

\newpage

\section{Introduction}
\label{sec:intro}

From accelerating vaccine development in healthcare to improving fraud detection in financial services and advancing self-driving technology in the automotive industry, \emph{high-quality data} has become the bedrock of algorithmic decision-making and AI-driven solutions in the 21st century. However, this valuable data is often fragmented i.e., distributed across multiple organizations. This decentralization makes data sharing and collaboration critical for effective decision-making. For instance, in healthcare, capturing nuanced relationships between disease patterns, socio-economic factors, genetic information, and rare conditions requires training machine learning models on large, diverse datasets that extend beyond curated datasets within individual organizations~\cite{rieke2020future}. Similarly, in autonomous vehicle technology, self-driving cars must be exposed to a wide range of driving scenarios, including varying terrains, climates, and traffic regulations across countries, making it essential to train models on diverse datasets from multiple regions~\cite{xu2023federated}. 
Collaborative data economics offers both unique opportunities and challenges: Unlike most economic assets, data is \emph{non-rivalrous}-- meaning multiple organizations can simultaneously benefit from the same data. This creates the potential for significantly greater collective value compared to traditional market economies, where assets are typically rivalrous. However, the benefits of data sharing are tempered by concerns over privacy, security, and the risk of losing a competitive advantage.  As a result, despite its vast potential, data collaboration has not yet reached its full scale. As the European Council aptly notes~\cite{EC2020}:

\begin{center}
  ``\emph{In spite of the economic potential, data sharing between companies has not taken off at a sufficient scale. This is due to a lack of economic incentives, including the fear of losing a competitive edge.}''    
\end{center}

We introduce a collaborative data exchange economy, where a group of agents, each having an endowment of data, aim to engage in mutually beneficial data exchanges. Each agent derives (i) a payoff from the data acquired (indicative of the agent's value for the marginal improvement in predictive payoff of their ML model from the acquired data) and (ii) incurs a cost for sharing their own data. The agent's net utility from the data exchange is defined as the payoff minus the cost. In our model, shared data cannot be redistributed by other agents. Specifically, when we say that agent $i$ shares its data with agent $j$, we mean that agent $i$ allows agent $j$ to refine its machine learning model using samples of $i$'s data. However, this does not entail direct data sharing between $i$ and $j$. Instead, we adopt a framework similar to standard collaborative learning paradigms, such as Federated Learning~\cite{mcmahan2017communication}, where $j$ trains on $i$'s data by only receiving the gradients of the loss function on $i$'s data, allowing agent $j$ to update its model without direct access to $i$'s data.

\paragraph{Desiderata.} The gold-standard desiderata in a collaborative economy is \emph{core-stability}-- a data exchange is core-stable if no group of agents can identify a local exchange among themselves that they all strictly prefer to the current exchange. Core-stability implies other desired guarantees like (i) \emph{individual rationality:} every agent participating in the data exchange gains more in utility than they lose in the cost of sharing their own data, and (ii) \emph{Pareto-optimality:} there exists no exchange that all agents strictly prefer to the current exchange. This paper delves into the conditions under which core-stable exchanges exist within this collaborative data exchange economy and explores how to compute such exchanges.

\subsection{Our Contributions}
\paragraph{Data Exchange Model.} In an instance of our problem, there are $n$ agents $N$, where each agent $i$ owns dataset $D_i$. 
In a data exchange $\bm{x}= (x_{i, j})_{i\neq j\in [n]}$, $x_{i,j}\in [0,1]$ represents the fraction of $D_i$ shared with $j$, $x_{-i}$ represents the data bundle $(x_{1,i}, x_{2,i}, \dots, x_{n,i})$ that $i$ receives from the data exchange, and $x_i$ represents the data bundle $(x_{i,1}, x_{i,2}, \dots, x_{i,n})$ that $i$ gives to the data exchange. 
Given a data exchange $x$, an agent's utility $u_i(\bm{x})$ is given by 
$$
u_i(\bm{x}) = p_i(x_{-i}) - c_i(x_i),
$$ 
where $p_i (x_{-i}) \in \mathbb{R}_{\ge 0}$ and $c_i(x_i) \in \mathbb{R}_{\ge 0}$ denote the \emph{payoff} agent $i$ has from $x_{-i}$ and the \emph{cost} incurred by sharing $x_i$ for agent $i$ respectively. 
We assume $p_i(\textbf{0}) = c_i(\textbf{0}) = 0$, which means that agent $i$ receives no benefit and suffers no cost if she is not exchanging any data with others. Further, consistent with the existing literature on data-sharing~\cite{MurhekarYCLM23, KGJ22, BlumHPS21}, $p_i(\cdot)$ and $c_i(\cdot)$ are \emph{monotone} in $x_{ji}$ and $x_{ij}$ for all $j$, i.e., more data acquired gives higher payoff, and more data shared leads to higher costs.

\noindent \emph{Core-Stability.} A data exchange $\bm{x}$ is core-stable if there exists no coalition of agents $U \subseteq N$, and an exchange $\bm{x}^U$ among agents in $U$ such that $u_i(\bm{x}^U) > u_i(\bm{x})$ for all $i \in U$.

\paragraph{On the Existence of Core-Stable Exchanges.} 
Our first observation is that a core-stable data exchange may not always exist, even when there are only three agents. 
This motivates studying natural conditions that guarantee the existence of a core-stable data exchange. 
To this end, we prove that when agents have concave payoff functions and convex cost functions, a core-stable data exchange always exists. 
The foregoing conditions capture a broad range of interesting instances~\cite{BlumHPS21, KGJ22}. 
Convexity of cost is a natural choice since it captures the property of increasing marginal costs. For instance, data sharing through \emph{ordered selection}, i.e., sharing records in ascending order of costs involved for collecting the records, results in convex cost functions. 
There are more models that result in strictly convex cost functions (\cite{li2014pricing}). 
Similarly, several important ML models exhibit concave payoff functions; for instance, payoffs in linear or random discovery models~\cite{blum21}, random coverage models~\cite{blum21}, and general PAC learning~\cite{mohri2018foundations} are all concave. 
Furthermore, there is empirical evidence that the accuracy function in neural networks under the cross-entropy loss is also concave (\cite{kaplan2020}). 
We further show that relaxing either of these conditions, i.e., concavity of payoff or convexity of cost, can lead to instances that do not admit core-stable exchanges. 

\begin{theorem}
    When agents have concave payoff functions and convex cost functions, a core-stable data exchange always exists. One can construct instances relaxing only one of the foregoing conditions (either concavity in payoff or convexity in costs) that do not admit any core-stable data exchanges.  
\end{theorem}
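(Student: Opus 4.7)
The plan is to split the theorem into two independent claims and address each separately. For the existence claim, I would model the data exchange as an $n$-person NTU (non-transferable utility) game where, for every coalition $S \subseteq N$, the characteristic set $V(S)$ consists of all utility vectors $(v_i)_{i \in S}$ achievable by some internal exchange $\bm{x}^S$ supported on $S \times S$. Scarf's theorem guarantees a non-empty core for any such game that is \emph{balanced}, so the key technical content is verifying balancedness of $V(\cdot)$.

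To verify balancedness, I would fix a balanced collection of coalitions $\mathcal{B}$ with weights $\{\lambda_S\}_{S \in \mathcal{B}}$ satisfying $\sum_{S \in \mathcal{B} : i \in S} \lambda_S = 1$ for every $i \in N$, together with feasible exchanges $\bm{x}^S$ (extended by zeros outside $S \times S$) realizing a candidate utility profile $(u^*_i)_{i \in N}$ with $u_i(\bm{x}^S) \geq u^*_i$ for all $S \ni i$. The candidate grand exchange is the convex combination
\begin{equation*}
\hat{\bm{x}} \;=\; \sum_{S \in \mathcal{B}} \lambda_S \, \bm{x}^S.
\end{equation*}
For each agent $i$, the bundle received is $\hat{x}_{-i} = \sum_{S \ni i} \lambda_S x^S_{-i}$ and the bundle given is $\hat{x}_i = \sum_{S \ni i} \lambda_S x^S_i$; crucially, since balancing weights at $i$ sum to one, these are genuine convex combinations. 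Concavity of $p_i$ then yields $p_i(\hat{x}_{-i}) \geq \sum_{S \ni i} \lambda_S p_i(x^S_{-i})$, while convexity of $c_i$ yields $c_i(\hat{x}_i) \leq \sum_{S \ni i} \lambda_S c_i(x^S_i)$. Subtracting gives $u_i(\hat{\bm{x}}) \geq \sum_{S \ni i} \lambda_S u_i(\bm{x}^S) \geq u^*_i$, establishing $(u^*_i)_{i \in N} \in V(N)$. Hence the game is balanced and Scarf's theorem delivers a core-stable exchange. I would also briefly note that $\hat{\bm{x}}$ respects the $[0,1]$ constraints because each coordinate is a convex combination of values in $[0,1]$.

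For the non-existence claims, I would build two explicit small instances, aiming for three agents in each. In the non-convex-cost example, I would use concave payoff functions together with a cost function that has a steep initial ramp followed by a flat (or nearly flat) region---so sharing a small positive amount is disproportionately expensive, but sharing a lot has marginal cost near zero. This discourages \enquote{balanced} sharing across many partners while making bilateral all-or-nothing exchanges attractive; I would then arrange the payoffs cyclically (agent $1$ values data from $2$ most, $2$ from $3$, $3$ from $1$) so that any candidate exchange admits a two-agent blocking coalition via a rock-paper-scissors style improvement cycle. The non-concave-payoff example would be dual: convex costs paired with step/threshold payoffs that reward concentrated acquisition but penalize diffuse receipt, again wired so that every exchange is blocked by some pair. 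In each case the verification reduces to checking a short finite case analysis over the combinatorially few \enquote{extremal} exchange patterns.

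I expect the existence part to be the cleanest step once the Scarf setup is in place; the principal obstacle is bookkeeping to confirm that $V(S)$ satisfies the technical hypotheses of Scarf's theorem (closedness under domination, compactness, comprehensiveness), which follow from continuity and monotonicity of $p_i$ and $c_i$ together with compactness of the feasible set $[0,1]^{S \times S}$. The genuinely delicate step is the counterexample construction: making the functions simple enough to analyze while ensuring that \emph{no} exchange escapes blocking. I would verify non-existence by a complete enumeration over a finite skeleton of candidate exchanges (justified by the structure of the chosen payoff/cost functions), rather than by a general impossibility argument.
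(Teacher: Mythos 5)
Your proposal takes essentially the same route as the paper: existence via Scarf's theorem by showing the NTU game is balanced (the convex-combination exchange $\hat{\bm{x}} = \sum_{S} \lambda_S \bm{x}^S$ together with concavity of $p_i$ and convexity of $c_i$ is exactly the paper's argument, including the same verification of closedness, comprehensiveness, and boundedness of $V(S)$), and non-existence via three-agent instances with threshold-style functions, cyclic preferences, and a finite case analysis over discretized exchanges. The only caveat is that your counterexamples are blueprints rather than fully instantiated constructions, but the structure you describe (concave/linear payoffs with a steep-then-flat concave cost for one relaxation, and convex threshold payoffs with convex costs for the other) matches the paper's Appendix~A constructions and verification strategy.
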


\paragraph{Computational Results.} We allow oracle access to the value and super gradient of the utility functions.
In particular, we have the following two types of queries: \emph{Value query} $\mU(i, \bm{x})$: return the utility that agent $i$ receives from the data exchange $\bm{x}$, $u_i(\bm{x})$; \emph{Supergradident query} $\nabla\mU(i, \bm{x})$: return the supergradident of utility function $u_i(\bm{x})$ at $\bm{x}$ if $u_i(\cdot)$ is concave.

 We first prove that given an arbitrary instance, determining whether it admits a core-stable data exchange is \NP-hard.
 Thereafter, we investigate the computational complexity of identifying core-stable exchanges under sufficient conditions. 
 We show that finding a core-stable exchange under our sufficient conditions is \PPAD-hard. 
 Even if we restrict the blocking coalitions to comprise of only constantly many agents, the problem remains \PPAD-hard. 
 To the best of our knowledge, this is the first \PPAD-hardness proof for core-like guarantees in data economies~\cite{BhaskaraGIKMS24, ACGM'24}. Our proof technique could be potentially useful for settling the complexity of the problems in~\cite{BhaskaraGIKMS24, ACGM'24}.

\begin{restatable}{theorem}{ThmPPADHard}
\label{thm:ppad_hardness}
Determining core-stable data exchanges when agents have concave monotone payoff functions and convex monotone cost functions is \PPAD-hard. The hardness holds even when we restrict ourselves to our deviating coalitions of constant size. Further, for instances exhibiting non-concave payoffs and non-convex costs, it is \NP-hard to determine whether a core-stable data exchange exists.  
\end{restatable}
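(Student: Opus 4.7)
The plan is to prove \PPAD-hardness by a reduction from $\epsilon$-\textsc{GCircuit}, which is \PPAD-complete even when the gate set is restricted to a handful of arithmetic gates (addition, multiplication by a constant, comparison, and constants). Each wire of the circuit is realized in our data exchange by a dedicated pair of agents, with the share $x_{i,j}\in[0,1]$ playing the role of the wire value. Each gate is then simulated by a constant-size gadget of auxiliary agents whose payoff (concave, monotone) and cost (convex, monotone) functions are designed so that the arithmetic relation encoded by the gate holds at every core-stable exchange. If the relation is violated by any amount, the gadget's $O(1)$ agents together with the endpoint agents of the involved wires form a blocking coalition of constant size that locally readjusts the shares so as to strictly improve everyone's utility. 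Because this blocking coalition has $O(1)$ agents, the same reduction yields the stronger second claim: hardness persists when deviating coalitions are a priori restricted to constant size.

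Concretely, the payoffs are built as sums and minima of non-negative affine functions of the incoming shares (monotone and concave), and costs as sums of linear and quadratic terms in the outgoing shares (monotone and convex). An addition gate $z=\min(1,x+y)$ is enforced by a gadget whose cost penalizes $(z-x-y)^2$ and whose payoff rewards large shares, pushing the gadget to the unique balance at which $z=x+y$ up to the boundary; gates for constant multiplication and comparison are built analogously. To maintain monotonicity of all functions while still penalizing deviations in either direction on a wire, each wire is equipped with a mirrored pair of carrier-agents (one rewarding higher shares, one rewarding lower shares), a standard trick for obtaining two-sided control under a monotone-only regime. The induced fixed point of the circuit is then in direct correspondence with the core-stable exchanges of the constructed instance.

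For the \NP-hardness of deciding existence when concavity and convexity are dropped, the plan is a reduction from \textsc{3-SAT}. Each variable $v_i$ becomes a two-agent gadget whose non-convex cost admits two equally attractive low-cost exchange profiles, acting as a Boolean switch. Each clause $C_j$ is represented by a clause-agent whose non-concave payoff is a step-like monotone function that rewards receiving data from at least one of the literal-agents of its satisfying literals. A globally connected calibration agent ties the gadgets together so that only exchanges whose switches realize a satisfying assignment can be individually rational and free of small blocking coalitions; if some clause is unsatisfied, that clause-agent together with its literal-neighbors forms a blocking coalition. Consequently, a core-stable exchange exists iff $\phi$ is satisfiable, giving \NP-hardness.

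The main obstacle is the gadget construction in the \PPAD-hardness reduction: the payoffs and costs must simultaneously be monotone, concave (respectively convex), faithful to the intended gate, and local enough that any deviation induces only a constant-size blocking coalition. Monotonicity is especially restrictive — many standard arithmetic gadgets in \PPAD reductions rely on two-sided forces on variables that are not directly available under a monotone-only regime — and circumventing this via the mirrored carrier construction requires carefully verifying that no unintended ``long-range'' coalition can exploit cross-gadget interactions to bypass a gate's computation. Once these local gadgets are in place, both the basic \PPAD-hardness and the constant-coalition-size refinement follow uniformly.
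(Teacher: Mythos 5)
Your proposal diverges from the paper's route (the paper reduces from approximate \emph{fractional hypergraph matching}, already known to be \PPAD-hard, where hyperedges map directly onto blocking coalitions), and the divergence exposes a genuine gap rather than an alternative proof. The entire technical weight of a \textsc{GCircuit}-style reduction rests on the gate gadgets, and you assert their existence without constructing them. Two obstacles make this far from routine. First, a deviating coalition must propose a \emph{new} exchange confined to its members, so every coalition member severs all exchanges with outside agents; since wire agents feed multiple gates, a ``local'' blocking coalition at one gate costs its members their utility from every other gadget they touch, and your sketch does not explain how the gadget compensates for this loss while still making everyone \emph{strictly} better off. Second, enforcing an equality $z = f(x,y)$ requires blocking coalitions for violations in \emph{both} directions, but under monotone payoffs and costs, lowering a share strictly hurts the receiver and raising one strictly hurts the sender, so someone in the coalition must be paid elsewhere; the ``mirrored carrier agents'' you invoke are exactly the construction that would need to be exhibited and verified against unintended cross-gadget coalitions. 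The paper sidesteps all of this because the hypergraph-matching problem already has the right combinatorial structure: a hyperedge $e$ not satisfying the stability inequality immediately yields a constant-size coalition (the edge agent, an intermediate agent, and the three vertex agents) with an explicit improving exchange, and the weights $d+H+1$, $d+H$, $H$ translate the utility comparison back into the matching inequality.

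The \NP-hardness sketch has a more pointed gap: to show that a \textsc{No} instance admits \emph{no} core-stable exchange, you must block \emph{every} exchange, including the all-zero exchange in which nobody shares anything (which is always individually rational). Your argument only explains why exchanges encoding non-satisfying assignments are blocked; it does not produce a blocking coalition for the trivial exchange. The paper handles this by embedding its three-agent non-existence counterexample as a sub-gadget attached to $s_1$: in the \textsc{No} case it proves all shares among normal agents must vanish, so the only remaining activity is inside the counterexample, which has no core-stable configuration by construction. Without some analogue of this ``always-unstable'' gadget that activates exactly when the formula is unsatisfiable, the \textsc{No}-to-\textsc{Yes} direction of your reduction does not go through.
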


On the positive side, our existence proof yields a pivoting algorithm to find a $\varepsilon$-approximate core-stable data exchange. Typically, pivoting algorithms (Simplex~\cite{dantzig1990origins}, Complementary Pivot Algorithms~\cite{codenotti2008experimental}) are well-suited for practical implementation, suggesting that the problem may have effective algorithmic solutions in practice, despite the \PPAD-hardness. In \Cref{sec:algorithm}, we validate the practical efficacy of our algorithm through simulations on a mean estimation task similar to~\cite{BhaskaraGIKMS24}. In particular, we observe that the number of ``pivoting'' operations, which in theory may not be polynomially bounded, grows linearly with the number of agents.

\subsection{Related Work}
Our work draws on concepts, techniques, and problems from several disciplines, including cooperative game theory, game complexity, and data economics. Providing a comprehensive survey of all related work is beyond the scope of this paper. Instead, we focus on (i) federated learning-- a parallel framework to ours which also involves incentives and other economic processes involving data as an asset, and (ii) some related stability problems in cooperative games and their complexity.

\paragraph{Federated Learning.} 
Federated learning (FL) offers a privacy-preserving and effective distributed learning paradigm in which a group of agents with local data samples collaboratively train a shared machine learning model~\cite{mcmahan2017communication}. This approach has seen widespread success in applications like autonomous vehicles~\cite{elbir2020federated} and digital healthcare~\cite{dayan2021federated, xu2021federated}. Data exchange can be viewed as a private learning paradigm, where each agent exchanges its own data for other valuable data to train its own private ML model. In contrast, FL is a public learning paradigm, where all agents collaboratively train the same model using data shared among them. Despite these differences, both FL and data exchange face similar challenges in designing principles and mechanisms to incentivize participation. As a result, FL has incorporated a range of concepts from game theory, including Stackelberg games~\cite{khan2020federated, pandey2019incentivize}, non-cooperative games~\cite{zou2019dynamic, cheng2021dynamic}, auctions~\cite{roy2021distributed}, and budget-balanced reward mechanisms~\cite{MurhekarYCLM23}.

\paragraph{Data Markets.} Data markets are two-sided real-time platforms facilitating pricing and selling data to data-seekers. Theoretical research on data markets has recently gained traction, given the current importance of data economies. There is a long line of work~\cite{admati1986monopolistic, admati1990direct,  bergemann2018design, BabaioffKP12} that investigates revenue-maximizing strategies of a monopolist data seller. In fact, several studies investigate the pricing of data/ information from first principles in different settings~\cite{mehta2021sell, pei2020survey, cai2020sell, bergemann2022economics}. Competitive pricing and allocation rules have also been discussed in the context of digital goods that behave similarly to data~\cite{jain2010equilibrium}. 

\paragraph{Cooperative Games and their Complexity.} Cooperative games focus on analyzing mechanisms and studying stable configurations in environments where agents voluntarily cooperate, in contrast to non-cooperative games where agents act independently and selfishly. Similar to \emph{Nash Equilibrium} in non-cooperative games, \emph{core-stability} is the canonical stability notion in cooperative games. The existence of core-stability has been investigated thoroughly within \emph{transerable utility} (Bondareva-Shapley theorem~\cite{shapley1967balanced, bondareva1963some}) and non-transferable utility cooperative games (Scarf's Theorem~\cite{scarf1967core}). Scarf's theorem has since been used to show the existence of stability in other cooperative settings like stable marriages~\cite{faenza2023scarf}, fractional dominating antichains~\cite{aharoni2003lemma}, and fractional stable hypergraph matching~\cite{aharoni2003lemma}.~\cite{kintali2008scarf} showed that computing the core of an $n$ person game (outcome of the Scarfs theorem) is \PPAD-complete. The hardness of Scarf has then been used to show the hardness of several other cooperative problems (see~\cite{KintaliPRST09} for a detailed outline).  

\section{Technical Overview of Hardness Proofs}
In this section, we give an overview of our main technical result, \PPAD-hardness for finding a core-stable data exchange under concave payoffs and convex costs. We perform a reduction from the Approximate Fractional Hypergraph Matching problem, which is known to be \PPAD-hard~\cite{ishizuka2018complexity,csaji2022complexity}.
To see the connection to our problem, we urge the reader to interpret the vertices of $G$ as agents, the hyperedges $E$ as coalitions, and the matching $f$ as a function that assigns to each edge $e$ (coalition) a value. The stability criterion in fractional hypergraph matching requires that for each hyperedge (coalition) $e$, there is at least one vertex $v$ incident to $e$ (one agent that is part of the coalition $e$) such that the total aggregated value on edges preferred strictly or equal to $e$ by $v$ is large: $\sum_{e' \in E(v): e' \succeq_v e} f(e') \geq 1 - \epsilon$. 

Given an instance of fractional hypergraph matching, we construct a data exchange instance, where we have agent $v_a$ corresponding to a vertex $v$ in $G$. Further, each blocking coalition corresponds to a hyperedge. We then show that if the core-stability condition holds in the data exchange problem — i.e., for every blocking coalition corresponding to a hyperedge $e$, and any exchange within that coalition, there exists an agent $v_a$ who prefers the current exchange — then for the corresponding hyperedge $e$ in $G$, we have $\sum_{e' \in E(v) : e' \succeq_v e} f(e') \geq 1 - \epsilon$ for the agent $v$. 

The data exchange instance is constructed in the following way:
Given an instance $(G = (V,E), \{\succ_v\}_{v\in V})$ of fractional hypergraph matching, with $|E(v)| \leq 3$ and every hyperedge containing three vertices (if one hyperedge has less than three vertices, we can add dummy vertices to it), we construct an edge-agent $e_a$ for every edge $e$, and a vertex-agent $v_a$ for every vertex $v$. 
Every vertex agent $v_a$ is only interested in (has non-zero marginal utility for) the data of the edge agents corresponding to the hyperedges incident to $v$, and every edge agent $e_a$ is only interested in the data hosted by the vertex agents corresponding to the vertices incident to $e$. Moreover, for every $\Delta > 0$, every edge agent $e_a$ is willing to exchange $\Delta$ units of its data with the vertex agents corresponding to the vertices in $e$, for $\Delta$ units of data from them, i.e., the payoff gain from receiving $\Delta$ units of data from the vertex agents compensates the cost of sharing $\Delta$ units of data with them. The payoff of a vertex agent $v_a$ is defined as $\sum_{e \in E(v)} w(v_a,e_a) x_{e_av_a}$ where $w(v_a,e_a)$ is the utility agent $v_a$ gets from unit data of $e_a$. We set the weights ($w(\cdot, \cdot)$) such that $e \succ_v e'$, implies $w(v_a,e_a) > w(v_a,e'_a)$. The cost of any vertex agent $v_a$, $c_{v_a}(x)$ is $\Gamma \cdot (\sum_{e: v \in e}x_{v_a,e_a} - 1)_+$\footnote{We use the notation $a_{+}$ to represent $\max(a,0)$.}, for a sufficiently large $\Gamma$, which ensures that $\sum_{ e \in E(v)} x_{v_a,e_a} \leq 1 + \gamma$ for a sufficiently small $\gamma > 0$, as otherwise agent $v_a$ will have negative utility and data exchange $\bm{x}$ will not be individually rational, and consequently not core-stable. 

\paragraph{Core Stability $\Rightarrow$ Stable Matching.}
Now we show that ensuring core-stability in data exchange implies stability in fractional hypergraph matching. 
To this end, first observe that in fractional hypergraph matching, the stability criterion involves one variable per hyperedge, which appears in the inequality associated with every vertex incident to that hyperedge (See \Cref{def:stable-fractional-matching}). However, for core-stability in data exchange, inequality in \Cref{def:core_stable} involves distinct variables (in particular, variables $x_{e_a,v_a}, x_{v_a,e_a}$ for every vertex $v$ incident to $e$, as the payoff and cost functions of $v_a$ are functions of these variables). 
To overcome the foregoing dilemma, we introduce more agents (call them \emph{intermediate agents}), and carefully design their cost and payoff functions, such that if $x$ is a core-stable data exchange, then $x_{e_a,v_a} = x_{e_a,v'_a} \approx x_{v_a,e_a} = x_{v'_a,e_a}$ for all $v,v'$ incident to $e$. In particular, for each edge $e$, we introduce a set of intermediate vertices $I_e = \{i_e \mid e \in E(v)\}$ such that each $i_e$ acts as an intermediary between $e_a$ and $v_a$. The payoff and cost functions of the intermediate agents are designed in such a way that we ensure the exchanges between intermediary vertices with their corresponding vertex agents and edge agents are almost the same(See Figure~\ref{fig:edge_gadget}). We refer the reader to \Cref{sec:map_exchange_to_hgm} for full details. Still, for the remainder of this subsection, we proceed assuming that in a core-stable data exchange, all pairwise data exchanges between an edge agent and its corresponding vertex agents have the same value. We set $f(e)$ to this value.

\begin{lemma}\label{lem:core_stable_to_stable_matching}
If data exchange $\bm{x}$ is core-stable, then the fractional matching $f$ is $(1-\epsilon)$-stable.
\end{lemma}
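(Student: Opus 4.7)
The plan is to prove the statement directly, one hyperedge at a time: fix an arbitrary $e \in E$ and exhibit a vertex $v \in e$ for which $\sum_{e' \in E(v) : e' \succeq_v e} f(e') \geq 1 - \epsilon$.

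First I would form the coalition $C_e$ consisting of the edge-agent $e_a$, the three vertex-agents $\{v_a : v \in e\}$, and the intermediate agents in $I_e$ mediating between $e_a$ and those vertex-agents. Since $\bm{x}$ is core-stable, no exchange restricted to $C_e$ strictly improves every member simultaneously. To exploit this, I would consider a single candidate deviation $\bm{y}^\star$ in which each $v_a$ exchanges exactly one unit of its data with $e_a$ via the intermediary $i_{e,v} \in I_e$, and every within-coalition exchange is balanced so that the intermediaries pass through precisely the amounts they receive. By the construction of the edge gadget (Figure~\ref{fig:edge_gadget}), such balanced routing makes $e_a$ and every intermediate agent strictly better off than under $\bm{x}$.

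The heart of the argument is then a pigeonhole step combined with a short algebraic calculation. Since $\bm{y}^\star$ strictly improves $e_a$ and every intermediate agent, core-stability forces at least one vertex-agent $v_a$ to satisfy $u_{v_a}(\bm{y}^\star) \leq u_{v_a}(\bm{x})$. Under $\bm{y}^\star$ the agent $v_a$ shares in total exactly one unit of data, so the $\Gamma$-penalty in $c_{v_a}$ is not activated, and effectively receives one unit of $e_a$'s data through the intermediary, giving $u_{v_a}(\bm{y}^\star) = w(v_a, e_a)$. The chain
\[
w(v_a, e_a) \;\leq\; u_{v_a}(\bm{x}) \;\leq\; \sum_{e' \in E(v)} w(v_a, e'_a)\, f(e')
\]
is then divided through by $w(v_a, e_a)$ and split according to whether $e' \succeq_v e$ or $e' \prec_v e$. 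Using that (i) the weights are set so that $w(v_a, e'_a)/w(v_a, e_a)$ is within $O(\epsilon)$ of $1$ for $e' \succeq_v e$ and bounded well below $1$ for $e' \prec_v e$, and (ii) the $\Gamma$-penalty enforces $\sum_{e' \in E(v)} f(e') \leq 1 + \gamma$ for a $\gamma$ tiny compared to $\epsilon$, a short manipulation yields $\sum_{e' \in E(v) : e' \succeq_v e} f(e') \geq 1 - \epsilon$.

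The main obstacle, which I expect to absorb most of the formal proof, is calibrating the weights $w(\cdot,\cdot)$, the penalty constant $\Gamma$, and the intermediate agents' payoffs so that three conditions hold simultaneously: the candidate $\bm{y}^\star$ strictly improves $e_a$ and every intermediate agent (so the pigeonhole is forced to land on some vertex-agent); the slack $\gamma$ from the $\Gamma$-penalty stays small compared to $\epsilon$; and the weight-ratio bounds invoked in the final algebraic step survive the approximations introduced by the only near-symmetric routing of the intermediaries (cf.\ the discussion around Figure~\ref{fig:edge_gadget}). All such error terms must collectively fit inside the single slack parameter $\epsilon$.
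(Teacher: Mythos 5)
Your overall architecture is the same as the paper's: form the coalition consisting of $e_a$, its vertex agents, and the intermediaries; propose the ``everyone exchanges one full unit'' deviation; observe that $e_a$ and the intermediaries strictly gain (the paper shows their utilities increase by $\epsilon\Delta_e$ and $4\epsilon\Delta_e$ with $\Delta_e = 1 - x_e^*$, so this does require the current exchange value to be strictly below $1$; the case where $f(e)$ is already near $1$ is handled trivially since then $\sum_{e'\succeq_v e} f(e') \geq f(e)$ suffices); conclude by pigeonhole that some vertex agent $v_a$ weakly prefers $\bm{x}$; note $u_{v_a}(\bm{y}^\star) = w(v_a,e_a)$ because the $\Gamma$-penalty is not triggered; and finish with an inequality on $\sum_{e'} w(v_a,e'_a) f(e')$ combined with $\sum_{e'} f(e') \leq 1+\gamma$. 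All of that matches the paper.

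The gap is in your final algebraic step, specifically calibration condition (i). You require weights with $w(v_a,e'_a)/w(v_a,e_a) \leq 1 + O(\epsilon)$ whenever $e' \succeq_v e$ and bounded well below $1$ whenever $e' \prec_v e$. These two demands cannot be met simultaneously for all positions of $e$ in $v$'s preference order: applying the first condition with $e$ equal to $v$'s least favorite edge forces \emph{every} pair of weights at $v$ to be within a multiplicative $1+O(\epsilon)$ of each other, and then for $e$ equal to the second favorite the ratio for the less-preferred edge is at least $1-O(\epsilon)$, not bounded below $1$ by a constant. With only an $O(\epsilon)$ multiplicative separation between the ``preferred'' and ``less preferred'' coefficients, the rearrangement yields $\sum_{e'\succeq_v e} f(e') \geq 1 - O(1)$, which is vacuous. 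The paper sidesteps this by working with \emph{additive} rather than multiplicative gaps: it bounds $u_{v_a}(\bm{x}) \leq \hat w \sum_{e'\succeq e} f(e') + \tilde w \bigl(1 - \sum_{e'\succeq e} f(e')\bigr)$ and solves to get $\sum_{e'\succeq e} f(e') \geq \frac{w(e_a,v_a)-\tilde w}{\hat w - \tilde w}$, a quantity governed by the hierarchy of gaps $w_1 - w_2 = 1 \ll w_2 - w_3 = d \ll w_3 = H$, which can be made close to $1$ even though all the weight ratios are themselves close to $1$. Replacing your ``divide through by $w(v_a,e_a)$'' step with this additive normalization repairs the argument and lands you exactly on the paper's proof.
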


\begin{proof}
    We first show that $f$ is a valid fractional matching: Recall that by the design of the cost functions of the vertex agents, we ensure that $\sum_{e \in E(v)}x_{v_a,e_a} \leq 1+ \gamma$ for all $v$. Since we ensure that $x_{v_a,e_a} = x_{e_a,v_a} = f(e)$, we have $\sum_{e \in E(v)} f(e) \leq 1 + \gamma \leq 1 + \epsilon$ for a sufficiently small $\gamma$. Therefore, $f$ is a fractional matching.

    We next show that $f$ also satisfies the stability criterion. Observe that the payoff of any vertex agent $v_a$ is $\sum_{e \in E(v)} w(e_a,v_a) x_{e_a,v_a}$. Since $x_{e_a,v_a} = f(e)$ for all $v$ incident to $e$, the payoff of $v_a$  can be expressed as $\sum_{e \in E(v)} w(v_a,e_a) \cdot f(e)$.
    Consider the coalition $C$ formed by the edge agent $e_a$ and the vertex agents corresponding to the vertices in $e$. Consider the data exchange $\bm{y}$ obtained by setting $y_{e_a,v_a} = y_{v_a,e_a} = 1$ for all $v$ incident to $e$ in $G$. Also, recall that we are working under the assumption that all pairwise exchanges between the vertex agents and edge agents corresponding to the hyperedge $e$ have the same value in the exchange $\bm x$. By the construction of $u_{e_a}(\cdot)$, we have $u_{e_a}(\bm y) > u_{e_a} (\bm x)$\footnote{As $\bm y$ can be obtained by increasing all pairwise exchange by the same additive factor}.  Since $\bm{x}$ is core-stable, at least one of the vertex agents, say $v_a$, must have $u_{v_a}(\bm{x}) \geq u_{v_a}(\bm{y})$.     
    Therefore for any vertex agent $v_a$ in $C$, we have $u_{v_a}(\bm{y}) = w(e_a,v_a) \cdot 1$ (note that the cost is zero as the total data received is equal to $1$). Since there exists an agent $v_a$ in $C$ with $u_{v_a}(\bm{x}) \geq u_{v_a}(\bm{y})$, we have,
    \begin{align*}
    w(e_a, v_a) = u_{v_a}(\bm{y}) \le u_{v_a}(\bm{x})=  \sum_{e' \in E(v)} w(e'_a,v_a) f(e) \le  \hat{w} \sum_{e':e' \succeq e} f(e) + \tilde{w} \sum_{e': e' \prec e}f(e),
    \end{align*}
   where $\tilde{w} = \max_{e' \in E(v): e' \prec e} w(e'_a,v_a)$ (equals zero if there is no $e' \in E(v)$ such that $e' \prec e$), and $\hat{w} = \max_{e' \in E(v)} w(e'_a,v_a)$. Observe that $\hat{w} \geq w(e_a,v_a) > \tilde{w}$. Now, substituting $\sum_{e': e' \prec e}f(e)$ as $1-\sum_{e':e' \succeq e}f(e)$, we have
   \begin{align*}
       \hat{w} \cdot \sum_{e':e' \succeq e} f(e) + \tilde{w} \cdot \left(1-\sum_{e': e' \succeq e}f(e)\right) \geq w(e_a,v_a) \Rightarrow
       \sum_{e' : e' \succeq e} f(e)  \geq \frac{w(e_a,v_a) - \tilde{w}}{\hat{w} - \tilde{w}}.
   \end{align*}
   Given that $|E(v)| \leq 3$ for all $v$, we can set 
   \[ w(e_a,v_a) = \begin{cases} 
                      d+H+1 & \text{$e$ is $v$'s favorite edge}\\
                      d+H & \text{$e$ is $v$'s second favorite edge}\\
                      H & \text{otherwise,} 
                  \end{cases}
   \]
   implying that $ \sum_{e': e' \succeq e} f(e) \geq \frac{w(e_a,v_a) - \tilde{w}}{\hat{w} - \tilde{w}} \geq \frac{d}{d+1} \geq 1 - \epsilon$ for a sufficiently large $d$. This implies that $f$ is a stable matching.  
\end{proof}

\section{Existence of Core-Stable Data Exchanges}
\label{sec:tech_over_existence}

\paragraph{Basic notations and definitions.} A \emph{coalition} is a non-empty subset of the agents and a \emph{deviation} in a data exchange $\bm{x}$ is a pair $(U, \bm{x}^U)$, where $U$ is a subset of agents and $\bm{x}^U$ is a data exchange within $U$, i.e., $x_{i, j}^U > 0$ only if $i, j \in U$.
A deviation \emph{blocks} the data exchange $\bm{x}$ if $u_i(\bm{x}^U) > u_i(\bm{x})$ for all $i\in U$.
An exchange $\bm{x}$ is \emph{core-stable} if there does not exist a deviation that blocks $\bm{x}$.
Formally,
\begin{definition}[Core-Stability and Relaxations]\label{def:core_stable}
A data exchange $\bm{x}$ is \emph{core-stable} if for any $U\subseteq [n]$, there does not exist an exchange $\bm{x}^U$ over $U$ such that $u_i(\bm{x}^U) > u_i(\bm{x})$ for all $i\in U$.
Further, $\bm{x}$ is \emph{core-stable with respect to $s$} if the above constraint holds for all $U\subseteq [n]$ with $\abs{U} \le s$.
$\bm{x}$ is \emph{$\alpha$-core-stable} if for any $U\subseteq [n]$, there does not exist an exchange $\bm{x}^U$ over $U$ such that $u_i(\bm{x}^U) > u_i(\bm{x}) + \alpha$ for all $i\in U$.
\end{definition}

In this section, we provide a complete picture on the existence of core-stable data exchanges. 
In \Cref{sec:non_existence}, we show that a core-stable data exchange may not always exist, even for the case of three agents.
Despite this, we show that a core-stable data exchange always exists when (i) the payoff functions are concave and (ii) the cost functions are convex. 
From here on, we refer to the foregoing two conditions as \emph{sufficient conditions}. 
We also show that core-stable data exchange may not always exist if one of the two sufficient conditions is unsatisfied.

\subsection{Non-existence of Core-Stable Data Exchange} 
\label{sec:non_existence}
In previous work on data exchange economies without costs~\cite{akrami2025theoretical}, it was shown that a core-stable exchange, even with fairness, exists under mild conditions on payoff functions. Surprisingly, we show that when data exchange involves costs, even under natural assumptions of convex payoff functions and concave cost functions, core-stable exchanges may not exist. We show this using an example with $3$ agents. While our non-existence holds for convex payoff and concave cost functions, here we give a simpler example with convex payoff and monotone cost functions that demonstrates the core idea of our example.
\begin{example}
Consider a data exchange instance with three agents, as illustrated in Figure \ref{fig:acc_cost_for_one_unit}, where the nodes represent the agents. 
For any agent $i \in [3]$, the green number (denoted by $p_{j,i}$) on the incoming edge $(j, i)$ denotes the payoff agent $i$  receives if agent $j$ shares her full dataset, while the red number (denoted by $c_{i,j}$) on the outgoing edge $(i, j)$ indicates the cost incurred by agent $i$ for sharing her full dataset with agent $j$.
For example, as shown in Fig.~\ref{fig:acc_cost_for_one_unit}, when agent $1$ shares her entire dataset with agent $2$, agent $2$ receives a payoff of $1/4$ while agent $1$ incurs a cost of $1/4$.
\begin{figure}[ht]
\begin{minipage}{0.39\textwidth}
    \centering
    \includegraphics[width=0.8\textwidth]{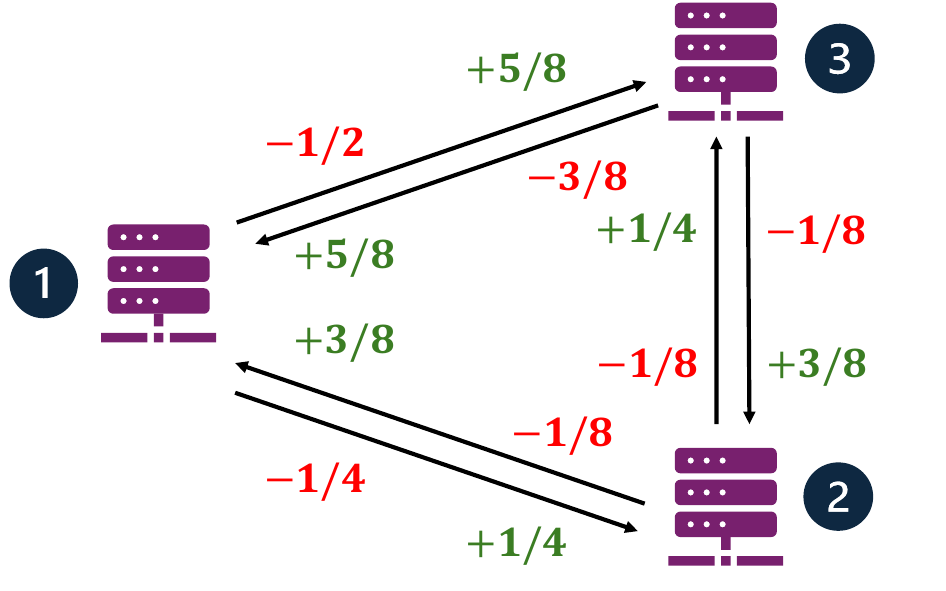}
\end{minipage}
\begin{minipage}{0.6\textwidth}
    \caption{Payoff and cost for one-unit sharing, where a numbered node represents an agent and the arrow $i \rightarrow j$ represents that agent $i$ is sharing her entrie dataset with agent $j$, i.e., $x_{i, j} = 1$. 
    The green number on the arrow means the payoff that agent $j$ receives by the single share, and the red number means the cost that agent $i$ incurs.}
    \label{fig:acc_cost_for_one_unit}
\end{minipage}
\end{figure}

For any agent $i \in [3]$, define her payoff and cost functions as follows.
\begin{align}
&\mr{Payoff:}&p_i(x_{\alpha, i}, x_{\beta, i}) & = \frac{p_{\alpha, i}}\epsilon\cdot \left(x_{\alpha, i} - (1-\epsilon)\right)_+ +  \frac{p_{\beta, i}}\epsilon\cdot \left(x_{\beta, i} - (1-\epsilon)\right)_+ , \\
&\mr{Cost:} &c_i(x_{i, \alpha}, x_{i, \beta}) & = c_{i, \alpha}\cdot x_{i, \alpha} + 
c_{i, \beta}\cdot x_{i, \beta} + \frac1\epsilon\cdot x_{i,\alpha}\cdot x_{i, \beta}, \label{eqn:cost_function}
\end{align}
where $\{\alpha, \beta\} = [3] \setminus \{i\}$ and $(\cdot)_+ = \max(\cdot, 0)$. 
The idea behind the payoff is agent $i$ benefits from agent $j$'s share only when $x_{j, i} > 1- \epsilon$.
Meanwhile, the cost function discourages an agent from sharing a fraction in the range $(0, 1-\epsilon]$, as it does not yield a positive payoff for the other agent and only incurs a cost.
In addition, no agent can share data with two other agents without incurring a high cost (which negates the benefit of getting any amount of data). 

The values $p_{\alpha,i}, p_{\beta,i}, c_{i, \alpha}, c_{i, \beta}$ are curated carefully so that 
(1) If nobody shares anything, a couple of agents deviate and start sharing data with one another. 
(2) If only two agents exchange data, there is always one agent who prefers sharing with the third non-included agent, therefore, she deviates from the non-included agent. 
(3) In a cyclic data exchange among the three agents, there always exists an agent for whom the cost of sharing data is more than the gain of receiving data; therefore, she prefers to deviate and have no data exchange. 
Therefore, for any data exchange, there exists a subset of agents that gain a strictly higher utility by deviation, implying a core stable exchange does not exist. 
We refer the reader to \Cref{app:existence_of_core_stable} for further details, and a complete landscape of non-existence scenarios. 
\end{example}

\subsection{Existence of Core-stable Data Exchange under Sufficient Conditions}
\label{sec:existence_of_core_pos}
Despite the non-existence in the general setting, we next show that core-stable data exchanges exist for a broad class of interesting instances that exhibit concave payoff functions and convex cost functions. 
To prove existence, we formulate the data exchange problem as an $n$-person game (proposed by~\cite{scarf1967core}) and then show that the game is balanced, which implies that a core exists.

\paragraph{$N$-Person game, Balanced game.}
The $n$-person game was proposed by~Scarf~\cite{scarf1967core}.
An $n$-person game with non-transferable utilities consists of $n$ agents (denoted by $N$) and a function $V(\cdot)$.
Let $\mathbb{R}_{\ge 0}^S$ be a subspace of $\mathbb{R}_{\ge 0}^n$ where the entries corresponding to coordinates indexed by $S$ can take values in $\mathbb{R}_{\ge 0}$, while the entries for coordinates not in $S$ are set to $0$.
For every subset $S$ of $N$, function $V(\cdot)$ returns a set of outcomes $V(S)$ consisting of a set of utility vectors, which belong to $\mathbb{R}_{\ge 0}^S$ and represent all the achievable utilities of agents in $S$ when they collaborate exclusively with other agents in $S$.
Next, we introduce \emph{balanced game}.

\begin{definition}[Balanced Game]
A collection $T$ of subsets of agents is said to be \emph{balanced}, if there exists an assignment $\{\delta_S\}_{S\in T}$ such that, for every agent $i\in N$, we have $\sum_{S:i\in S} \delta_S = 1$.
We say a utility vector $\bm{u}$ is attainable by $S$ if $\bm{u} \in V(S)$.
If $\bm{u}$ is a utility vector for $n$ agents, let $\bm{u}_S$ denote the projection of $\bm{u}$ onto the agents in $S$.
A game \emph{balanced} if and only if for any balanced collection $T$ and any $\bm{u}$, if $\bm{u}_S$ is attainable by all $S$ in $T$, then $\bm{u}$ is attainable by $N$.
\end{definition}

The core of an $n$-person game is defined almost the same as core-stable exchange. 
A \emph{core} is a utility vector that is attainable by the entire agent set $N$ and cannot be blocked by any coalition.
Given a utility vector $\bm{u}$, if a coalition of agents can get a higher utility for all of its members, then the vector $\bm{u}$ is said to be \emph{blocked} by that coalition.
As shown in \cite{scarf1967core}, the core of any $n$-person \emph{balanced} game exists if $V(\cdot)$ satisfies the following mild assumptions.
\begin{lemma}[\cite{scarf1967core}]
\label{lem:balanced_core}
The core of any balanced $n$-person game exists if the function $V(\cdot)$ is assumed that:
\begin{itemize}
    \item For each $S \subseteq N$, $V(S)$ is a closed set;
    \item If $\bm{u} \in V(S)$ and $\bm{y}\in \mathbb{R}_{\ge 0}^S$ with $\bm{y} \le \bm{u}$\footnote{This is coordinate-wise comparison of the vectors. Given vectors $\bm{a}, \bm{b} \in \mathbb{R}^d$,  $\bm{a} \leq \bm{b}$ if and only if $a_i \leq b_i$ for all $i \in [d]$.}, then $\bm{y} \in V(S)$;
    \item The set of vectors in $V(S)$ in which each player in $S$ receives no less than the maximum that she can obtain by herself is nonempty and bounded.
\end{itemize}
\end{lemma}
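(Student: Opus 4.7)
The plan is to prove the theorem via Scarf's combinatorial lemma, a Sperner/KKM-flavored result about labeled matrices. At a high level, Scarf's lemma will extract both a balanced collection of coalitions and a common utility vector that each of those coalitions can realize; the balancedness hypothesis on the game then promotes this utility vector to a member of $V(N)$, while the ``dominating'' conclusion of Scarf's lemma simultaneously rules out blocking coalitions.

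First, I would localize the problem to a compact region. By the third assumption, for each coalition $S$ the set $\{\bm{u} \in V(S) : u_i \geq m_i \text{ for all } i \in S\}$ is bounded, where $m_i := \sup\{u_i : \bm{u} \in V(\{i\})\}$ is the maximum utility agent $i$ can guarantee alone; hence every utility vector of interest lies in some cube $[0, M]^n$. Next, I construct the Scarf matrix $C$: rows are indexed by the $n$ agents, and the columns come in two flavors. First, one ``axis'' column per agent, playing the role of the coordinate face of the simplex. Second, a ``coalition'' column for each pair $(S, \bm{u})$ with $\bm{u}$ on a fine discretization of the Pareto frontier of $V(S)$; the entry in row $j \in S$ records $u_j$, while the entry in row $j \notin S$ is set to a sufficiently small sentinel value. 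Agents' preferences are encoded as the natural row-wise ordering: row $i$ ranks columns by how large the corresponding utility to agent $i$ is.

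Applying Scarf's lemma to $C$ returns a set $J$ of $n$ columns that is both primal feasible (there exist non-negative weights $\{\delta_c\}_{c \in J}$ reproducing a fixed target vector) and dominating (every column outside $J$ is weakly beaten, in at least one row, by the $J$-column selected for that row). Discarding the axis columns from $J$ leaves a sub-collection $T$ of coalitions whose weights satisfy $\sum_{S \in T,\, i \in S} \delta_S = 1$ for each agent $i$, so $T$ is balanced in the exact sense of the definition above. The coalition columns in $J$ together supply a utility vector $\bm{u}^*$ with $\bm{u}^*_S \in V(S)$ for every $S \in T$; the game's balancedness then lifts $\bm{u}^*$ to an element of $V(N)$, and the dominating property translates directly into the statement that no coalition $S$ admits $\bm{u}' \in V(S)$ with $u'_i > u^*_i$ for every $i \in S$. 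Hence $\bm{u}^*$ lies in the core.

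The main obstacle will be reconciling the continuous structure of the $V(S)$ with Scarf's combinatorial lemma, which operates on a finite matrix. I would handle this by discretizing the Pareto frontier of each $V(S)$ with mesh $\delta_k \to 0$, applying Scarf's lemma on the resulting finite matrix to obtain a sequence of approximate core vectors $\bm{u}^*_k$, and then extracting a convergent subsequence via compactness of $[0, M]^n$. The three regularity hypotheses are precisely what is needed to pass to the limit: closedness together with comprehensiveness (the first two items) ensure the limit still lies in each relevant $V(S)$ and hence in $V(N)$ by balancedness, while boundedness from the third item prevents the approximating sequence from escaping to infinity.
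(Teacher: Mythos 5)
The paper does not prove this lemma at all --- it is imported verbatim from Scarf (1967), so there is no in-paper proof to compare against. Your route (Scarf's combinatorial lemma on a discretized utility matrix, then a compactness limit) is the standard one, and it is the same machinery the paper itself deploys in its appendix for the concrete data-exchange game, so the architecture of your sketch is sound: feasibility of the returned basis yields a balanced collection, balancedness of the game lifts the common utility vector into $V(N)$, the dominating property kills blocking coalitions, and closedness plus comprehensiveness let you pass to the limit as the mesh shrinks.

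There is, however, one step that would fail as written. For a coalition column indexed by $(S,\bm{u})$ you set the entry in row $j\notin S$ to ``a sufficiently small sentinel value.'' It must be the opposite: those entries have to be \emph{large} --- larger than any utility agent $j$ can attain --- subject only to Scarf's normalization that they stay below the off-diagonal entries of the axis columns. The dominating conclusion of Scarf's lemma only says that for each column $k$ \emph{some} row $i$ satisfies $c_{i,k}\le \min_{j\in O}c_{i,j}$; with small sentinels this row can be an agent outside $S$, the condition is vacuously met, and you can no longer conclude that some member of $S$ weakly prefers $\bm{u}^*$ to the deviation, i.e.\ the ``no blocking coalition'' implication collapses. (The paper makes exactly this point when it builds its utility matrix: the entries for $i\notin S$ are set to a very large $M$ precisely so that an agent outside the coalition cannot serve as the witness.) A second, more minor slip: you cannot simply discard the axis columns from the Scarf basis $J$, since if they carry positive weight the identity $\sum_{S\in T,\, i\in S}\delta_S=1$ fails for the corresponding agents; the axis column for agent $i$ should instead be read as the singleton coalition $\{i\}$ (with $V(\{i\})$ the utilities $i$ can secure alone) and retained in the balanced collection $T$. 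With these two corrections the argument goes through.
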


Next, we formulate the data exchange problem into the framework of the $n$-person game.
Define $V(\cdot)$ as follows: for every subset $S\subseteq N$, $V(S)$ is the set of nonnegative utility vectors when only agents in $S$ exchange data.
Formally,
$$
V(S) = \left\{(u_i(\bm{x}))_{i\in N}: \bm{x} \in [0,1]^{S \times S} \text{ and } u_i(\bm{x}) \ge 0, \forall i\in N\right\},
$$
where $[0,1]^{S\times S}$ denotes the set of all data exchanges among $S$, i.e., only shares between agents in $S$ can take values in $[0,1]$ while other fractions are forced to be zero.
Next, we show the game is balanced when the payoff functions $p_i(\cdot)$ are concave and cost functions $c_i(\cdot)$ are convex, which implies that a core exists.

\begin{restatable}{theorem}{CoreExists}
\label{thm:core_existence}
A core-stable data exchange always exists if the payoff functions $\{p_i\}_{i\in N}$ are concave, the cost functions $\{c_i\}_{i\in N}$ are convex.
\end{restatable}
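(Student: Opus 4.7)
The plan is to invoke Scarf's theorem through Lemma~\ref{lem:balanced_core} on the $n$-person game $(N, V(\cdot))$ already set up above the theorem. Concretely, I would (i) verify the three regularity conditions on $V(\cdot)$, (ii) prove the game is balanced by exploiting concavity of each $p_i$ and convexity of each $c_i$, and then (iii) lift the resulting core utility vector back to a core-stable data exchange.

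For (i), closedness of each $V(S)$ and boundedness of its individually-rational subset are immediate since $[0,1]^{S \times S}$ is compact and $u_i = p_i - c_i$ is continuous. The only subtle point is downward closure: the set defined in the excerpt is the exact image of $u(\cdot)$ and need not be comprehensive, so I would work with its comprehensive hull $\{\bm{y} \in \mathbb{R}^S_{\geq 0} : \bm{y} \leq \bm{u} \text{ for some } \bm{u} \in V(S)\}$, which preserves the core and the blocking relation. Step (ii) is the heart of the proof. Fix a balanced collection $T$ with weights $\{\delta_S\}_{S \in T}$ and a vector $\bm{u}$ such that for each $S \in T$ there is an exchange $\bm{x}^S$ supported on $S \times S$ with $u_i(\bm{x}^S) \geq u_i$ for all $i \in S$. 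Consider the candidate $\bm{x} := \sum_{S \in T} \delta_S \bm{x}^S$, extending each $\bm{x}^S$ by zeros outside $S \times S$. Feasibility follows from $x_{i,j} = \sum_{S \ni i, j} \delta_S x^S_{i,j} \leq \sum_{S \ni i} \delta_S = 1$. Since $\sum_{S \ni i} \delta_S = 1$, both $x_{-i}$ and $x_i$ are convex combinations (over $S \ni i$) of $\{x^S_{-i}\}$ and $\{x^S_i\}$, so Jensen's inequality applied to the concave $p_i$ and the convex $c_i$ yields
\begin{align*}
u_i(\bm{x}) = p_i(x_{-i}) - c_i(x_i) &\geq \sum_{S \ni i} \delta_S p_i(x^S_{-i}) - \sum_{S \ni i} \delta_S c_i(x^S_i) \\
&= \sum_{S \ni i} \delta_S u_i(\bm{x}^S) \;\geq\; \sum_{S \ni i} \delta_S u_i = u_i,
\end{align*}
establishing that $\bm{u} \in V(N)$ and hence that the game is balanced.

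For (iii), Lemma~\ref{lem:balanced_core} produces a core utility vector $\bm{u}^* \in V(N)$. I would pick any exchange $\bm{x}^*$ with $u_i(\bm{x}^*) \geq u^*_i$ for all $i$ (which exists by the comprehensive-hull definition). If some coalition $U$ had a deviation $\bm{x}^U$ with $u_i(\bm{x}^U) > u_i(\bm{x}^*)$ for all $i \in U$, then also $u_i(\bm{x}^U) > u^*_i$, so $(u_i(\bm{x}^U))_{i \in U} \in V(U)$ would block $\bm{u}^*$, contradicting its core property; hence $\bm{x}^*$ is core-stable. The main obstacle is step (ii): identifying $\bm{x} = \sum_S \delta_S \bm{x}^S$ as the right certificate and noting that the concave/convex split of $u_i$ is precisely what lets Jensen's inequality fall in the favourable direction on both the payoff and the cost terms at once. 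Relaxing either assumption flips one of these two Jensen inequalities, which is consistent with the non-existence examples in \Cref{sec:non_existence}.
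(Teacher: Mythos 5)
Your proposal is correct and follows essentially the same route as the paper: formulate the $n$-person game, verify the regularity conditions of \Cref{lem:balanced_core}, and prove balancedness by taking $\bm{x}=\sum_{S}\delta_S\bm{x}^S$ and applying Jensen's inequality to the concave payoff and convex cost separately --- this convex-combination certificate is exactly the paper's key step. The one place you genuinely diverge is the downward-closure condition: the paper proves that the image set $V(S)$ itself is comprehensive by an explicit share-reduction procedure (Algorithm~\ref{alg:exchange}), which requires a separate termination argument (\Cref{prop:termination_of_algo_exchange}); you instead pass to the comprehensive hull of $V(S)$ and observe that this preserves the core and the blocking relation, recovering an actual exchange in step (iii). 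Your variant is cleaner and avoids the termination analysis entirely, at the small cost of having to note that the hull remains closed (which it does, since the image is compact) and that a core vector of the hull game still certifies a core-stable exchange --- a point you handle correctly.
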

\begin{proof}
We demonstrate that the function $V(\cdot)$ meets the pre-conditions of \Cref{lem:balanced_core}.
First, we claim that $V(S)$ is a closed set for any $S\subseteq N$.
For any sequence $\{\bm{u}^k \in V(S)\}_{k=1}^{\infty}$ converging to some utility vector $\bm{u}^*$, we show $\bm{u}^*\in V(S)$.
By the definition of $V(S)$, $\bm{u}^k$ is attainable by some exchange $\bm{x}^k$.
As the space of data exchanges $[0,1]^{S\times S}$ is compact, there exists a subsequence $\{\bm{x}^{k(\ell)}\}_{\ell=1}^\infty$ converging to some exchange $\bm{x}^* \in V(S)$.
By the continuity of the utility functions, we have $u_i(\bm{x}^*) = \lim_{\ell \rightarrow \infty} u_i(\bm{x}^{k(\ell)}) = u_i^*$ for any $i\in S$.
Thus, $\bm{u}^*$ is attainable by agent set $S$.

Next, we show that if $\bm{u} \in V(S)$ and $\bm{y} \in \mathbb{R}_{\ge 0}^S$ with $\bm{y}\le \bm{u}$, then $\bm{y}\in V(S)$.
Suppose $\bm{u}$ is attainable by some exchange $\bm{x}$.
We now run the procedure described in Algorithm~\ref{alg:exchange}, which iteratively adjusts the current exchange until the utility vector is quite close to $\bm{y}$.
\begin{algorithm}[t]
\textbf{Input}: $\epsilon > 0$, a utility vector $\bm{y}$ and an exchange $\bm{x}$ on a specified agent set $S\subseteq N$\;
Let $k\leftarrow 0$ and $\bm{x}^0 \leftarrow \bm{x}$\;
\While{there exists an agent $i\in S$ such that $u_i(\bm{x}^k) \ge  y_i + \epsilon$}{
    Let $k\leftarrow k+1$ and $\bm{x}^k$ be the copy of the last exchange\;
    Pick an arbitrary agent $j\in S$ with $j\neq i$ and positive shares with agent $i$, i.e., $x^k_{j, i} > 0$\;
    Decrease the share $x_{j, i}^k$ until $x_{j,i}^k=0$ or $u_i(\bm{x}^k) = y_i$\;
}
\Return{$\bm{x}^k$}\;
\caption{Update the data exchange $\bm{x}$ iteratively}
\label{alg:exchange}
\end{algorithm}
We then show that $\bm{y}$ is also included in $V(S)$ as follows.
First, we show that the procedure always terminates.
\begin{claim}
\label{prop:termination_of_algo_exchange}
For any $\epsilon >0$, Algorithm~\ref{alg:exchange} will terminate in finite steps.
\end{claim}
\begin{proof}
Let $\eta_{j, i}(\bm{x})$ be the infinum of $t\in [0,1]$ such that $u_i(\bm{x} + t\cdot \bm{e}_{j,i}) \ge u_i(\bm{x}) + \epsilon$.
If such $t$ does not exists ($u_i(\bm{x} + t\cdot \eji) < u_i(\bm{x}) + \epsilon$ for any $t\in [0,1]$), we set $\eta_{j, i}(\bm{x}) = 1$.
For every agent $i\in S$, we let $\eta_i(\bm{x}) = \min_{j\in S, j\neq i} \eta_{j, i}(\bm{x})$.
We first prove the following property.
$$
\inf_{\bm{x}} \eta_{j, i}(\bm{x}) > 0, \quad \text{for any $i\neq j \in S$.}
$$
Suppose for contradiction that $\inf_{\bm{x}} \eta_{j, i}(\bm{x}) = 0$. Then, there exists a sequence of exchanges and values of $t$, $\{(\bm{x}^k, t_k)\}_{k=1}^\infty$, such that $u_i(\bm{x}^k + t_k\cdot \bm{e}_{j, i}) > u_i(\bm{x}^k) + \epsilon$ and $\lim_{k\rightarrow \infty}t_k = 0$.
As the whole space of exchanges is compact, there exists a subsequence $\{\bm{x}^{k(\ell)}\}$ converging to some exchange $\bm{x}^*$ and $t_{k(\ell)}\rightarrow 0$.
Also, $u_i(\bm{x}^{k(\ell)} + t_{k(\ell)}\cdot \bm{e}_{j, i}) > u_i(\bm{x}^{k(\ell)}) + \epsilon$ for any $\ell$.
By taking the limit on both sides, we get the contradiction as $\epsilon >0$ and $u_i$ is continuous.
Therefore, we can further conclude that,
$$
\inf_{\bm{x}} \eta_i(\bm{x}) = \inf_{\bm{x}} \min_{j\in S, j\neq i} \eta_{j, i}(\bm{x}) = \min_{j\in S, j\neq i}\inf_{\bm{x}} \eta_{j, i}(\bm{x}) > 0,\ \text{for any } i\Rightarrow
\min_{i\in S} \left(\inf_{\bm{x}} \eta_i(\bm{x}) \right)> 0 \,.
$$
Denote the minimum by $\Delta$.
We can observe that all the shares $x_{j, i}$ are iteratively decreased at each round.
If a step terminates until $u_i(\bm{x}^k) = y_i$, $x_{j,i}$ will be decreased by a value of at least $\Delta$ by the definition of $\eta_i(\bm{x})$.
For that reason, we know that, at each round, one of the following two must happen: (1) one positive $x_{j, i}$ is turned to zero; (2) one $x_{j, i}$ decreases by at least $\Delta$.
As the sum of $x_{j, i}$ is bounded by $\abs{S}^2$, the number of iterations is at most $\abs{S}^2/\Delta + 1$, which is finite.
\end{proof}
According to \Cref{prop:termination_of_algo_exchange}, we can find a data exchange $\bm{x}(\epsilon)$ over $S$ for every $\epsilon > 0$, such that its utility vector $\bm{u} = (u_i(\bm{x}))_{i\in S}$ satisfies $y_i \le u_i(x) \le y_i + \epsilon$. 
As the set $V(S)$ is closed, by taking $\epsilon$ to $0^+$, we know that $\bm{y}\in V(S)$.

Thirdly, when no one in $S$ shares anything with another one, all the utilities will be zero.
So $\bm{0}$ is attainable by $S$.
Besides, as all the utility functions are continuous and the space of all exchanges is closed, the utility vectors of $V(S)$ are bounded.
Thus, we can conclude that the function $V$ satisfies the conditions in \Cref{lem:balanced_core}.

Finally, we show that the data exchange game is balanced.
For any balanced collection $T$ and any $\bm{u}$, if $\bm{u}_S$ is attainable by all $S$ in $T$, we need to show that $\bm{u}$ is also attainable by $N$.
Let $x_{i,j}^S$ be the share of agent $i$ to agent $j$ in the data exchange that attains utility vector $\bm{u}_S$.
We then construct a new exchange $\bm{x}$ over $N$ by 
$$
x_{i, j} = \sum_{S\in T: i\in S} \delta_S \cdot x_{i, j}^S\,,
$$
As $x_{i, j}^S \le 1$ and $\sum_{S\in T: i\in S}\delta_S =1$, $x_{i, j} \le 1$, which satisfies the feasibility constraints.
Besides, since $p_i$ is concave and $c_i$ is convex, the utility function $u_i(\bm{x})$ is concave, which means that
\begin{align*}
u_i(\bm{x})& = u_i\left(\left(\sum_{S\in T: i\in S} \delta_S \cdot x_{s,t}^S\right)_{s, t\in [n]}\right)\\
& \ge \sum_{S\in T: i\in S} \delta_S\cdot u_i(\bm{x}^{S}_{ -i, i}) \tag{By concavity of $p_i(\cdot)$ and convexity of $c_i(\cdot)$} \\
& =  \sum_{S\in T: i\in S} \delta_S \cdot u_i = u_i\,. \tag{As $\bm{u}_S$ is attainable by $S$}
\end{align*}
Hence, $u_i(\bm{x}) \ge u_i$ for any $i\in N$.
By the second property of $V(\cdot)$, $\bm{u}$ is also attainable by $N$.
Thus, the data exchange game is balanced and has a core $\bm{u}$.
By the definition of the core, we know $\bm{u}$ is attainable by some exchange $\bm{x}$ of the $n$ agents, and $\bm{u}$ is not blocked by any utility vector of $V(S)$ for any $S\subseteq [n]$.
For any utility vector $\bm{u}'$ that is attainable by $S$ but not included in $V(S)$, we know there must exist an agent in $S$ receiving negative utility, which means that $S$ cannot form a deviating coalition either. 
Therefore, $\bm{x}$ is a core-stable data exchange.
\end{proof}

We remark that the above existence also extends to non-monotone payoffs and costs as long as they are concave and convex, respectively.

\section{The Complexity of Finding Core-Stable Data Exchanges} 
In this section, we explore the computational complexity of identifying core-stable data exchanges. 
We first show that for instances that do not meet our sufficient conditions, determining the existence of a core-stable data exchange is \NP-hard. 
Next, we shift our focus to finding core-stable data exchanges for instances that satisfy sufficient conditions and establish that this problem is \PPAD-hard. 

\subsection{\NP-Hardness of Existence of Core-Stable Data Exchanges}
In the last section, we showed that a core-stable data exchange may not always exist.
This naturally raises the question: \emph{Can we efficiently determine whether a given instance admits a core-stable data exchange? }
In this subsection, we establish that deciding the existence of a core-stable data exchange in an arbitrary instance is \NP-hard.

\begin{restatable}{theorem}{LemNPHardnessCore}
\label{lem:np_hard_core}
It is \NP-hard to determine the existence of a core-stable data exchange.
\end{restatable}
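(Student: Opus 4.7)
The plan is to reduce from 3-SAT. Given a formula $\varphi$ with $n$ variables and $m$ clauses, I would construct a data exchange instance $I(\varphi)$ with $O(n+m)$ agents such that $I(\varphi)$ admits a core-stable exchange if and only if $\varphi$ is satisfiable. The construction uses two building blocks: \emph{variable gadgets} that encode truth assignments, and \emph{clause gadgets} that destabilize precisely when their clause is violated.

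For each variable $x_i$, I would introduce a small cluster of agents whose payoffs and costs admit exactly two qualitatively distinct core-stable local exchanges---one encoding $x_i = \mathrm{true}$ and the other $x_i = \mathrm{false}$. A designated ``interface'' agent of the gadget participates in downstream exchanges in a way that depends on which of these two states is realized, thereby broadcasting the assignment. For each clause $C_j = (\ell_{j,1} \vee \ell_{j,2} \vee \ell_{j,3})$, I would instantiate a copy of the three-agent cyclic non-existence gadget from \Cref{sec:non_existence}, with one triangle vertex per literal. In isolation this triangle admits no core-stable exchange. Each triangle vertex is then equipped with an auxiliary ``escape exchange'' to the interface agent of the corresponding variable gadget, whose parameters are tuned so that when the matching literal evaluates to true the extra payoff available to that vertex breaks the cyclic blocking argument and admits a stable local exchange; when all three literals are false, the original cyclic non-existence obstruction still applies.

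Soundness and completeness then follow by gluing. If $\varphi$ is satisfiable, set every variable gadget to the truth value prescribed by a satisfying assignment and, for each clause, route stability through any true literal; the resulting global profile is core-stable. Conversely, any core-stable exchange must simultaneously stabilize every clause gadget, which forces at least one true literal per clause and therefore yields a satisfying assignment of $\varphi$. Polynomiality of the reduction is immediate from the sizes of the gadgets.

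The main obstacle is controlling \emph{cross-gadget} deviations: a coalition assembled from agents of several variable and clause gadgets could, in principle, block the intended exchange even when $\varphi$ is satisfied, or conversely rescue an unsatisfiable instance by re-routing stability through unintended channels. I would handle this by making each agent's payoff strictly \emph{local}---zero marginal gain from any data outside a small pre-specified partner set---and by scaling costs so that every such mixed-gadget deviation profitably decomposes into intra-gadget deviations that are already ruled out by the local analyses. With this localization, the global core-stability argument reduces to the per-gadget arguments sketched above, completing the \NP-hardness proof.
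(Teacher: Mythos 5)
Your high-level strategy coincides with the paper's: both reduce from \TSAT and both use the three-agent cyclic non-existence gadget of \Cref{sec:non_existence} as the source of instability, neutralized exactly when the formula is satisfiable. The architectures differ, though. You instantiate one copy of the triangle \emph{per clause}, with per-literal escape routes into variable gadgets that are supposed to admit exactly two core-stable states. The paper uses a \emph{single} copy of the triangle, attached to one interface agent $s_1$, and encodes the formula in a separate directed-graph structure whose payoffs are products of thresholded terms $\bigl(x_{u,v}-(1-\epsilon)\bigr)_+/\epsilon$ and whose costs blow up past a sharing threshold. This makes the SAT part all-or-nothing: in any core-stable exchange either a full satisfying structure lights up (giving $s_1$ an outside option that defuses the triangle) or no normal agent exchanges anything (so only the bare triangle remains and no core-stable exchange exists). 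That single-gadget design is what lets the paper's NO-direction proof go through with one clean claim (\Cref{claim:np_empty_normal}).

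The genuine gaps in your version are exactly where your architecture demands more than the paper's. First, you assert but do not construct a variable gadget with \emph{exactly} two core-stable local states in a continuous exchange space; without threshold-type payoffs this is not automatic, and you must also exclude intermediate states in which both polarities' escape routes are partially available, since such a state could rescue an unsatisfiable instance. Second, the ``escape exchange'' is itself a candidate blocking deviation: a clause-triangle vertex and the interface agent form a coalition of size two, so you cannot merely declare the escape ``unavailable'' when the literal is false --- you must make the interface agent strictly unwilling to join (e.g., because its cost threshold is already saturated in the false state), and verify this against every configuration of the variable gadget, not just the two intended ones. Third, when two or three literals of a clause are true you get multiple escaped vertices, and you need the residual pair or singleton configurations of the triangle to be stable, which depends on the specific asymmetric payoff values of the gadget in \Cref{tab:counter_example}; this is checkable but is real case analysis you have not done. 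None of these is an in-principle obstruction --- the paper's thresholded-payoff toolkit would supply the missing constructions --- but as written the soundness direction (no core-stable exchange when $\varphi$ is unsatisfiable) is not established.
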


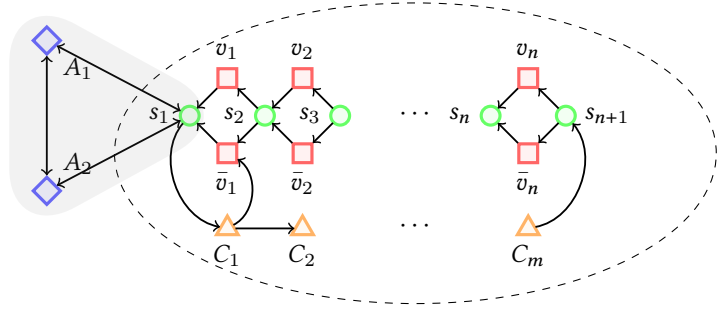
\begin{figure}[t]
    \centering
    \begin{tikzpicture}[
        roundnode/.style={circle, draw=green!60, fill=green!5, very thick, minimum size=5mm},
        squarednode/.style={rectangle, draw=red!60, fill=red!5, very thick, minimum size=5mm},
        trianglenode/.style={regular polygon, regular polygon sides=3, draw=orange!60, fill=orange!5, very thick, minimum size=5mm}, 
        diamondnode/.style={diamond, draw=blue!60, fill=blue!5, very thick, minimum size=4mm}, 
        dotsnode/.style={minimum size=10mm},
        ]
    \tikzset{
        state/.style={circle, draw, minimum size=3mm},
        dashedline/.style={dashed}
    }
    \node[diamondnode, scale=0.8] (S1) at (-0.9, 1)  [label=below right: $A_1$] {};
    \node[diamondnode, scale=0.8] (S2) at (-0.9, -1) [label=above right: $A_2$] {};

    \node[roundnode, scale=0.5] (s1) at (1, 0)  [label=left: $s_1$] {};
    \node[roundnode, scale=0.5] (s2) at (2, 0) [label=left: $s_2$]  {};
    \node[roundnode, scale=0.5] (s3) at (3, 0) [label=left:$s_3$] {};
    \node[roundnode, scale=0.5] (sn) at (5, 0) [label=left:$s_{n}$] {};
    \node[roundnode, scale=0.5] (sn1) at (6, 0) [label=right:$s_{n+1}$] {};

    \node[squarednode, scale=0.5] (v1) at (1.5, 0.5) [label=above:$v_1$] {};
    \node[squarednode, scale=0.5] (v1n) at (1.5, -0.5) [label=below:$\bar{v}_1$] {};
    \node[squarednode, scale=0.5] (v2) at (2.5, 0.5) [label=above:$v_2$] {};
    \node[squarednode, scale=0.5] (v2n) at (2.5, -0.5) [label=below:$\bar{v}_2$] {};
    \node[squarednode, scale=0.5] (vn) at (5.5, 0.5) [label=above:$v_n$] {};
    \node[squarednode, scale=0.5] (vnb) at (5.5, -0.5) [label=below:$\bar{v}_n$] {};

    \node[trianglenode, scale=0.5] (C1) at (1.5, -1.5)  [label=below: $C_1$] {};
    \node[trianglenode, scale=0.5] (C2) at (2.5, -1.5)  [label=below: $C_2$] {};
    \node[trianglenode, scale=0.5] (Cm) at (5.5, -1.5)  [label=below: $C_m$] {};

    \node (dots1) at (4.5, 0) [label=left:$\cdots$] {};
    \node (dots1) at (4.5, -1.5) [label=left:$\cdots$] {};

    \draw[->, line width=0.7pt] (v1) -- (s1);
    \draw[->, line width=0.7pt] (v1n) -> (s1);
    \draw[->, line width=0.7pt] (v2) -- (s2);
    \draw[->, line width=0.7pt] (v2n) -> (s2);
    \draw[->, line width=0.7pt] (s2) -> (v1);
    \draw[->, line width=0.7pt] (s2) -> (v1n);
    \draw[->, line width=0.7pt] (s3) -> (v2);
    \draw[->, line width=0.7pt] (s3) -> (v2n);
    \draw[->, line width=0.7pt] (sn1) -> (vn);
    \draw[->, line width=0.7pt] (sn1) -> (vnb);
    \draw[->, line width=0.7pt] (vn) -> (sn);
    \draw[->, line width=0.7pt] (vnb) -> (sn);
    \draw[->, line width=0.7pt] (C1) -> (C2);
    \draw[->, line width=0.7pt, bend right=60] (Cm) to (sn1);
    \draw[->, line width=0.7pt, bend right=60] (C1) to (v1n);

    \draw[->, line width=0.7pt, bend right=60] (s1) to (C1);

    \draw[dashed] (4, -0.5) ellipse (4cm and 2cm);

    \draw[<->, line width=0.7pt] (S1) -- (s1);
    \draw[<->, line width=0.7pt] (S2) -- (s1);
    \draw[<->, line width=0.7pt] (S1) -- (S2);

    \draw[draw=none, fill=gray, rounded corners=1cm, opacity=0.1] 
        (-1.4, 1.7) -- (-1.4, -1.7) -- (1.7, 0) -- cycle;
    \end{tikzpicture}
    \caption{Illustration of reduction of \Cref{lem:np_hard_core}, where each node corresponds to an agent.}
\label{fig:reduction_general_existence_sketch}
\end{figure}

\paragraph{Overview of the Reduction.} We reduce from the \TSAT problem, which is known to be \NP-complete.
Given a \TSAT instance, we create a data exchange instance with two parts, as illustrated in \Cref{fig:reduction_general_existence_sketch}, distinguished by shallow and dotted curves. 
The nodes represent the agents, and only agent $s_1$ appears in both areas.
The lightly shaded area on the left corresponds to the instance discussed, where core-stable exchanges do not exist.
In addition, we create a gadget using the input \TSAT instance in the right area.
Each agent receives a positive payoff only if she simultaneously receives almost all data from every agent who points to her.
In addition, we sets a threshold for every agent such that she incurs a large cost if the total fraction she shares exceeds the threshold.

Agent $s_1$ receives payoffs from both areas and takes the maximum as her final payoff and serves as a variable that determines whether a core-stable exchange exists.
When the input \TSAT instance is a \YES instance, then we are able to construct a data exchange according to the assignment such that (almost) every agent in the right area receives the maximum payoff. 
Hence, no agent in the right area has an incentive to deviate from the data exchange.
However, if the input \TSAT instance is a \NO instance, the construction guarantees that no agent in the right area should exchange data with any other agent. 
Otherwise, the threshold requirement of some agent would be violated, which would lead to a large cost and give her an incentive to deviate.
Therefore, the data exchange only happens in the left area, which is the instance we discussed in the last section, and we know that the core-stable exchange does not exist.

\begin{figure}[t]
\centering
\begin{subfigure}{1\textwidth}
    \centering
    \begin{tikzpicture}[
        roundnode/.style={circle, draw=green!60, fill=green!5, very thick, minimum size=5mm},
        squarednode/.style={rectangle, draw=red!60, fill=red!5, very thick, minimum size=5mm},
        trianglenode/.style={regular polygon, regular polygon sides=3, draw=orange!60, fill=orange!5, very thick, minimum size=5mm}, 
        diamondnode/.style={diamond, draw=blue!60, fill=blue!5, very thick, minimum size=5mm}, 
        dotsnode/.style={minimum size=10mm},
        ]
    \tikzset{
        state/.style={circle, draw, minimum size=3mm},
        dashedline/.style={dashed}
    }
    \node[diamondnode, scale=0.8] (S1) at (-0.9, 1)  [label=below right: $A_1$] {};
    \node[diamondnode, scale=0.8] (S2) at (-0.9, -1) [label=above right: $A_2$] {};

    \node[roundnode, scale=0.5] (s1) at (1, 0)  [label=left: $s_1$] {};
    \node[roundnode, scale=0.5] (s2) at (2, 0) [label=left: $s_2$]  {};
    \node[roundnode, scale=0.5] (s3) at (3, 0) [label=left:$s_3$] {};
    \node[roundnode, scale=0.5] (sn) at (5, 0) [label=left:$s_{n}$] {};
    \node[roundnode, scale=0.5] (sn1) at (6, 0) [label=right:$s_{n+1}$] {};

    \node[squarednode, scale=0.5] (v1) at (1.5, 0.5) [label=above:$v_1$] {};
    \node[squarednode, scale=0.5] (v1n) at (1.5, -0.5) [label=below:$\bar{v}_1$] {};
    \node[squarednode, scale=0.5] (v2) at (2.5, 0.5) [label=above:$v_2$] {};
    \node[squarednode, scale=0.5] (v2n) at (2.5, -0.5) [label=below:$\bar{v}_2$] {};
    \node[squarednode, scale=0.5] (vn) at (5.5, 0.5) [label=above:$v_n$] {};
    \node[squarednode, scale=0.5] (vnb) at (5.5, -0.5) [label=below:$\bar{v}_n$] {};

    \node[trianglenode, scale=0.5] (C1) at (1.5, -1.5)  [label=below: $C_1$] {};
    \node[trianglenode, scale=0.5] (C2) at (2.5, -1.5)  [label=below: $C_2$] {};
    \node[trianglenode, scale=0.5] (Cm) at (5.5, -1.5)  [label=below: $C_m$] {};

    \node (dots1) at (4.5, 0) [label=left:$\cdots$] {};
    \node (dots1) at (4.5, -1.5) [label=left:$\cdots$] {};

    \draw[->, line width=0.7pt] (v1) -- (s1);
    \draw[->, line width=0.7pt] (v1n) -> (s1);
    \draw[->, line width=0.7pt] (v2) -- (s2);
    \draw[->, line width=0.7pt] (v2n) -> (s2);
    \draw[->, line width=0.7pt] (s2) -> (v1);
    \draw[->, line width=0.7pt] (s2) -> (v1n);
    \draw[->, line width=0.7pt] (s3) -> (v2);
    \draw[->, line width=0.7pt] (s3) -> (v2n);
    \draw[->, line width=0.7pt] (sn1) -> (vn);
    \draw[->, line width=0.7pt] (sn1) -> (vnb);
    \draw[->, line width=0.7pt] (vn) -> (sn);
    \draw[->, line width=0.7pt] (vnb) -> (sn);
    \draw[->, line width=0.7pt] (C1) -> (C2);
    \draw[->, line width=0.7pt, bend right=60] (Cm) to (sn1);
    \draw[->, line width=0.7pt, bend right=60] (C1) to (v1n);

    \draw[->, line width=0.7pt, bend right=60] (s1) to (C1);

    \draw[dashed] (4, -0.5) ellipse (4cm and 2cm);

    \draw[<->, line width=0.7pt] (S1) -- (s1);
    \draw[<->, line width=0.7pt] (S2) -- (s1);
    \draw[<->, line width=0.7pt] (S1) -- (S2);

    \draw[draw=none, fill=gray, rounded corners=1cm, opacity=0.1] 
        (-1.4, 1.7) -- (-1.4, -1.7) -- (1.7, 0) -- cycle;
    \end{tikzpicture}
    \caption{Illustration of the directed graph gadget used in \Cref{lem:np_hard_core}, where diamond nodes represent super agents, square nodes represent literal agents, circle nodes represent selection agents and triangle nodes represent clause agents. Clause $C_1$ is in form of $C_1 = (v_1 \vee \cdots )$.
    An arrow $i\rightarrow j$ represents data share $x_{i, j}$ affecting the payoff of agent $j$. The graph within the dotted ellipse represents the graph $G$.}
    \label{fig:reduction_general_core}
\end{subfigure}

\begin{subfigure}{1\textwidth}
    \centering
    \begin{tikzpicture}[
        roundnode/.style={circle, draw=green!60, fill=green!5, very thick, minimum size=5mm},
        squarednode/.style={rectangle, draw=red!60, fill=red!5, very thick, minimum size=5mm},
        trianglenode/.style={regular polygon, regular polygon sides=3, draw=orange!60, fill=orange!5, very thick, minimum size=5mm}, 
        diamondnode/.style={diamond, draw=blue!60, fill=blue!5, very thick, minimum size=5mm}, 
        dotsnode/.style={minimum size=10mm},
        ]
    \tikzset{
        state/.style={circle, draw, minimum size=3mm},
        dashedline/.style={dashed}
    }

    \node[diamondnode, scale=0.8] (S1) at (-0.9, 1)  [label=below right: $A_1$] {};
    \node[diamondnode, scale=0.8] (S2) at (-0.9, -1) [label=above right: $A_2$] {};

    \node[roundnode, scale=0.5] (s1) at (1, 0)  [label=left: $s_1$] {};
    \node[roundnode, scale=0.5] (s2) at (2, 0) [label=left: $s_2$]  {};
    \node[roundnode, scale=0.5] (s3) at (3, 0) [label=left:$s_3$] {};
    \node[roundnode, scale=0.5] (sn) at (5, 0) [label=left:$s_{n}$] {};
    \node[roundnode, scale=0.5] (sn1) at (6, 0) [label=right:$s_{n+1}$] {};

    \node[squarednode, scale=0.5] (v1) at (1.5, 0.5) [label=above:$v_1$] {};
    \node[squarednode, scale=0.5] (v1n) at (1.5, -0.5) [label=below:$\bar{v}_1$] {};
    \node[squarednode, scale=0.5] (v2) at (2.5, 0.5) [label=above:$v_2$] {};
    \node[squarednode, scale=0.5] (v2n) at (2.5, -0.5) [label=below:$\bar{v}_2$] {};
    \node[squarednode, scale=0.5] (vn) at (5.5, 0.5) [label=above:$v_n$] {};
    \node[squarednode, scale=0.5] (vnb) at (5.5, -0.5) [label=below:$\bar{v}_n$] {};

    \node[trianglenode, scale=0.5] (C1) at (1.5, -1.5)  [label=below: $C_1$] {};
    \node[trianglenode, scale=0.5] (C2) at (2.5, -1.5)  [label=below: $C_2$] {};
    \node[trianglenode, scale=0.5] (Cm) at (5.5, -1.5)  [label=below: $C_m$] {};

    \node (dots1) at (4.5, 0) [label=left:$\cdots$] {};
    \node (dots1) at (4.5, -1.5) [label=left:$\cdots$] {};

    \draw[->, line width=2pt, draw=lightgray] (v1n) -> (s1);
    \draw[->, line width=2pt, draw=lightgray] (v2) -- (s2);
    \draw[->, line width=2pt, draw=lightgray] (s2) -> (v1n);
    \draw[->, line width=2pt, draw=lightgray] (s3) -> (v2);
    \draw[->, line width=2pt, draw=lightgray] (sn1) -> (vn);
    \draw[->, line width=2pt, draw=lightgray] (vn) -> (sn);
    \draw[->, line width=2pt, draw=lightgray] (C1) -> (C2);
    \draw[->, line width=2pt, draw=lightgray, bend right=60] (Cm) to (sn1);
    \draw[->, line width=2pt, draw=lightgray, bend right=60] (C1) to (v1n);

    \draw[->, line width=2pt, draw=lightgray, bend right=60] (s1) to (C1);

    \draw[dashed] (4, -0.5) ellipse (4cm and 2cm);

    \draw[<->, line width=2pt, draw=lightgray] (S1) -- (S2);

    \draw[draw=none, fill=gray, rounded corners=1cm, opacity=0.1] 
        (-1.4, 1.7) -- (-1.4, -1.7) -- (1.7, 0) -- cycle;
    \end{tikzpicture}
    \caption{The data exchange $\bm{x}^{\sf Y}$, where the truth assignment is $v_1 = {\sf false}, v_2 = {\sf true}, \ldots, v_n = {\sf true}$ and an arrow $i\rightarrow j$ in gray shadow represents agent $i$ sharing one unit of data with agent $j$.}
    \label{fig:exchange_yes}
\end{subfigure}
\end{figure}

We now present the formal proof for Theorem~\ref{lem:np_hard_core}.

\begin{proof}
We show a reduction from the \TSAT problem.
Given a \TSAT instance with $n$ variables $v_1, \ldots, v_n$ and $m$ clauses $C_1, \ldots, C_m$, we construct a directed graph gadget $G=(V,E)$ (as shown in Fig.~\ref{fig:reduction_general_core}) and then use it to construct a data exchange instance.
For each variable $v_i$, we construct two literal vertices $v_i$ and $\bar{v}_i$ and a selection vertex $s_i$.
For each clause $C_j$, we construct a clause vertex $C_j$. 
The vertices are connected as follows: 
(i) For each variable $v_i$, we create two arcs from literal vertex $v_i$ and $\bar{v}_i$ to $s_i$.
Meanwhile, we create two arcs $(s_{i+1}, v_i)$ and $(s_{i+1}, \bar{v}_i)$;
(ii) Then, we create a path from $s_1$ to $s_{n+1}$ that goes through all the clause vertices $C_1, \ldots, C_m$; 
(iii) Finally, for each clause $C_j$, if literal $v_i$ appears in $C_j$, we construct an arc from $C_j$ to $\bar{v}_i$.
If $\bar{v}_i$ is in $C_j$, construct an arc from $C_j$ to $v_i$.
For example, if $C_n$ is in form of $C_n = (v_1 \vee \cdots)$, then we construct a arc from $C_n$ to $\bar{v}_1$ in Fig.~\ref{fig:reduction_general_core}.
Observe that the out-degree of every clause vertex is exactly $4$. 
We now construct a data exchange instance as follows:
Let every vertex of $G$ correspond to a \emph{normal agent}. 
For simplicity, we use the same notation to denote both the vertices and their corresponding agents.
We slightly abuse $N$ to denote the set of normal agents.
In addition, we introduce two \emph{super agents} $A_1$ and $A_2$ and connect them with the selection agent $s_1$ with bi-directional arcs.

\paragraph{Payoff Function.} The payoff function of each agent $a$ is in form of $p_a(\bm{x}) = \max\left(p_a^V(\bm{x}),  p_a^A(\bm{x})\right)$, where $p_a^V(\bm{x})$ only depends on data shares from normal agents and $p_a^A(\bm{x})$ only depends on the shares from the two super agents.
In particular, $p_v^A(\bm{x}) = 0$ for any normal agent $v$ other than $s_1$.
For any super agent $A \in \{A_1, A_2\}$, $p_a^V(\bm{x})$ is set to zero.
We now proceed to define $p_a^V(\bm{x})$ and $p_a(\bm{x})$.

The payoff function $p_v^V(\bm{x})$ for every normal agent $v\in V$ is defined as follows:
\begin{itemize}[leftmargin=0.5cm]
    \item For each selection agent $s$, let $x_{v_1, s}$ and $x_{v_2, s}$ be the fractions of shares received from the agents with arcs incident to $s$. 
    When $s= s_{n+1}$, we set $x_{v_1, s} = x_{v_2, s}$ as the share from $C_m$.
    Define $p_s^V(\bm{x}) = 1/\epsilon\cdot (\max(x_{v_1,s}, x_{v_2,s}) - (1-\epsilon))_+$. Hence, the maximum payoff $p_s^V$ of a selection agent $s$ is $1$.
    \item For any other agent $v$, let $p_v^V(\bm{x}) = \prod_{(u, v) \in E} 1/\epsilon\cdot (x_{u, v} - (1-\epsilon))_+$. This implies that the maximum payoff these agents can receive is also $1$.
\end{itemize}
Next, we define $p_a^A(\bm{x})$ for the two super agents $A_1, A_2$, and the selection agent $s_1$.
Fix $\gamma > 0$ as a constant smaller than $\epsilon$.
Consider the counter-example presented in \Cref{sec:non_existence}. We consider agents $A_1, A_2$, and $s_1$ to be the three agents $1$, $2$, and $3$ (respectively) in that data exchange instance.
Let $p_a^A(\bm{x})$ be the payoff function of any agent $a\in\{A_1,A_2,s_1\}$ from \Cref{sec:non_existence}, scaled by a factor of $\gamma$.  

\paragraph{Cost Function.} The cost function is also in the form of $c_a(\bm{x}) = c_a^V(\bm{x}) + c_a^A(\bm{x})$, where $c_a^V(\bm{x})$ and $c_a^A(\bm{x})$ are set as zero for any super agent and any normal agent (except for $s_1$).

The cost function $c_v^V(\bm{x})$ for every normal agent $v\in V$ consists of two parts: the first part is a sum of concave functions where each term becomes a small constant $\epsilon$ if $x_{v, u} > \epsilon$; while the second part is a convex function that becomes super large if the total out-shares is larger than a threshold.
Formally, we set $c_v^V(\bm{x})= \sum_{(v, u)\in E} \epsilon\cdot \min(1/\epsilon\cdot x_{v, u}, 1) + 1/\epsilon\cdot (\sum_{(v,u)\in E}x_{v, u} - \tau_v)_+$. Let $\tau_v = 3$ for any clause agent $v$ and $\tau_v = 1$ for any normal agent $v$.
$c_a^A(\bm{x})$ is defined similarly.
For any agent $a\in\{A_1,A_2,s_1\}$, her cost function $c_a^A(\bm{x})$ is inherited from \Cref{sec:non_existence}, scaled by a factor of $\gamma$. 

When the input \TSAT instance is a \YES instance, we define a data exchange $\bm{x}^{\sf Y}$ according to the assignment as follows: (i) The two super agents share one unit of data with each other;
(ii) If a literal $\ell_i$ ($v_i$ or $\bar{v}_i$) is assigned ${\sf true}$, let her share one unit of data with $s_i$ and let $s_{i+1}$ share one unit of data with her;
(iii) Let selection agent $s_1$ share one unit of data with clause agent $C_n$ and every clause agent shares one unit of data with her adjacent agent;
In particular, $C_n$ shares one unit of data with $C_{n-1}$, $C_{n-1}$ shares one unit of data with $C_{n-2}$, and so on;
(iv) For each clause agent $C_j$, if a literal $\ell$ within it is assigned false, then let $C_j$ share one unit of data with the agent corresponding to the negation of $\ell$.
We present a graphical explanation of $\bm{x}^{\sf Y}$ in \Cref{fig:exchange_yes}.

\begin{restatable}{claim}{PropNPYes}
If the \TSAT instance is a \YES instance, the data exchange $\bm{x}^{\sf Y}$ is a core-stable exchange.
\end{restatable}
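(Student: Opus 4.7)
The plan is to verify that every agent's utility under $\bm{x}^{\sf Y}$ sits within $O(\epsilon)$ of its individual maximum, and then argue that no coalitional deviation can clear this narrow slack for every member simultaneously. First I would compute utilities directly from the construction: each selection agent $s_i$ with $i\ge 2$ (including $s_{n+1}$) has $p^V=1$ because the TRUE literal of $v_{i-1}$ (resp.\ $C_m$) shares one unit with it, while its single outgoing share costs $\epsilon$, giving $u=1-\epsilon$; each TRUE literal $\ell_i$ receives one unit from $s_{i+1}$ and from every clause containing its negation (by rule (iv), since $\bar\ell_i$ is false), so $p^V=1$ and again $u=1-\epsilon$; each clause $C_j$ receives one unit from its unique graph in-neighbor, and since the satisfying assignment leaves at most two literals of $C_j$ false, its outgoing cost is at most $3\epsilon$, giving $u\ge 1-3\epsilon$; false literals are idle with $u=0$; $s_1$ has $p^V=1$ dominating $p^A\le\gamma$; and the super agents realize the pairwise utility of the 3-agent instance of \Cref{sec:non_existence}.

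Next, suppose for contradiction that a coalition $(U,\bm{x}^U)$ blocks $\bm{x}^{\sf Y}$. The key scale separation $\gamma<\epsilon$ caps any $p^A$-type contribution at $\gamma$. I would do a case analysis on $U\cap\{A_1,A_2,s_1\}$. If $U\subseteq\{A_1,A_2\}$, no strict improvement is possible since $\bm{x}^{\sf Y}$ already attains the joint optimum of the restricted two-agent subgame and singletons have no partner. If $U$ contains a super agent together with $s_1$ but no right-part in-neighbor of $s_1$ (neither $v_1$ nor $\bar v_1$), then $p^V_{s_1}(\bm{x}^U)=0$ and $u_{s_1}(\bm{x}^U)\le\gamma<1-\epsilon$, contradicting strict improvement of $s_1$.

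For every remaining case I would invoke a cascading argument. Any $a\in U$ with $u_a(\bm{x}^{\sf Y})\ge 1-k\epsilon$ needs $u_a(\bm{x}^U)>1-k\epsilon$, forcing $p_a(\bm{x}^U)$ close to $1$ and its outgoing cost strictly below the baseline. For product-payoff agents (literals, clauses, $s_1$) this demands every in-neighbor in $G$ lie in $U$ and share nearly one unit with $a$; for max-payoff selection agents at least one in-neighbor must share one unit. Cascading backwards along the chain $s_{n+1}\leftarrow C_m\leftarrow\cdots\leftarrow C_1\leftarrow s_1$ and through the alternating literal/selection layers forces $U$ to contain long paths, and each chain member must itself fully share with a downstream partner, incurring cost $\ge\epsilon$. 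I would then argue that at least one chain member cannot shave its cost below $\epsilon$ without driving some downstream payoff to $0$ (any share below $\epsilon$ prevents the $(x-(1-\epsilon))_+$ factor from activating), so its utility cannot strictly exceed $1-\epsilon$ and the blocking assumption fails.

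The main obstacle will be the rerouting scenario where the coalition enlists a false literal $\bar\ell_i$ to supply $s_i$ in place of $\ell_i$: verifying that this drags the clauses containing $\ell_i$ into $U$ with infeasible payoff demands (since $\ell_i$ is true, those clauses never had to redirect to $\bar\ell_i$ in $\bm{x}^{\sf Y}$) is the crux. The two levers that make the argument work are the scale separation $\gamma\ll\epsilon$, ensuring left-part payoffs cannot rescue $s_1$ once it is detached from its literal in-neighbors, and the $\epsilon$-minimum cost per non-trivial outgoing share, which prevents any chain agent from simultaneously supplying a neighbor and strictly beating its $1-\epsilon$ baseline.
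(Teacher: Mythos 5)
Your proposal is correct and follows essentially the same strategy as the paper's proof: argue by contradiction, discretize so that only shares exceeding $1-\epsilon$ matter, and exploit the fact that any payoff-activating out-share costs at least $\epsilon$ while the maximum payoff is $1$, so no agent whose baseline utility is already $1-\epsilon$ can strictly improve. The paper reaches the contradiction more directly by showing that any deviating coalition containing a normal agent must contain a \emph{selection} agent with a positive out-share, whereas you cascade through the whole gadget; both routes work, but two details in your write-up need tightening. First, the contradiction must be localized at an agent whose baseline is at least $1-\epsilon$ (a selection agent or a true literal), not merely at ``some chain member'': a clause agent sits at baseline $1-3\epsilon$ and genuinely could improve to $1-\epsilon$ in a partial deviation, so concluding ``utility at most $1-\epsilon$'' for a clause agent does not refute blocking. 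Second, the parenthetical ``any share below $\epsilon$ prevents the $(x-(1-\epsilon))_+$ factor from activating'' should read ``below $1-\epsilon$''; the two thresholds ($\epsilon$ for incurring the full per-edge cost, $1-\epsilon$ for activating the recipient's payoff) play different roles and should not be conflated.
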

\begin{proof}
First, it can be verified that every agent has a positive utility in $\bm{x}^{\sf Y}$.
Then we prove $\bm{x}^{\sf Y}$ is core-stable by contradiction.
Suppose that a deviation $(U, \bm{x}^U)$ exists.
We first discretize $\bm{x}^U$ as follows: for every $x_{i, j}$, if $x_{i, j} \le 1- \epsilon$, we can decrease $x_{i, j}$ to zero as it does not affect agent $j$'s utility.
In addition, since all the cost functions are monotone, the operation weakly decreases the cost of agent $i$, which further increases agent $i$'s utility.
Thus, the discretization operation does not affect the feasibility of the deviation.

After discretization, if $U$ contains some normal agent, we claim there must exist one selection agent $s$ still with positive out-shares in $\bm{x}^U$.
Suppose for contradiction, all selection agents share nothing in $\bm{x}^U$.
If $U$ contains clause agents, there must exist one clause agent in $U$ not receiving any shares in $\bm{x}^U$.
Hence, the utility of that clause agent will be non-positive in $\bm{x}^U$, which gives her no incentive to deviate from $\bm{x}^{\sf Y}$.
Otherwise, if $U$ does not contain any clause agent, any literal agent should not be included either, as the payoff of these agents only depends on the clause agents and selection agents. 
Loss of a clause agent makes a literal agent receive significantly less utility than in $\bm{x}^{\sf Y}$.
These imply that $U$ should not contain any normal agent, and this contradicts our assumption.
Therefore, there must exist one selection agent $s$ with positive shares in $U$.
That means the cost of that selection agent is at least $\epsilon$ by the definition of her cost function.
Meanwhile, as her maximum payoff is $1$, her utility in $\bm{x}^U$ cannot exceed her payoff in $\bm{x}^{\sf Y}$, which causes a contradiction.

Last, if the coalition $U$ does not contain any normal agent, it can also be verified that the two super agents cannot strictly increase their utility simultaneously.
\end{proof}

When the input \TSAT instance is a \NO instance, we prove the non-existence of core-stable exchange by contradiction.
For the sake of contradiction, we assume a core-stable exchange $\bm{x}^{\sf N}$ exists.
In fact, we find that, to guarantee core stability, the shares among the normal agents $V$ can only be zero, i.e., $x^{\sf N}_{i, j} = 0$ for any $i\neq j\in N$.
\begin{restatable}{claim}{ClaimNPNoEmptyNormal}
\label{claim:np_empty_normal}
$x^{\sf N}_{i, j} = 0$ for any two distinct normal agents $i, j$ if the \TSAT instance is a \NO instance.
\end{restatable}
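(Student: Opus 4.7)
The plan is to argue by contradiction. Assume $\bm{x}^{\sf N}$ is core-stable and, for contradiction, that $x^{\sf N}_{i,j}>0$ for some distinct normal agents $i,j$. I would split on whether $p^V_{s_1}(\bm{x}^{\sf N})$ is strictly positive.

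\textbf{Case I: $p^V_{s_1}(\bm{x}^{\sf N})>0$.} WLOG $x^{\sf N}_{v_1,s_1}\ge 1-\epsilon$. I would propagate activity around the gadget using two facts coming from individual rationality: (a) any normal agent with a positive out-share has strictly positive payoff (else its cost swamps any payoff); (b) for a non-selection normal agent, positive payoff forces \emph{every} in-arc to carry $\ge 1-\epsilon$, while for a selection agent $s_i$ with $i\ge 2$ (where $p^A=0$), positive payoff needs at least one of $v_i,\bar v_i$ to send $\ge 1-\epsilon$. Combined with the threshold $\tau=1$ at each $s_{i+1}$ (which forbids $\ge 1-\epsilon$ shares to both $v_i$ and $\bar v_i$), this inductively picks, for each variable $i$, a unique ``true'' literal $t_i\in\{v_i,\bar v_i\}$ with $x^{\sf N}_{s_{i+1},t_i}\ge 1-\epsilon$. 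The chain reaches $s_{n+1}$, forcing $x^{\sf N}_{C_m,s_{n+1}}\ge 1-\epsilon$, and cascading backward through the path gives every $C_j$ active with $x^{\sf N}_{C_{j-1},C_j}\ge 1-\epsilon$ (and $x^{\sf N}_{s_1,C_1}\ge 1-\epsilon$). Moreover, each active $t_i$ requires every clause containing $f_i=\bar t_i$ to send $\ge 1-\epsilon$ to $t_i$.

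Setting $\sigma(i)=\mathsf{true}$ iff $t_i=v_i$, TSAT-NO supplies a clause $C_{j_0}$ with all three literals $\ell_1,\ell_2,\ell_3$ false under $\sigma$, i.e., $\bar\ell_k=t_{i_k}$. Then $C_{j_0}$ must simultaneously send $\ge 1-\epsilon$ to each of $t_{i_1},t_{i_2},t_{i_3}$ \emph{and} along its path arc, so its total out-share is $\ge 4(1-\epsilon)$. This exceeds $\tau_{C_{j_0}}=3$ by $\ge 1-4\epsilon$, inducing a cost penalty of at least $(1/\epsilon)(1-4\epsilon)$ in $c_{C_{j_0}}$; against a maximum payoff of $1$, this forces $u_{C_{j_0}}(\bm{x}^{\sf N})<0$ for sufficiently small $\epsilon$, contradicting individual rationality.

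\textbf{Case II: $p^V_{s_1}(\bm{x}^{\sf N})=0$.} Here $s_1$'s utility equals $p^A(\bm{x}^{\sf N})-c^V(\bm{x}^{\sf N})-c^A(\bm{x}^{\sf N})$. I would invoke the non-existence result of \Cref{sec:non_existence} applied to the scaled three-agent sub-instance on $\{A_1,A_2,s_1\}$ to extract a blocking coalition $(V^*,\bm y^{V^*})$ with $V^*\subseteq\{A_1,A_2,s_1\}$ for the $A$-side state of $\bm{x}^{\sf N}$. Lifting this deviation to the full game, the super agents $A_1,A_2$ have no $V$-components so strict $A$-side improvement transfers verbatim; for $s_1\in V^*$ the deviation $\bm y^{V^*}$ uses no normal-normal shares, giving $p^V(\bm y^{V^*})=0=c^V(\bm y^{V^*})$, so $s_1$'s new utility equals $p^A(\bm y^{V^*})-c^A(\bm y^{V^*})$, which by three-agent blocking strictly exceeds $p^A(\bm{x}^{\sf N})-c^A(\bm{x}^{\sf N})\ge p^A(\bm{x}^{\sf N})-c^V(\bm{x}^{\sf N})-c^A(\bm{x}^{\sf N})$, the latter being $s_1$'s old utility. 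Hence $(V^*,\bm y^{V^*})$ is a valid blocking of $\bm{x}^{\sf N}$, contradicting core-stability.

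The principal obstacle is the propagation step in Case I. One must choose the scaling $\gamma$ small enough relative to $\epsilon$ so that $p^A_{s_1}$ cannot by itself cover the $\Omega(\epsilon)$-cost of any positive out-share $s_1\to C_1$, forcing $p^V_{s_1}$ to subsidize that cost — this is what anchors the inductive chain at its base. Once the full backbone $s_1\to C_1\to\cdots\to C_m\to s_{n+1}$ together with the selection/literal layers is locked in, extracting the assignment $\sigma$ and applying TSAT-NO plus the $\tau=3$ clause-threshold overflow is a direct pigeonhole.
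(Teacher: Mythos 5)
Your proof is correct, and Case~I is essentially the paper's argument: the same forward propagation around the gadget cycle (positive out-share $\Rightarrow$ positive payoff $\Rightarrow$ all relevant in-arcs above $1-\epsilon$), the same extraction of an assignment from the unique active literal per variable (enforced by the threshold $\tau=1$ at the selection agents), and the same final overload of an unsatisfied clause agent past $\tau=3$. The genuine difference is organizational. The paper never case-splits on $p^V_{s_1}$: it takes $U$ to be the set of normal agents with positive out-shares to normal agents, propagates \emph{backward} along in-arcs from an arbitrary member of $U$, and uses the scaling $\gamma<\epsilon$ exactly where you predict — to force $p^V_{s_1}>0$ when the chain passes through $s_1$ — so that a single propagation argument closes the claim. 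You instead anchor the chain directly on the hypothesis $p^V_{s_1}>0$ (making the $\gamma$-vs-$\epsilon$ worry in your last paragraph unnecessary for your Case~I), and you dispose of the complementary case $p^V_{s_1}=0$ by importing the three-agent non-existence result and lifting its blocking coalition to the full instance. That lifting is sound (the super agents' utilities are purely $A$-side, and for $s_1$ you correctly use $u_{s_1}(\bm{x}^{\sf N})=p^A-c^V-c^A\le p^A-c^A$ together with $p^V_{s_1}(\bm y^{V^*})=c^V_{s_1}(\bm y^{V^*})=0$), but note that it proves strictly more than the claim: in your Case~II you contradict core-stability outright, which is precisely the step the paper defers until \emph{after} \Cref{claim:np_empty_normal}. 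The result is a valid but partially redundant proof — your Case~II already does the work that the paper's final paragraph performs once the claim has reduced the instance to the three-agent sub-game.
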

\begin{proof}
Let $U$ be the set of normal agents who share positively with other normal agents.
If $U$ is empty, the claim is proved.
Assume that $U$ is nonempty.
We next demonstrate that an agent $v$ in $U$ must exist with a zero payoff, which means she has negative utility since she has a positive share -- which incurs a positive cost. Our discretization operation ensures that every agent in $U$ has an out-share of at least $1-\epsilon$, which makes her incur a cost of at least $\epsilon$.

Since $\epsilon > \gamma$, it implies $p_v^A(\bm{x}) < \epsilon$ and $p_v^V(\bm{x})$ should be positive for any $v\in U$. 
By the definition of $p_v^V$, the fractional data shared on every edge incident to every clause agent or literal agent in $U$ should be larger than $1-\epsilon$. 
Similarly, the fractional data shared on one of the incident edges of every selection agent in $U$ should also be larger than $1-\epsilon$.
Since $U$ is nonempty, by the structure of the directed graph gadget, the following agents should be included in $U$: (i) every selection agent; (ii) at least one of $v_i$ and $\bar{v}_i$ for any $i\in [n]$; (iii) every clause agent.

By the structure of the graph gadget, it can be observed that the following agents should be included in $U$: (i) every selection agent; (ii) at least one of $v_i$ and $\bar{v}_i$ for any $i\in [n]$; (iii) every clause agent.
Next, we interpret $U$ as an assignment of the input \TSAT instance.
If $v_i$ and $\bar{v}_i$ are both included, we only keep an arbitrary one.
Then we create an assignment as follows: if $v_i$ is included, then $v_i$ is assigned ${\sf true}$. Otherwise, if $\bar{v}_i$ is included, $v_i$ is assigned ${\sf false}$.
Since the \TSAT instance is a \NO instance, no matter how the assignment is given, one clause $C_i$ will be unsatisfied, which means all the literals within it are assigned ${\sf false}$.
Hence, the negations of all three literals are included in $U$, and $C_i$ should share $1-\epsilon$ unit of data with each of them.
Meanwhile, $C_i$ should also share at least $1-\epsilon$ unit of data with her adjacent clause agent (or selection agent), which makes the sum of fractions on her out-edges to be no less than $4(1-\epsilon)$.
Thus, her incurred cost will be at least $1/\epsilon\cdot (4-4\epsilon-3) = 1/\epsilon\cdot (1-4\epsilon)$, which is much larger than her maximum payoff as $\epsilon$ is a sufficiently small constant.
So there must exist one clause agent receiving negative utility in $U$, and such agent has no incentive to deviate, which leads to a contradiction.
Thus, there must exist an agent $v$ in $U$ whose payoff is zero while her cost is positive, but such an agent has a negative utility and has no incentive to deviate, but this leads to contradiction as well.
Therefore, $U$ must be empty, which concludes the claim.
\end{proof}

Based on \Cref{claim:np_empty_normal}, only data exchanges between $s_1$ and the two super agents $A_1$ and $A_2$ can be positive.
However, by the analysis of the counter-example in \Cref{sec:non_existence}, a deviation always exists no matter how they exchange.
Therefore, there is no core-stable data exchange when the \TSAT is a \NO instance, which concludes the correctness of the reduction.
\end{proof}

\subsection{\PPAD-Hardness of Finding Core-Stable Data Exchange}
We next present our main technical result -- the \PPAD-hardness for finding a core-stable data exchange under concave payoffs and convex costs.

\begin{definition}[Approximate Fractional Hypergraph Matching Problem]
In an instance of the fractional hypergraph matching, we are given a hypergraph $G = (V,E)$ and a preference order $\succ_{v}$ over hyperedges $E$ incident to $v$ for every vertex $v\in V$. 
Denote by $e' \succeq_v e$ if $e' \succ_v e$ or $e' = e$.
Let $E(v)$ denote the set of hyperedges that contain/are incident to $v$.
A fractional matching $f$ in $G$, assigns to each edge $e \in E$, a value $f(e) \in [0,1]$ such that $\sum_{e \colon e \in E(v)} f(e) \leq 1$ for all $v \in V$. 
\end{definition}

\begin{definition}[Stable fractional matching]\label{def:stable-fractional-matching}
    A fractional matching $f$ is $(1-\epsilon)$-\emph{stable} if for every edge $e$, there exists a vertex $v$ in $e$ such that \(\sum_{e' \in E(v) \colon e' \succeq_v e} f(e') \geq  1 - \epsilon. \)
\end{definition}

Finding a stable fractional matching is \PPAD-complete when $\epsilon$ is set as $1/2^{20\abs{E}^4}$ \cite[Theorem 4]{ishizuka2018complexity}.
Furthermore, by applying their result to the setting in \cite{csaji2022complexity}, the \PPAD-completeness still holds even when each edge has a size of exactly 3 and every vertex is incident to at most three edges.
We reduce the problem of finding an approximately stable fractional matching to finding a core-stable data exchange.

\ThmPPADHard*

\subsubsection{Full Reduction}
\label{sec:overview_of_reduction}
In the following content, we give the full details of the construction.
Given a hypergraph instance $G=(V, E)$ and a preference system $\mathcal{O}$, we construct the data exchange instance as follows.
For every hyperedge $e$, we construct an \emph{edge agent} (denoted by $e$ for simplification).
Meanwhile, we create three \emph{vertex agents} for the three vertices included in $e$, denoted by $v_e^1, v_e^2, v_e^3$.
Note that if a vertex appears in multiple hyperedges, we do not create separate vertex agents for each occurrence. Instead, we create a unique vertex agent and simply refer to it by different names on different hyperedges.
Additionally, we create one \emph{intermediate agents} $i_e$ for every edge, who serve as the bridge facilitating data exchanges between edge agents and vertex agents.
For every edge $e\in E$, we correspond intermediate agent $i_e^t$ to vertex agent $v_e^t$ for $t\in [3]$.
We present a graphical illustration of the edge gadget in Fig.~\ref{fig:edge_gadget}.

Fix $\epsilon > 0$ to be a sufficiently small constant such that $1-10\epsilon^{1/4} \ge 1-  1/2^{20\abs{E}^4}$ and $\epsilon < 10^{-3}$.
Let $d = \epsilon^{-1/4}$, $H = \epsilon^{-1/2}$, and $\Gamma = 3(d+H+1)/\epsilon$.
Next, let us define the payoff and cost functions of the agents in the above data exchange instance.

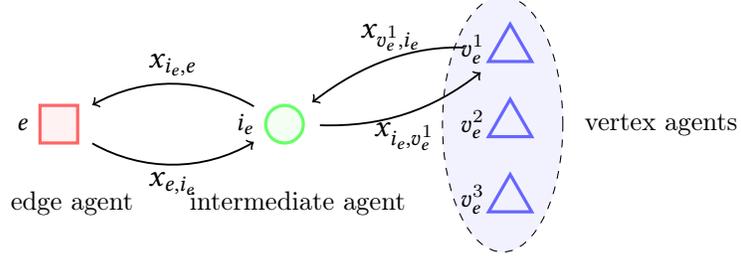
\begin{figure}[t]
\centering
\begin{tikzpicture}[
roundnode/.style={circle, draw=green!60, fill=green!5, very thick, minimum size=5mm},
squarednode/.style={rectangle, draw=red!60, fill=red!5, very thick, minimum size=5mm},
trianglenode/.style={regular polygon, regular polygon sides=3, draw=blue!60, fill=blue!5, very thick, minimum size=5mm}, 
dotsnode/.style={minimum size=10mm},
]
\node[squarednode] (e) at (0, 0) [label=left:$e$, label={[label distance=0.5cm]below:\text{\quad edge agent}}] {};

\node[roundnode] (ie) at (3, 0) [label=left:$i_e$, label={[label distance=0.5cm]below:\text{\quad intermediate agent}}] {};

\draw[dashed,  fill=blue!5] (5.9, 0) ellipse (0.8cm and 1.7cm);

\node[trianglenode] (ve1) at (6, 1) [label=left:$v_e^1$] {};
\node[trianglenode] (ve2) at (6, 0) [label=left:$v_e^2$, label={[label distance=0.3cm]right:\text{\quad vertex agents} }] {};
\node[trianglenode] (ve3) at (6, -1) [label=left:$v_e^3$] {};

\draw[->, line width=0.7pt, bend right=30, shorten >=2mm, shorten <= 2mm] (e) to node[midway, below] {\large $x_{e, i_e}$} (ie);
\draw[->, line width=0.7pt, bend right=30, shorten >=2mm, shorten <= 2mm] (ie) to node[midway, above] {\large $x_{i_e, e}$} (e);
\draw[->, line width=0.7pt, bend right=20, shorten >=2mm, shorten <= 2mm] (ie) to node[midway, below] {\large $x_{i_e, v_e^1}$}  (ve1);
\draw[->, line width=0.7pt, bend right=20, shorten >=2mm, shorten <= 4mm] (ve1) to node[midway, above]  {\large $x_{v_e^1, i_e}$}  (ie);

\end{tikzpicture}
\caption{Graphical explanation of the edge gadget for the \PPAD-hardness reduction.}
\label{fig:edge_gadget}
\end{figure}        

\paragraph{Payoff Functions.} 
For every vertex agent $v\in V$, we introduce a payoff function $p_v^e(\bm{x})$ for every hyperedge including it, and her total payoff is defined as the sum of these payoffs, i.e., $p_v(\bm{x}) = \sum_{e: v\in e} p_v^e(\bm{x})$.
Consider a hyperedge $e\in E$ including $v$.
Then we define $p_v^e(\bm{x})$ as follows:
$$
p_v^e(\bm{x}) = \begin{cases}
(d+H+1)\cdot x_{i_e, v} & \text{if $e$ is $v$'s favorite hyperedge},\\
(d+H)\cdot x_{i_e, v}& \text{if $e$ is $v$'s second preferred hyperedge},\\
H\cdot x_{i_e, v} &\text{if $e$ is $v$'s least preferred hyperedge}\,.
\end{cases}
$$
The payoff function of every intermediate agent $i_e$ is $p_{i_e}(\bm{x}) = \min\left(x_{e, i_e}, x_{v_e^1, i_e},  x_{v_e^2, i_e}, x_{v_e^3, i_e}\right)$.
In addition, the payoff function of the edge agent $e$ is defined as $ p_e(\bm{x}) = x_{i_e, e}$.

\paragraph{Cost Functions.} For every vertex agent $v$, define her cost function using the truncation function.
The cost will be pretty large if her total shares with the intermediate agents exceed $1$.
\begin{equation}\label{eqn:ppad_cost_of_vertex}
c_v(\bm{x}) = \Gamma \cdot \left(\sum_{e:v\in e}x_{v, i_e} - 1\right)_+\,.
\end{equation}
Meanwhile, the cost function of an intermediate agent $i_e$ is defined as $c_{i_e}(\bm{x}) = (1-\epsilon)\cdot \max\left(x_{i_e, e}, x_{i_e, v_e^1}, x_{i_e, v_e^2}, x_{i_e, v_e^3}\right)$.
The cost function of edge agent $e$ is defined as $c_{e}(\bm{x}) = (1-\epsilon)\cdot x_{e, i_e}$.
We can check that all the payoff functions are concave and the cost functions are convex.
Meanwhile, it can be observed that the utility of every agent is equal to zero when the data exchange $\bm{x}$ is $\bm{0}$.

\subsubsection{Mapping from Data Exchange to Fractional
Hypergraph Matching}
\label{sec:map_exchange_to_hgm}
Next, we construct a hypergraph mapping from a core-stable data exchange $\bm{x}$.
Since every agent receives a utility of zero in the data exchange $\bm{0}$, any core-stable data exchange must ensure that the utility of every agent is nonnegative.
Otherwise, an agent receiving negative utility would have an incentive to deviate.
We then adjust the fractions to ensure that no agent's utility decreases.
Note that each agent's payoff and cost functions are monotonic with respect to every coordinate of the data exchange.
For every intermediate agent $i_e$, since her cost function is defined as the maximum of her out-shares $x_{i_e, e}, x_{i_e, v_e^1}, x_{i_e, v_e^2}, x_{i_e, v_e^3}$, we can adjust all the four fractions to be equal to the maximum of them, which weakly increase other agents' utility without increasing her cost.
In addition, consider the edge agent $e$ and the three vertex agents corresponding to the intermediate agent.
As the payoff of the intermediate agent is the minimum of the fractions of shares from the four agents, we can adjust the four fractions to be equal to the minimum of them, which weakly increases the utility of other agents without decreasing the payoff of the intermediate agent.
As the utility of all agents does not decrease during the above adjustment, the tweaked data exchange $\bm{x}$ is still core-stable.
In the following proof, we assume that $x_{i_e, e} = x_{i_e, v_e^1} = x_{i_e, v_e^2} = x_{i_e, v_e^3}$ and $x_{e, i_e} = x_{v_e^1, i_e} = x_{v_e^2, i_e} = x_{v_e^3, i_e}$ for every intermediate agent $i_e$.
Denote the former value by $f^-(e)$ and the latter value by $f^+(e)$.
We first have the following lemma of the relationship between the two fractions $f^-(e)$ and $f^+(e)$ for every hyperedge $e$.

\begin{lemma}\label{lem:bound_f_neg_pos}
For every hyperedge $e\in E$, we have $(1-\epsilon)f^+(e) \le f^-(e) \le f^+(e)/(1-\epsilon)$.
\end{lemma}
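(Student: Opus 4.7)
The plan is to obtain both bounds directly from the individual rationality of two specific agents — the intermediate agent $i_e$ and the edge agent $e$ — which is guaranteed by core-stability together with the fact that the zero exchange $\bm{0}$ is always feasible and yields zero utility to every agent. Concretely, if any agent $a$ had strictly negative utility at $\bm x$, then the singleton deviation $(U, \bm x^U) = (\{a\}, \bm 0)$ would block $\bm x$, contradicting core-stability. Hence in the (symmetrized) core-stable exchange $\bm x$, every agent has non-negative utility.

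First I would apply this observation to the intermediate agent $i_e$. By the symmetrization carried out right before the lemma, we have $x_{e, i_e} = x_{v_e^1, i_e} = x_{v_e^2, i_e} = x_{v_e^3, i_e} = f^+(e)$ and $x_{i_e, e} = x_{i_e, v_e^1} = x_{i_e, v_e^2} = x_{i_e, v_e^3} = f^-(e)$. Plugging these into the definitions gives $p_{i_e}(\bm x) = \min\{\cdot\} = f^+(e)$ and $c_{i_e}(\bm x) = (1-\epsilon)\max\{\cdot\} = (1-\epsilon) f^-(e)$, so the non-negativity of $u_{i_e}(\bm x) = f^+(e) - (1-\epsilon) f^-(e)$ rearranges to $f^-(e) \le f^+(e)/(1-\epsilon)$.

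Next I would apply the same reasoning to the edge agent $e$, whose payoff and cost depend only on the pair $(x_{i_e,e}, x_{e, i_e}) = (f^-(e), f^+(e))$. From the definitions, $p_e(\bm x) = f^-(e)$ and $c_e(\bm x) = (1-\epsilon) f^+(e)$, so non-negativity of $u_e(\bm x) = f^-(e) - (1-\epsilon) f^+(e)$ gives the complementary bound $f^-(e) \ge (1-\epsilon) f^+(e)$. Combining the two inequalities yields the lemma.

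There is no real obstacle here: the proof is essentially two one-line computations. The only point worth stating carefully is the appeal to individual rationality (via the singleton deviation to $\bm 0$), since that is what turns the constants $1-\epsilon$ appearing in the cost functions of $i_e$ and $e$ into the two-sided sandwich of $f^-(e)$ against $f^+(e)$.
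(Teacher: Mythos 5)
Your proposal is correct and follows essentially the same route as the paper: both bounds come from the non-negativity of the utilities of the edge agent $e$ and the intermediate agent $i_e$ at the symmetrized exchange, which the paper likewise justifies by individual rationality (a negative-utility agent would deviate to the zero exchange). The computations of the payoffs and costs match the paper's exactly.
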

\begin{proof}
First observe that, the utility of the edge agent $e$ is given by 
$$
p_e(\bm{x}) - c_e(\bm{x}) = f^-(e) - (1-\epsilon)\cdot f^+(e), 
$$
which should be nonnegative as the data exchange is core-stable.
This implies that $f^-(e) \ge (1-\epsilon)\cdot f^+(e)$.
In addition, the utility of the intermediate agent $i_e$ is given by
$$
p_{i_e}(\bm{x}) - c_{i_e}(\bm{x}) = f^+(e) - (1-\epsilon)\cdot f^-(e),
$$
which is also nonnegative.
This implies that $f^+(e) \ge (1-\epsilon)\cdot f^-(e)$ and concludes the lemma.
\end{proof}

\begin{lemma}\label{lem:upper_bound_of_sum_of_flow}
For every vertex $v\in V$, we have $\sum_{e:v\in e} f^+(e) \le (1+\epsilon)$.
\end{lemma}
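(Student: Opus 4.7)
My plan is to leverage individual rationality of the vertex agent $v$, which is implied by core-stability (consider the singleton coalition $U=\{v\}$ with the zero exchange, which attains utility $0$; if $u_v(\bm x)<0$, this would be a blocking deviation). Thus any core-stable exchange $\bm x$ satisfies $p_v(\bm x) \ge c_v(\bm x)$. The idea is then to bound $p_v(\bm x)$ from above in terms of $S := \sum_{e:v\in e} f^+(e)$ and $c_v(\bm x)$ from below in the same quantity; the prefactor $\Gamma = 3(d+H+1)/\epsilon$ has been chosen precisely so the inequality collapses to $S \le 1+\epsilon$.

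\paragraph{Key steps.} First, upper-bound the payoff. Each summand of $p_v(\bm x) = \sum_{e:v\in e} p_v^e(\bm x)$ has leading coefficient at most $d+H+1$ in the variable $x_{i_e,v}$, and by the normalization established just before Lemma~\ref{lem:bound_f_neg_pos} we have $x_{i_e,v} = f^-(e)$. Hence
\[
p_v(\bm x) \;\le\; (d+H+1)\sum_{e:v\in e} f^-(e) \;\le\; \frac{d+H+1}{1-\epsilon}\sum_{e:v\in e} f^+(e) \;=\; \frac{d+H+1}{1-\epsilon}\,S,
\]
where the second inequality is Lemma~\ref{lem:bound_f_neg_pos}. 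Second, read off the cost directly: since $x_{v,i_e} = f^+(e)$, we have $c_v(\bm x) = \Gamma\,(S-1)_+$.

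\paragraph{Closing the algebra.} If $S \le 1$ there is nothing to prove. Otherwise combine $p_v(\bm x)\ge c_v(\bm x)$ with the previous bounds, multiply out, and cancel $(d+H+1)$:
\[
\frac{S}{1-\epsilon} \;\ge\; \frac{3(S-1)}{\epsilon} \quad\Longrightarrow\quad \epsilon S \;\ge\; 3(1-\epsilon)(S-1) \quad\Longrightarrow\quad (3-4\epsilon)\,S \;\le\; 3(1-\epsilon),
\]
so $S \le \tfrac{3(1-\epsilon)}{3-4\epsilon}$. A routine cross-multiplication check shows that for $\epsilon < 10^{-3}$ (in fact for every $\epsilon \le 1/2$), $\tfrac{3(1-\epsilon)}{3-4\epsilon} \le 1+\epsilon$, since the inequality rearranges to $4\epsilon^2 \ge 0$. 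This yields $S \le 1+\epsilon$, as desired.

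\paragraph{Anticipated difficulty.} The argument is mostly mechanical once one notices that individual rationality alone, rather than a more elaborate blocking coalition, already does the job. The only subtle point is the calibration of $\Gamma$: the choice $\Gamma = 3(d+H+1)/\epsilon$ is tight enough that the bound on $p_v$ forces $S$ just below $1+\epsilon$, and loose enough that it still dominates the payoff. Once the two bounds on $p_v$ and $c_v$ are written down, no serious obstacle remains; the rest is the short algebraic check above.
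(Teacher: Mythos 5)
Your proof is correct and follows essentially the same route as the paper: both arguments rest on individual rationality of the vertex agent ($u_v(\bm{x})\ge 0$ in any core-stable exchange), an upper bound on the payoff of order $d+H+1$, and the exact cost $c_v(\bm{x})=\Gamma\,(S-1)_+$ with $\Gamma=3(d+H+1)/\epsilon$ — the paper simply bounds each $p_v^e(\bm{x})$ by $d+H+1$ and uses $\deg v\le 3$ to get $p_v(\bm{x})\le 3(d+H+1)$, making the final step one line, whereas you detour through Lemma~\ref{lem:bound_f_neg_pos} and obtain the marginally sharper bound $S\le 3(1-\epsilon)/(3-4\epsilon)$. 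One small nit: your last cross-multiplication rearranges to $2\epsilon(1-2\epsilon)\ge 0$, not $4\epsilon^2\ge 0$, but this still holds for all $\epsilon\le 1/2$, so the conclusion is unaffected.
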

\begin{proof}
Observe that, the utility of every vertex agent $v$ is given by
\begin{align*}
p_v(\bm{x}) - c_v(\bm{x}) & = \sum_{e:v\in e} p_v^e(\bm{x}) - c_v(\bm{x}) \le \sum_{e:v\in e} (d+H+1) - \Gamma\cdot \Big(\sum_{e:v\in e}f^+(e) - 1\Big)_+\\
& \le 3\cdot(d+H+1) - \Gamma\cdot \Big(\sum_{e:v\in e}f^+(e) - 1\Big)_+, \tag{the degree of $v$ is at most 3} 
\end{align*}
which should be nonnegative.
As $\Gamma = 3(d+H+1)/\epsilon$, then we have $\sum_{e:v\in e}f^+(e) \le 1 + \epsilon$.
\end{proof}

Now we are ready to construct the fractional hypergraph matching on that hypergraph.
Define the flow $f: E \rightarrow [0,1]$ as follows: for every edge $e\in E$, we let $f(e) = \max(f^+(e), f^-(e))\cdot \frac{1-\epsilon}{1+ \epsilon}$, which is valid fractional matching by \Cref{lem:upper_bound_of_sum_of_flow}.

\subsubsection{From Core-Stability to Stable Matching}
\label{sec:from_core_to_stable}
Finally, we show the fractional matching constructed above is a stable fractional matching.
For every hyperedge $e\in E$, if $f(e) \ge \frac{1-\epsilon}{1+\epsilon}$, then the stability constraint is met by every vertex $v$ included in the hyperedge since $\sum_{e':e'\succeq e} f(e') \ge f(e)$ and $\epsilon$ is assumed to be $2\epsilon < 2^{20\abs{V}^4}$.
Below we consider the case when $f(e) < \frac{1-\epsilon}{1+\epsilon}$.
Then we can see that 
\begin{align*}
\max(f^+(e), f^-(e)) \le f(e)\cdot \frac{1+\epsilon}{1-\epsilon} < 1
\end{align*}
Denote the maximum of $f^+(e)$ and $f^-(e)$ by $x_e^*$.
Let us define $\Delta_e = 1-x^*_e$. This variable is positive since $\max(f^+(e), f^-(e)) < 1$.
Consider a coalition $U$ consisting of the edge agent $e$, the intermediate agents $i_e$ and the three vertex agents $v_e^1, v_e^2, v_e^3$.
Define the data exchange $\bm{x}^U$ as $x^U_{s, t} = x_{s, t} + \Delta_e$ for every $s, t\in U$.
Observe that the edge agent and the intermediate agent all receive higher utility in deviation.
\begin{claim}\label{claim:exclude_edge_inter}
$u_e(\bm{x}^U) > u_e(\bm{x})$ and $u_{i_e}(\bm{x}^U) >u_{i_e}(\bm{x})$.  
\end{claim}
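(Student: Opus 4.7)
The plan is to prove both inequalities by a direct two-line calculation that exploits two features already in place. First, by the symmetrization carried out at the beginning of Section~\ref{sec:map_exchange_to_hgm}, the four incoming shares at $i_e$ are all equal to $f^+(e)$ and the four outgoing shares at $i_e$ are all equal to $f^-(e)$, so in $\bm{x}^U$ these eight shares become $f^+(e)+\Delta_e$ and $f^-(e)+\Delta_e$ respectively. Second, the case assumption $f(e) < (1-\epsilon)/(1+\epsilon)$ was just used to conclude $x_e^* = \max(f^+(e),f^-(e)) < 1$, so $\Delta_e > 0$; combined with $f^+(e)+\Delta_e, f^-(e)+\Delta_e \leq x_e^* + \Delta_e = 1$, this also certifies that $\bm{x}^U$ is a feasible exchange on $U$.

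Next I would compute the utility changes. For the edge agent $e$, the utility $u_e(\bm{x}) = x_{i_e,e} - (1-\epsilon)\,x_{e,i_e}$ is linear in exactly the two coordinates that change, each of which shifts by $\Delta_e$, so
\[
u_e(\bm{x}^U) - u_e(\bm{x}) \;=\; \Delta_e - (1-\epsilon)\Delta_e \;=\; \epsilon\,\Delta_e \;>\; 0.
\]
For the intermediate agent $i_e$, all four arguments of the $\min$ in $p_{i_e}$ are equal in $\bm{x}$ and each shifts by the same amount $\Delta_e$, so the minimum also increases by $\Delta_e$; the analogous observation for the $\max$ in $c_{i_e}$ yields
\[
u_{i_e}(\bm{x}^U) - u_{i_e}(\bm{x}) \;=\; \Delta_e - (1-\epsilon)\Delta_e \;=\; \epsilon\,\Delta_e \;>\; 0.
\]

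There is essentially no obstacle internal to this claim; its entire content is that the $\min$ and $\max$ defining $p_{i_e}$ and $c_{i_e}$ collapse to their common argument under the symmetric normalization, so every relevant utility shifts linearly in $\Delta_e$. The genuinely hard part comes in the follow-up argument, which must combine core-stability of $\bm{x}$ with the strict improvement shown here for $e$ and $i_e$ to force at least one vertex agent $v_e^t$ to satisfy $u_{v_e^t}(\bm{x}) \geq u_{v_e^t}(\bm{x}^U)$, and then translate that weak inequality, via the weight assignment $w(\cdot,\cdot)\in\{H,d+H,d+H+1\}$, into the desired stability bound $\sum_{e' \succeq_{v_e^t} e} f(e') \geq 1 - O(\epsilon^{1/4})$.
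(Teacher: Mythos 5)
Your proof is correct and follows essentially the same route as the paper's: shift every relevant coordinate by $\Delta_e$, note that the $\min$/$\max$ collapse under the symmetrization so payoff and cost each move linearly, and conclude a net gain of $\epsilon\Delta_e>0$ for both agents. (If anything, your accounting is the more accurate one --- the paper states the intermediate agent's payoff and cost increase by $4\Delta_e$ and $4(1-\epsilon)\Delta_e$, whereas the stated $\min$/$\max$ definitions give $\Delta_e$ and $(1-\epsilon)\Delta_e$ as you compute; the positive conclusion is the same either way.)
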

\begin{proof}
For the intermediate agent $i_e$, as all the in-shares (out-shares) are shifted by $\Delta_e$ simultaneously.
Her payoff increases by $4\Delta_e$ while her cost increases by $4(1-\epsilon)\Delta_e$.
Thus, her total payoff increases by $4\epsilon\Delta_e$.
Similarly, the payoff of the edge agent $e$ increases by $\Delta_e$ while her cost increases by $(1-\epsilon)\Delta_e$.
Therefore, her total payoff increases by $\epsilon\Delta_e$.
\end{proof}

Since $\bm{x}$ is a core-stable exchange, there must exist an agent in $U$ receiving less utility in $\bm{x}^U$.
By \Cref{claim:exclude_edge_inter}, such agent must be one of the three vertex agents.
Denote the agent by $v$ for simplicity.
Let $f(e_1), f(e_2)$ and $f(e_3)$ respectively be the flows for $e_1, e_2, e_3$ respectively being vertex $v$'s most favorite, second favorite, and least favorite edge.
Hence, we know $f(e_1) + f(e_2) + f(e_3) \le 1$.
We now discuss the following three cases according to the ranking of $e$ in vertex $v$'s preference order $\succ_v$.

\paragraph{Case I: $e$ is $v$'s favorite edge.}
In the new data exchange $\bm{x}^U$, we know the maximum fraction is equal to $1$ as $\Delta_e = 1- x_e^*$.
Hence, it holds that
$x^U_{i_e, v} = x_{i_e, v} + \Delta_e \ge (1-\epsilon)x_e^* + \Delta_e \ge (1-\epsilon)\cdot(x_e^* + \Delta_e) = 1-\epsilon$.
Meanwhile, since the total out-shares of agent $v$ is $x_{v, i_e}$, which is at most $1$ and does not exceed the threshold of \Cref{eqn:ppad_cost_of_vertex}, her incurred cost is zero.
Thus, the utility of agent $v$ in $\bm{x}^U$, $u_{v}(\bm{x}^U)$ is at least 
\begin{align*}
u_v(\bm{x}^U)\ge (1-\epsilon)\cdot (d+H + 1)\,.
\end{align*}
On the other hand, we can observe that the utility of agent $v$ in exchange $\bm{x}$ is at most 
\begin{align*}
u_v(\bm{x}) & =p_{e^1}(\bm{x}) + p_{e^2}(\bm{x}) + p_{e^3}(\bm{x})   \\
& = f^-(e_1)\cdot (d+H+1) + f^-(e_2)\cdot (d+H) + f^-(e_3)\cdot H\\
& \le \frac{1+\epsilon}{1-\epsilon}\cdot \left(f(e_1)\cdot (d+H+1) + f(e_2)\cdot (d+H) + f(e_3)\cdot H\right)\\
& \le \frac{1+\epsilon}{1-\epsilon}\cdot \left(f(e_1)\cdot (d+H+1) + (1-f(e_1))\cdot (d+H))\right) \\
& = \frac{1+\epsilon}{1-\epsilon}\cdot \left(f(e_1) + d+H\right)\,.
\end{align*}
Since $u_v(\bm{x}) \ge u_v(\bm{x}^U)$, it follows that 
\begin{align*}
f(e_1) \ge \frac{(1-\epsilon)^2}{1+\epsilon} - (d+H)\cdot\left(1 - \frac{(1-\epsilon)^2}{1+\epsilon}\right) \ge 1 - 4\epsilon - 3\epsilon\cdot(d+H) \ge 1 - 10\epsilon^{1/2}\,.
\end{align*}

\paragraph{Case II: $e$ is $v$'s second favorite edge.}
In this case, the utility of agent $v$ in data exchange $\bm{x}^U$ is at least $u_{v}(\bm{x}^U) \ge (d+H)\cdot (1-\epsilon)$.
Since $f(e_1) + f(e_2) + f(e_3) \le 1$, we have $f(e_3) \le 1 - f(e_1) - f(e_2)$.
Thus, the utility of agent $v$ in $\bm{x}$ is at most
\begin{align*}
u_v(\bm{x}) & \le f^-(e_1)\cdot (d+H+1) + f^-(e_2)\cdot (d+H) + f^-(e_3)\cdot H \\
& \le \frac{1+\epsilon}{1-\epsilon}\cdot \left(f(e_1)\cdot (d+H+1) + f(e_2)\cdot (d+H) + f(e_3)\cdot H\right)\\
& \le \frac{1+\epsilon}{1-\epsilon}\cdot \left(f(e_1)\cdot (d+H+1) + f(e_2)\cdot (d+H) + (1-f(e_1)-f(e_2))\cdot H\right)\\
& \le \frac{1+\epsilon}{1-\epsilon} \cdot \left((f(e_1) + f(e_2))\cdot (d+1) + H\right)
\end{align*}
For this reason, we further derive that 
\begin{align*}
f(e_1) + f(e_2) & \ge \frac{d}{d+1}\cdot \frac{(1-\epsilon)^2}{1+\epsilon} - \left(1 - \frac{(1-\epsilon)^2}{1+\epsilon}\right)\cdot H \\
& \ge  (1-\epsilon^{1/4})\cdot (1-3\epsilon) - 3\epsilon\cdot H \ge 1- 7\epsilon^{1/4}\,.
\end{align*}

\paragraph{Case III: $e$ is $v$'s least favorite edge.} In this case, we know the utility of agent $v$ in $\bm{x}^U$ is at least $u_{v_e}(\bm{x}^U) \ge H\cdot (1-\epsilon)$.
Meanwhile, the utility of agent $v_e$ in $\bm{x}$ is at most
\begin{align*}
u_v(\bm{x}) & \le (f^-(e_1) + f^-(e_2) + f^-(e_3))\cdot (d+H + 1) \\
& \le \frac{1+\epsilon}{1-\epsilon}\cdot \left(f(e_1) + f(e_2) + f(e_3)\right)\cdot (d+H + 1),
\end{align*}
which further implies that 
\begin{align*}
f(e_1) + f(e_2) + f(e_3) & \ge \frac{(1-\epsilon)^2}{1+\epsilon}\cdot \frac{H}{d+H+1} \ge (1-3\epsilon)\cdot (1 - \epsilon^{1/4}) \ge 1- 4\epsilon^{1/4}\,.
\end{align*}
We can observe that, in all three cases, the total flow on edges where $v$ has weakly better preference than $e$ is at least $1-10\epsilon^{1/4}$, which is larger than $1-1/2^{20\abs{V}^4}$ by the setting of $\epsilon$, which indicates that the hypergraph matching is stable.

\section{The Pivoting Algorithm and Empirical Results}\label{sec:algorithm}
In this section, we adapt the pivoting algorithm~\cite{scarf1967core} to find a core-stable data exchange under sufficient conditions -- when payoff functions are concave, cost functions are convex, and coalitions are constrained to be of constant size.
As we mentioned in \Cref{sec:existence_of_core_pos}, our game is balanced and therefore we can get a pivoting algorithm (PA) following the proof of existence of core for a balanced game theorem in Scarf's theorem~\cite{scarf1967core}.
It is worth noting that, when the number of agents is not constant, even verifying whether a given data exchange is core-stable is \coNP-complete, and we defer the proof to \Cref{app:hardness_of_verify_core}.

\subsection{The Pivoting Algorithm}
\label{sec:coalition_utility_matrix}
We first define two matrices: \emph{coalition matrix} $\mathbf{C}$ and the \emph{utility matrix} $\mathbf{U}$, both with dimension $n \times m$. 
Given a coalition $S \subseteq [n]$, we aim to identify all possible utility vectors resulting from data exchanges within $S$. Since this set may be infinite, we instead consider a discretized approximation of the utility space.
Suppose $\bm{v}\in V(S)$. 
We create the $k$th column in the coalition matrix $\mathbf{C}$:  $C_{i,k} = 1$ if $i \in S$ and $0$ otherwise, which is the characteristic vector of the coalition. 
Meanwhile, we insert a column into the utility matrix $\mathbf{U}$ at the same coordinate: $u_{i,k} = M$ for $i \notin S$ and $u_{i, k} = v_i$ for $i \in S$ where $M$ is a very large number\footnote{
The intuition behind choosing a large $M$ is that we do not want an agent outside the coalition block deviation to this coalition. 
In fact, we assign slightly different $M$ to these entries to make these entries non-identical.
}. 
Note that to compute the utility matrix, we need to find all achievable utility values in the grid.
To this end, we assume that the payoff of every agent is uniformly bounded\footnote{By scaling down the payoff functions, our hardness results still hold.} by $B$.
For a fixed $\epsilon > 0$, we create a grid of size $\epsilon$, so utility of every agent will be an integer multiple of $\epsilon$. 
Given a coalition of size $k$, there are at most $(B/\epsilon)^k$ possible utility vectors. 
Using the Ellipsoid method, we can approximate the feasibility of each in polynomial time, as shown in the following claim.

\begin{restatable}{claim}{claimTimeDetermineCore}
\label{claim:time_complexity_of_determine_core}
Given a coalition $S \subseteq [n]$ and a utility profile $(v_i)_{i \in S}$, we can compute in polynomial time either a data exchange $\bm{x}$ such that $u_i(\bm{x}) \ge v_i - \epsilon$ for all $i \in S$  or return that no data exchange satisfies $u_i(\bm{x}) \ge v_i$ for all $i \in S$.
\end{restatable}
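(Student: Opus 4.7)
The plan is to cast the problem as a convex feasibility problem and solve it via the Ellipsoid method, equipped with a separation oracle built from the value and supergradient queries. Define
\[
K \;=\; \{\bm{x} \in [0,1]^{S\times S} : u_i(\bm{x}) \ge v_i \text{ for all } i \in S\}.
\]
Since $p_i$ is concave and $c_i$ is convex, each $u_i = p_i - c_i$ is concave, so every constraint $u_i(\bm{x}) \ge v_i$ defines a convex superlevel set; intersected with the box $[0,1]^{S\times S}$, this makes $K$ a closed convex bounded subset of $\mathbb{R}^{|S|^2}$, with $|S|^2 \le n^2$ variables and enclosed in an outer Euclidean ball of radius $O(|S|)$.

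Next I implement a weak separation oracle for $K$. Given a candidate $\bm{x}$, box violations are handled by axis-aligned separating hyperplanes. For every $i \in S$, I issue a value query $\mU(i,\bm{x})$; if $u_i(\bm{x}) \ge v_i - \delta$ for all $i$, the point is declared approximately feasible. Otherwise, pick any $i$ with $u_i(\bm{x}) < v_i - \delta$ and issue a supergradient query $\mgU(i,\bm{x})$ to obtain $g$. Concavity of $u_i$ yields, for every $\bm{y} \in K$,
\[
g^\top(\bm{y}-\bm{x}) \;\ge\; u_i(\bm{y}) - u_i(\bm{x}) \;\ge\; v_i - u_i(\bm{x}) \;>\; \delta,
\]
so the halfspace $\{\bm{y} : g^\top(\bm{y}-\bm{x}) \le 0\}$ strictly separates $\bm{x}$ from $K$. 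The degenerate case $g = 0$ forces $\bm{x}$ to be a global maximizer of $u_i$, in which case $\max u_i < v_i - \delta$ and I can immediately report infeasibility.

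Feeding this oracle into the standard Ellipsoid algorithm for weak convex feasibility, with tolerance $\delta \in \Theta(\epsilon/\mathrm{poly}(n))$, terminates in polynomial time and returns either (i) a point $\bm{x}$ satisfying $u_i(\bm{x}) \ge v_i - \epsilon$ for every $i \in S$, or (ii) a certificate that $K$ is empty -- which is exactly the dichotomy the claim asks for.

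The main technicality I expect is calibrating $\delta$ and bounding supergradient norms so that the Ellipsoid iteration count stays polynomial. Each $u_i$ is concave on the bounded domain $[0,1]^{S\times S}$ and takes values in a bounded range (payoffs are uniformly bounded by $B$, costs are finite on a compact set), so any supergradient returned by $\mgU$ has norm polynomial in $|S|$ and these magnitudes. Combined with the $O(|S|)$ outer radius and the usual volume-shrinking analysis of Ellipsoid, this yields the polynomial iteration guarantee and hence the claim.
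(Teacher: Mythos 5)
Your proof is correct, but it takes a genuinely different route from the paper's. You formulate the problem as convex \emph{feasibility} over $K=\{\bm{x}\in[0,1]^{S\times S}: u_i(\bm{x})\ge v_i\ \forall i\in S\}$ and run the Ellipsoid method with a separation oracle built from value and supergradient queries, whereas the paper reduces to a single concave \emph{optimization}: it caps each utility at its target, $u_i'(\bm{x})=\min\{u_i(\bm{x}),v_i\}$, maximizes $\sum_{i\in S}u_i'(\bm{x})$ over the box to additive error $\epsilon$, and outputs the point iff the value is at least $\sum_i v_i-\epsilon$. The capping trick is what makes the paper's argument clean: since each term is bounded above by $v_i$, a near-optimal \emph{sum} forces every individual term to be within $\epsilon$ of $v_i$, so the per-agent guarantee and the infeasibility certificate both fall out of a single scalar comparison, with no volume argument needed. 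Your route gets the per-agent guarantee directly from the stopping rule, but it pushes the work into the emptiness certificate: the volume-shrinking analysis only certifies that $K$ contains no ball of a given radius, and to upgrade this to ``$K=\emptyset$'' you must show that $K\neq\emptyset$ implies the relaxed set $K_\delta=\{\bm{x}: u_i(\bm{x})\ge v_i-\delta\ \forall i\}$ (which your oracle never cuts) has volume bounded below, via a Lipschitz bound on the $u_i$ derived from the supergradient norms and the fact that at least a $2^{-|S|^2}$ fraction of a small ball around a feasible point stays inside the box. You correctly flag this as the remaining technicality but do not carry it out; it is routine given bounded supergradients, so I would call this a complete sketch rather than a gap, though the paper's capping reduction avoids the issue entirely. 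Your handling of the separation step itself (the supergradient inequality $g^{\top}(\bm{y}-\bm{x})\ge u_i(\bm{y})-u_i(\bm{x})$ and the degenerate case $g=0$) is correct.
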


\begin{proof}
For each agent $i \in S$, we create a new utility function $u_i'(\cdot)$ that caps the original utility function at $v_i$: $u'_i(\bm{x}) = \min \{u_i(\bm{x}), v_i\}$. 
Next, consider the following concave program. 
$$
\begin{aligned}
\max \quad & \sum_{i\in S} u_i'(\bm{x}) \\
\text{subject to} \quad & 0 \le x_{i, j} \le 1, \quad \forall i, j\in S 
\end{aligned}
$$
Let $\mathcal{K} = \{\bm{x}: x_{i, j} \in [0, 1], \forall i, j\in S\}$ be the feasible domain.
We then apply the Ellipsoid method to find an approximate solution with a distance of at most $\epsilon$ to the optimal solution.
By \cite[Theorem 13.1]{vishnoi2021algorithms}, it suffices to find a first-order oracle for $\sum_{i\in S}u_i'(\bm{x})$.
As $u_i'(\cdot)$ is the minimum of $u_i(\cdot)$ and $v_i$, its supergradient can be answered in $O(1)$ time using \cite[Theorem 1.13]{shor2012minimization}.

Denote the solution found by the Ellipsoid method by $\bm{x}$ and the optimal solution by $\bm{x}^*$.
Then 
\begin{align*}
 \sum_{i\in S}u'(\bm{x}) \ge \sum_{i\in S}u'(\bm{x}^*) - \epsilon.   
\end{align*}
We output $\bm{x}$ if $\sum_{i\in S}u_i'(\bm{x}) \ge \sum_{i\in S}v_i - \epsilon$ and \NO otherwise.
Below, we demonstrate the desirable property of the output answer.
If a data exchange $\bm{x}$ is returned, then $\sum_{i\in S}u_i'(\bm{x}) \ge \sum_{i\in S}v_i  -\epsilon$.
Since $u_i'(\bm{x}) \le v_i$ holds for every $i\in S$, then $u_i'(\bm{x}) \ge  v_i -\epsilon$, which means $u_i(\bm{x}) \ge v_i-\epsilon$.
If the output is \NO, we prove by contradiction that no data exchange can meet $u_i(\bm{x}) \ge v_i$ for every agent $i$.
If such data exchange $\bm{x}$ exists, then the optimal solution is at least $\sum_{i\in S} v_i$.
Hence, as the approximation ratio is set as $\epsilon$, the found solution will have an objective value of at least $\sum_{i\in S}v_i -\epsilon$, which violates the assumption that the output answer is \NO.
Therefore, the claim is proved.
\end{proof}

After constructing the two matrices, we reduce finding core-stable data exchange to solving Scarf's Lemma (as elaborated in \Cref{app:scarf_lem}).
Then we apply PA. 
It maintains two evolving bases: a \emph{cardinal basis} for $\mathbf{C}$ and an \emph{ordinal basis} for $\mathbf{U}$, where each basis consists of a set of column indices.
If the two bases differ, PA respectively performs the \emph{cardinal pivot} step and the \emph{ordinal pivot} step to adjust the two bases.
These bases evolve iteratively until equal. 
As the total search space is finite, PA terminates in finite time.
We defer a detailed description to \Cref{app:pa}.

\begin{figure*}[t]
    \centering
  \begin{subfigure}{0.35\columnwidth}
    \centering
    \includegraphics[width=1\textwidth]{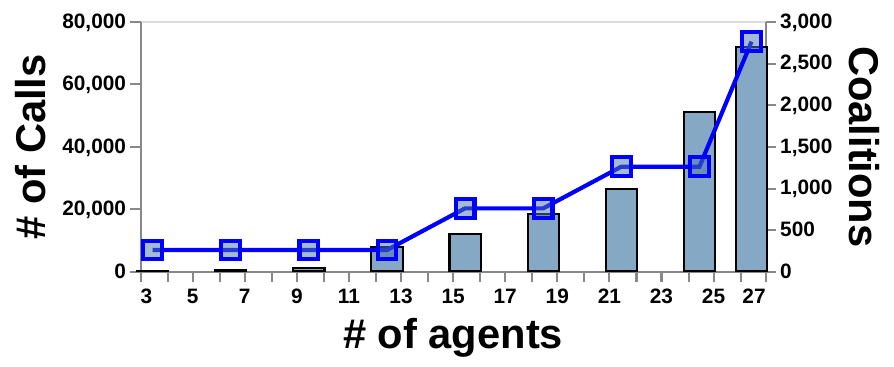}
    \caption{Calls and Coalitions}
    \label{fig:cost_of_coalition_matrix}
  \end{subfigure}
  \begin{subfigure}{0.33\columnwidth}
    \centering
    \includegraphics[width=0.9\textwidth]{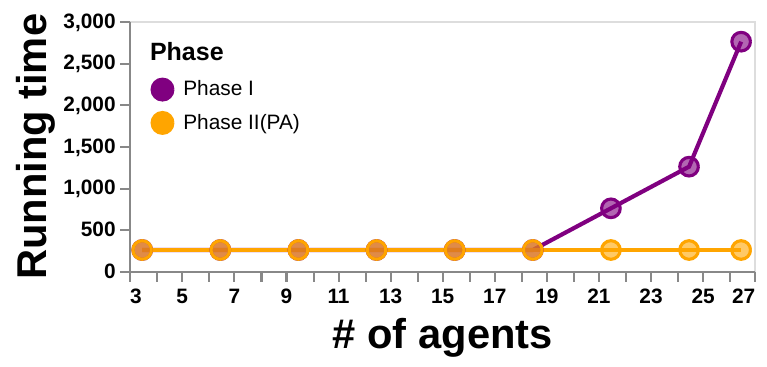}
    \caption{Running time}
    \label{fig:running_time}
  \end{subfigure}
  \begin{subfigure}{0.30\columnwidth}
    \centering
    \includegraphics[width=1\textwidth]{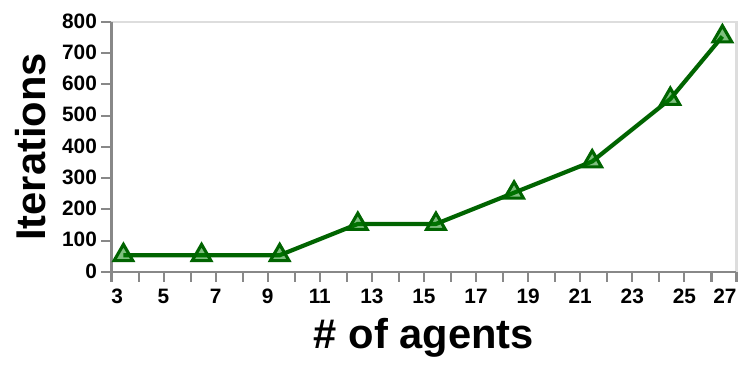}
    \caption{Number of iterations}
    \label{fig:num_of_iter}
  \end{subfigure}
  \caption{Performance of PA on the constructed data exchange instance.}
  \label{fig:performance_of_PA}

\end{figure*}

\subsection{Empirical Results}
Next, we empirically test the performance of PA on real-world datasets.
We construct a data exchange instance using the same road map dataset~\cite{roadnetwork-kaggle} as the previous work~\cite{10.1145/3589334.3645364}.
The graph contains $4.4$K nodes and $9.6$K edges.
The data exchange instance is constructed as follows: each agent $i$ corresponds to a path $P_i$.
The agents aim to estimate the delays on their roads. 
Assume each agent $i$ possesses $z_e^{i}$ random samples of the delay for every edge $e$ included in her path with identical variance $\sigma_e^2$. 
The payoff of agent $i$ is determined by the sum of the variance of the mean of the random samples of every edge that path $P_i$ includes. 
For every edge $e \in P_i$, the initial variance of the mean is given by $\sigma_e^2/z_e^{i}$.
Hence, the initial sum of variance of agent $i$ is given by $\sum_{e\in P_i} \sigma_e^2/ z_e^{i}$.
In an exchange $\bm{x}$, $x_{i,j}$ represents agent $i$ will shares $x_{i, j}\cdot z_e^{i}$ data samples with agent $j$ for every $e\in P_i \cap P_j$.  
The payoff function $p_i(\cdot)$ of agent $i$ is induced by the decrease of the sum of variance, as follows:
$$
p_i(\bm{x}) = \sum_{e\in P_i} \frac{\sigma_e^2}{z_e^i} - \sum_{e\in P_i} \frac{\sigma_e^2}{z_e^i + \sum_{j: e\in P_j, j\neq i}  x_{j,i}\cdot z_e^j}\,.
$$
Meanwhile, we assume that every agent $i$ incurs a cost of $c_i(\bm{x}) = \mu_i\sum_{e\in P_i} \mu_e\cdot z_e^i \sum_{j: e\in P_j, j\neq i} x_{i, j}$,
where $\mu_e$ represents the cost of sharing per random sample of edge $e$

\paragraph{Simulation Setup.}
We run PA on the above data exchange instance to find a $0.1$-core-stable exchange, fixing each coalition size to $3$, and varying the number of agents as $n = 3i$ for $i \in [9]$.
We adopt the same method as~\cite{10.1145/3589334.3645364} to sample an agent $i$ from the road map: Sample a random node $u$ and then sample a length $t$ uniformly at random between $5$ and the depth of the BFS tree rooted at $u$.
Then we sample another node uniformly at random from all nodes within layer $t$, choose the shortest path from $u$ to $v$, and assign it to agent $i$.
The variance of each edge, $\sigma_e$, is a random number in range $[0, 1]$, and $\mu_e$ is set as $(1-\sigma_e)\cdot 10^{-3}$.
In addition, every agent $i$ starts with $z_e^i$ random data samples for every edge in her path, where $z_e^i$ is chosen uniformly at random in the range $[4,9]$. All experiments were run on a MacBook Pro with an Apple M3 Pro CPU and 18 GB RAM. The implementation uses Python 3.12 and SciPy~\cite{2020SciPy-NMeth} (v1.13.0) for concave optimization.

\paragraph{Performance.}
We evaluate PA’s performance on the constructed data exchange instance using three metrics: (i) number of coalitions formed; (ii) number of concave optimization calls; and (iii) runtime for coalition matrix construction and pivoting. For each agent size, we repeat the construction and PA execution 20 times and report the mean. Results are shown in~\Cref{fig:performance_of_PA}, with the first subfigure displaying optimization calls and coalition sizes.
Both reach the maximum when $n= 27$, where the number of calls and size of coalitions are respectively $7.17\times 10^4$ and $2,914$.
In addition, \Cref{fig:running_time} shows the time of constructing the coalition matrix (Phase I) and PA (Phase II).
The most (least) expensive sample for $n=27$ includes $9,198$ (resp. $533$) possible coalitions, and the time of constructing the coalition matrix is $9,821.54$ (resp. $151.36$) seconds, while the running time of the pivoting algorithm is $118.08$ (resp. $2.36$) seconds.
The time of constructing the coalition matrix grows significantly in polynomial time, while the time of PA is much smaller and less affected by the number of agents.
On average, PA terminates in $7.36$ seconds, ranging from $0.001$ seconds for $n=3$ to $38.37$ seconds for $n=27$. The third subfigure (\Cref{fig:num_of_iter}) shows the number of iterations until termination. We observe a modest, seemingly quadratic growth in iterations with the number of agents. PA takes an average of $238.3$ iterations and up to $1434$ in the worst case.

\section{Discussion and Conclusion}
We introduced a general model of data exchange and studied the existence and computation of core-stable solutions within it. We identified interesting sufficient conditions for the existence of core-stable exchanges. We outline computational barriers and end with an algorithm that can be expected to be efficient in practice. We see this paper as an initiation for studying stability in data exchange economies. Currently, our model is very general in its assumptions, and we expect more efficient algorithms for some special cases, e.g., (i) accuracy functions from Gaussian inference or PAC learning (usually $p_i(\bm x_{-i} = 1- 1/({\sum_j \tau_{ji}x_{ji}})$), and cost functions being linear, or (ii) when there are only constantly many payoff and cost functions, corresponding to fixed types of agents.

Another promising direction is to explore additional desiderata beyond stability. For instance,~\cite{BhaskaraGIKMS24, ACGM'24} investigate exchanges that satisfy both core-stability and a fairness criterion, where each agent’s final payoff is proportional to their contribution to the payoffs of others. However, their model does not incorporate costs for data sharing-- making core-stability alone trivial to achieve (set all $x_{ij} = 1$). An intriguing question is whether one can compute exchanges that are both core-stable and fair when agents incur costs for data-sharing. In essence, this involves integrating our cost-aware framework with the fairness-oriented approach of~\cite{BhaskaraGIKMS24, ACGM'24}.

Finally, we believe that questions of stability merit investigation in more general models—particularly those involving \emph{externalities}. Such externalities arise naturally in competitive environments, where collaboration between two agents may negatively impact the utility of others (e.g., collaborating agents can capture a greater share of the market). In these settings, we suspect that core stability is unlikely to hold. However, it would be interesting to explore whether suitable mechanisms—such as structured rules for splitting the joint utility gains—can facilitate the existence of core-stable data exchanges despite the presence of externalities.

\section*{Acknowledgements}
The research of Bhaskar Ray Chaudhury and Jiaxin Song is supported by an NSF CAREER grant CCF No. 2441580.
We also thank the anonymous reviewers for their helpful comments and suggestions.

\appendix
\section{Existence of Core-Stability}
\label{app:existence_of_core_stable}
In \Cref{sec:existence_of_core_pos}, we have shown that a core-stable data exchange always exists when (i) the payoff functions are concave and (ii) the cost functions are convex. 
We refer to the foregoing two conditions as \emph{sufficient conditions}. 
This section gives the non-existence of core-stable data exchange if one of the two sufficient conditions is unsatisfied, even for three agents.

We first provide more details of the construction discussed in \Cref{sec:non_existence}.
By the construction of the payoff function, one agent receives a positive payoff only when she receives a share larger than $1-\epsilon$ from another agent.
As the cost function is monotone, we can reduce a fraction $x_{i,j}$ to $0$ if it is smaller than $1-\epsilon$, which does not decrease the utility of any agent and hence, does not affect the core stability of the data exchange $\bm{x}$.
Below, we assume that $x_{i,j}$ is either $0$ or larger than $1-\epsilon$.
Observe that the denominator of the third term of the cost function, $\epsilon$, is assumed to be a sufficiently small constant, such that if $x_{i, \alpha}$ and $x_{i, \beta}$ are both larger than $1-\epsilon$, the cost will be no less than $(1-\epsilon)^2/\epsilon$, which is pretty larger than the maximum payoff agent $i$ can receive (i.e., $p_{i, \alpha} + p_{i, \beta}$). 
Hence, agent $i$ would have the incentive to deviate, which implies that every agent cannot have positive shares with both two other agents. 
\begin{table}[h]
\scriptsize
\centering
\setlength{\tabcolsep}{1.5pt}
\begin{tabular}{|c|c|c|cccccc|ccc|}
\hline
    \textbf{Case} & \textbf{Exchange} & \textbf{Agents} & $x_{1,2}$ & $x_{1,3}$ & $x_{2,1}$ & $x_{2,3}$ & $x_{3,1}$ & $x_{3,2}$ & $u_1$ & $u_2$ & $u_3$ \\
    \hline
    
    \hline
    I & $\bm{x}$ & $\{1, 2, 3\}$ & $0$ & $0$ & $0$& $0$ & $0$ & $0$ & $0$ & $0$ & $0$ \\
    & $\bm{x}^U$ & $\{1, 2\}$ & $1$ & $0$ & $1$ & $0$ & $0$ & $0$ & \cellcolor{gray!30} $0.25$ & \cellcolor{gray!30} $0.125$ & - \\
    \hline
    II & $\bm{x}$ & $\{1,2, 3\}$ & $\ge 1-\epsilon$ & $0$ & $\ge 1-\epsilon$ & $0$ & $0$ & $0$ & $<0.25 + \epsilon$ & $<0.125 + \epsilon$ & $0$ \\
        & $\bm{x}^U$ & $\{2, 3\}$ & $0$ & $0$ & $0$ & $1$ & $0$ & $1$ & - & \cellcolor{gray!30} $0.25$ & \cellcolor{gray!30} $0.125$ \\
    \hline 
    II & $\bm{x}$ & $\{1,2, 3\}$ & $0$ & $\ge 1-\epsilon$ & $0$ & $0$ & $\ge 1-\epsilon$ & $0$ & $<0.125 + \epsilon$ & $0$ & $<0.25 + \epsilon$ \\
        & $\bm{x}^U$ & $\{1, 2\}$ & $1$ & $0$ & $1$ & $0$ & $0$ & $1$ & \cellcolor{gray!30} $0.25$ & \cellcolor{gray!30} $0.125$ & - \\
    \hline 
    II & $\bm{x}$ & $\{1,2, 3\}$ & $0$ & $0$ & $0$ & $\ge 1-\epsilon$ & $0$ & $\ge 1-\epsilon$ & $0$ & $<0.25+\epsilon$ & $<0.125 + \epsilon$ \\
        & $\bm{x}^U$ & $\{1, 3\}$ & $1$ & $0$ & $1$ & $0$ & $0$ & $1$ & \cellcolor{gray!30} $0.125$ & - & \cellcolor{gray!30} $0.25$\\
    \hline 
    III & $\bm{x}$ & $\{1,2, 3\}$ & $\ge 1-\epsilon$ & $0$ & $0$ & $\ge 1-\epsilon$ & $\ge 1-\epsilon$ & $0$ & $<0.5+\epsilon$ & $<0.125 + \epsilon$ & $<-0.125 + \epsilon$ \\
    & $\bm{x}^U$ & $\{3\}$ & $0$ & $0$ & $0$ & $0$ & $0$ & $0$ & - & - & \cellcolor{gray!30} $0$ \\
    \hline 
    III & $\bm{x}$ & $\{1,2, 3\}$ & $0$ & $\ge 1-\epsilon$ & $\ge 1-\epsilon$ & $0$ & $0$ & $\ge 1-\epsilon$ & $< -0.125+\epsilon$ & $<0.25 + \epsilon$ & $<0.5 + \epsilon$ \\
    & $\bm{x}^U$ & $\{1\}$ & $0$ & $0$ & $0$ & $0$ & $0$ & $0$ & \cellcolor{gray!30} $0$ & - & -  \\
    \hline 
\end{tabular}
\caption{Deviation $(U, \bm{x}^U)$ for each case of data exchange $\bm{x}$. 
The row with $\bm{x}$ is the original exchange and the row with $\bm{x}^U$ is the deviation.
Cells in gray are the utilities of agents in the coalition $U$.}
\label{tab:counter_example}
\end{table}

Next, we construct a simple graph $G = (V, E)$ with three vertices, with each of them corresponding to an agent.
There is an edge from $i$ to $j$ if agent $i$ shares more than $1-\epsilon$ unit of data with agent $j$.
Therefore, every agent has an out-degree of at most $1$ and there is no source in the graph.
It suffices to discuss the following three cases.
\begin{itemize}[leftmargin=0.5cm]
\item {\bf Case I:} Nobody shares anything, i.e., $x_{i, j} =0$ for any $i, j\in [n]$.
\item {\bf Case II:} Only two agents are exchanging data. 
Thus, there are three possible subcases: (i) $E = \{ (1, 2), (2, 1)\}$; (ii) $E = \{ (1, 3), (3,1)\}$, or; (iii) $E = \{ (2, 3), (3,2)\}$.
\item {\bf Case III:} All three agents are exchanging. 
By the properties of the graph, there are only two possible subcases: (i) $E = \{ (1, 2), (2, 3), (3,1)\}$ or; (ii) $E = \{ (1, 3), (3, 2), (2,1)\}$.
\end{itemize} 
However, as shown in \Cref{tab:counter_example}, for each case, there exists a coalition $U$ and an exchange $\bm{x}^U$ such that every agent in $U$ can improve her utility.
In particular, for the first case, agent 1 and agent 2 have the incentive to form a coalition.
For either of the second cases, there exists one agent of the two having the incentive to form a coalition with the other agent.
For either of the third cases, there always exists one agent with a negative utility who can improve her utility by deviating alone.
Therefore, we can conclude that it is impossible to find a core-stable exchange in both the three cases, which concludes the non-existence.

\begin{remark}
It can be noticed that, in \Cref{tab:counter_example}, each agent in the coalition $U$ improves her utility by at least $0.125-\epsilon$ in each case, which even proves a stronger result -- an $\alpha$-core-stable does not always exist in an arbitrary instance even when $\alpha$ is a positive constant number.
\end{remark}

Next, we show that when either of the two sufficient conditions -- concavity of payoff functions or convexity of cost functions -- is relaxed, core-stable data exchange may not always exist, even when there are only three agents.

\begin{restatable}{proposition}{lemNonExistence}
\label{lem:non-existence}
Even if both payoff and cost functions are continuous, core-stable data exchange does not always exist if either the concavity of payoff or the convexity of cost is relaxed.
\end{restatable}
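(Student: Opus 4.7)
The plan is to exhibit two three-agent instances, one for each direction of the proposition, following the blueprint of the counter-example in \Cref{sec:non_existence}. In each case the verification reduces to the same graph case-analysis summarized in \Cref{tab:counter_example}: after a suitable discretization, the sharing pattern among the three agents collapses onto one of three graph classes (empty, a single bidirectional pair, or a directed 3-cycle), and we exhibit a blocking deviation in every class.

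For the direction that relaxes only concavity of the payoff, I would reuse the payoff functions $p_i$ from \Cref{sec:non_existence} — which are sums of thresholded terms $(x_{j,i}-(1-\epsilon))_+$, hence convex and thus non-concave — and replace the bilinear penalty $\frac{1}{\epsilon}x_{i,\alpha}x_{i,\beta}$ in the cost by a convex surrogate with the same qualitative effect, e.g.\ $\frac{1}{\epsilon}(x_{i,\alpha}+x_{i,\beta}-1)_{+}^{2}$. This is convex (a non-decreasing convex composition on an affine map), still monotone, and still makes it prohibitively expensive to share more than roughly one unit in total. Because the payoff side is untouched, the discretization step — that we may assume each $x_{i,j}\in\{0\}\cup[1-\epsilon,1]$ — goes through verbatim, and the remaining case analysis is identical to \Cref{tab:counter_example} after a small reparameterization of $p_{\alpha,i},p_{\beta,i},c_{i,\alpha},c_{i,\beta}$.

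For the direction that relaxes only convexity of the cost, the construction is more delicate because the threshold-style payoff is itself non-concave, so it has to be replaced. My plan is to use a piecewise-linear concave saturation such as $p_i(x_{\alpha,i},x_{\beta,i})=p_{\alpha,i}\min(x_{\alpha,i}/(1-\epsilon),1)+p_{\beta,i}\min(x_{\beta,i}/(1-\epsilon),1)$, which is concave as a minimum of affine functions and still rewards "almost full" sharing strongly while giving vanishing marginal payoff above the threshold. The cost can then be taken as $c_i=c_{i,\alpha}x_{i,\alpha}+c_{i,\beta}x_{i,\beta}+\frac{1}{\epsilon}\min(x_{i,\alpha},x_{i,\beta})$, which is concave (a minimum of affine functions added to a linear part), hence non-convex, and still explodes whenever both out-shares are simultaneously large. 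With $p_{\alpha,i},p_{\beta,i},c_{i,\alpha},c_{i,\beta}$ tuned as in \Cref{sec:non_existence}, the same three-case reasoning yields a blocking coalition for every exchange.

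The main obstacle will be the second direction. With concave payoffs and concave (non-convex) costs, I cannot a priori restrict $x_{i,j}$ to the near-binary set, since shrinking a small $x_{j,i}$ to zero now strictly hurts agent $j$, and the economy-of-scale in the concave cost can even make intermediate sharings locally attractive. The technical work is therefore to prove a relaxed structural lemma — that in any candidate core-stable exchange each agent effectively concentrates its out-flow on at most one partner up to an additive $O(\epsilon)$ slack — by combining the $\min(x_{i,\alpha},x_{i,\beta})$ penalty with the saturation kink of the payoff. Once such a relaxed discretization is established, the graph-based enumeration together with the constant margins ($\ge 0.125-\epsilon$) exhibited in \Cref{tab:counter_example} carries over and yields the desired non-existence.
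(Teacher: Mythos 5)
Your first construction (non-concave payoff, convex cost) is essentially the paper's: the appendix keeps the thresholded payoffs $(x_{j,i}-(1-\epsilon))_+/\epsilon$ and uses the convex cost $c_{i,\alpha}x_{i,\alpha}+c_{i,\beta}x_{i,\beta}+\tfrac1\epsilon(x_{i,\alpha}+x_{i,\beta}-1)_+$; your squared surrogate plays the identical role, and because the threshold payoffs are untouched the discretization to the cases of \Cref{tab:counter_example} goes through as you say. That half is fine.

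The second construction (concave payoff, non-convex cost) has a genuine gap, and it is exactly the one you flag but do not close. Your per-edge costs are \emph{linear} in each share (plus the $\tfrac1\epsilon\min(x_{i,\alpha},x_{i,\beta})$ term), and your payoffs are linear up to saturation. So within a single pair, both the amount sent and the amount received scale utilities continuously; the two transfer levels act like continuous side payments that a pair can tune to defend itself against the third agent. This is precisely the mechanism by which continuous versions of odd-cycle roommate instances can acquire a core that the discrete instance lacks, so the instability of the six configurations in \Cref{tab:counter_example} does not transfer: you must rule out a two-parameter continuum of bilateral states (and a three-parameter continuum of fractional cycles), and the ``concentration lemma'' you propose, even if proved, only restricts \emph{who} each agent sends to, not \emph{how much} — which is where a core point could hide. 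The paper sidesteps this entirely by keeping the payoff genuinely linear and building the non-convexity into a concave cost that is \emph{constant in each coordinate on $[\epsilon,1]$} along the axes (the four-plane construction in \Cref{app:existence_of_core_stable}): any non-trivial share already incurs the full cost $c_{i,j}$, so an agent loses nothing by pushing her share to $1$ and the receiver only gains, which re-discretizes the instance and lets \Cref{tab:counter_example} apply verbatim. To repair your version you would either need to carry out the full continuous case analysis (verifying that for every $(s,t)$ in a bilateral exchange some coalition blocks, with the specific numbers of Figure~\ref{fig:acc_cost_for_one_unit}), or replace your linear per-edge cost with a saturating concave one in the spirit of the paper's plane construction, at which point you have reproduced the paper's proof.
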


\subsection{Linear Payoff and Concave Cost}
\label{sec:conv_acc_concave_cost}

We have already provided an instance where core-stable data exchange does not exist when both the concavity of payoff and the convexity of cost functions are removed. 
Next, we show it also holds even when there are three agents and only the convexity of cost is removed.

\begin{figure}[t]
    \centering
    \vspace{-20mm}
    \includegraphics[width=0.8\linewidth]{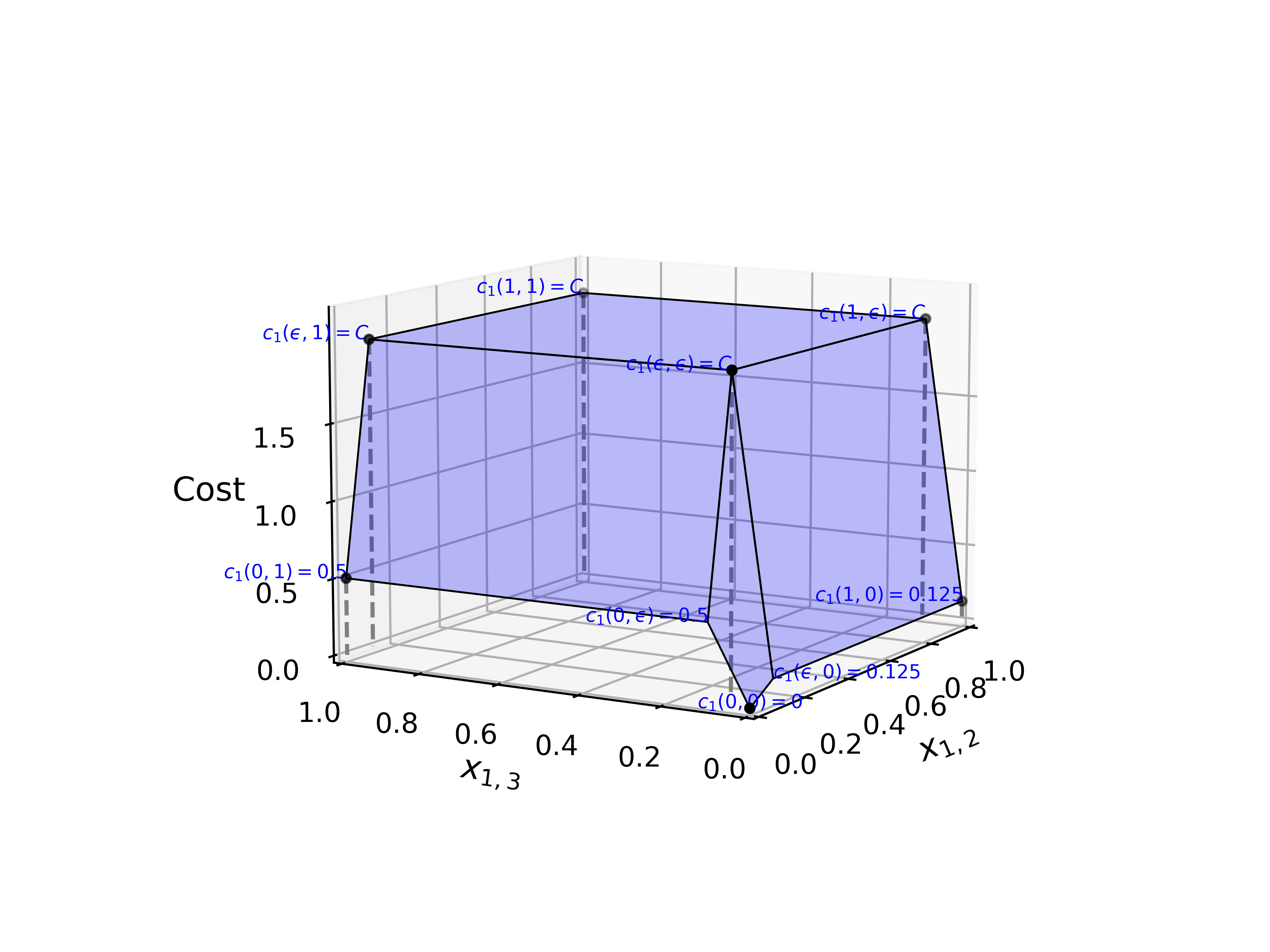}
    \vspace{-15mm}
    \caption{Cost function $c_1(\cdot)$ of agent $1$}
    \label{fig:concave_cost_c1}
\end{figure}

Let the payoff functions of the three agents $p_1(\cdot), p_2(\cdot), p_3(\cdot)$ be defined as linear functions.
In particular, $p_i(\bm{x}) = p_{i,\alpha}\cdot x_{i,\alpha} + p_{i,\beta}\cdot x_{i,\beta}$, where $\{\alpha, \beta\} = [3]\setminus \{i\}$ and $p_{i, j}$ is defined by \Cref{fig:acc_cost_for_one_unit}. 
The cost functions are defined as follows. 
Fix a sufficiently large constant $C >0$.
We take agent $1$ as an example and start by defining her costs for some particular sharing $(x_{1, 2}, x_{1,3})$:
$$ c_1(x_{1, 2}, x_{1, 3})=\left\{
\begin{array}{lcl}
0 &  & \text{if } (x_{1,2}, x_{1,3}) = (0, 0) \\
c_{1,3}/\epsilon\cdot x_{1,3}&  &\text{if } x_{1,2} = 0 \text{ and } x_{1,3}\le \epsilon \\
c_{1,2}/\epsilon \cdot x_{1,2} &  & \text{if } x_{1,3} = 0 \text{ and } x_{1,2}\le \epsilon \\ 
C & & \text{if }  x_{1,2} \ge \epsilon \text{ or } x_{1,3}\ge \epsilon
\end{array}
\right.
$$
Then we define the costs for the remaining points $(x_{1,2}, x_{1,3})$ in $[0,1]\times [0,1]$.
As shown in \Cref{fig:concave_cost_c1}, consider the four planes $P_1, P_2, P_3$ formed by connecting four points respectively, where 
plane $P_1$ is formed by points $(0, 0, 0)$, $(\epsilon, \epsilon, C)$, $(\epsilon, 0, c_{12})$, $(0, \epsilon, C)$, 
plane $P_2$ is formed by points $(\epsilon, 0, c_{1,2})$, $(1, 0, c_{1,2})$, $(1, \epsilon, C)$, $(\epsilon, \epsilon, C)$,
plane $P_3$ is formed by points $(0, \epsilon, c_{1,3})$, $(0,1, c_{1,3})$, $(\epsilon, 1, C)$, $(\epsilon, \epsilon, C)$,
and 
plane $P_4$ is formed by points $(\epsilon, \epsilon, C)$, $(\epsilon, 1, C)$, $(1, \epsilon, C)$, $(1, 1, C)$.

Consider the union of the three planes $P = P_1 \cup P_2 \cup P3\cup P_4$.
Since their projections on the $x_{1,2}-x_{1,3}$ plane do not overlap (except the intersection line), for every $(x_{1,2}, x_{1,3}) \in [0,1]\times [0,1]$, there is just one possible $c$ such that $(x_{1,2}, x_{1,3}, c) \in P$.
We then define $c_1(x_{1, 2}, x_{1,3})$ as follows:
\begin{equation*}
c_1(x_{1, 2}, x_{1,3}) = c \text{ such that } (x_{1,2}, x_{1,3},c) \in P\,.
\end{equation*}
Besides, as the constant $C$ is sufficiently large, we can observe that function $c_1$ is concave.
We symmetrically define $c_2$ and $c_3$ as the above and they are concave for the same reason.

Next, as $C$ is sufficiently large, every agent cannot share fractions of at most $\epsilon$ with the other two agents simultaneously.

Then we show a core-stable exchange still does not exist in this case.
We prove it by contradiction.
Assume $\bm{x}$ is a core-stable exchange.
Then we still discretize the data exchange $\bm{x}$.
If $x_{i, j} \le \epsilon$, then we reduce it to zero, which only reduces the utility of agent $j$ by $\epsilon$.
Otherwise, if $x_{i, j} > \epsilon$, then we increase it to $1$, which does not affect agent $i$'s cost, since her two shares cannot be larger than $\epsilon$ simultaneously and the other share is reduced to zero.
It can be noticed that $x_{i, j}$ turns to one or zero for any $i, j\in [3]$ after the discretization.
In addition, the discretization operation only decreases the utility of each agent by $2\epsilon$.

However, since $2\epsilon$ is sufficiently small, our previous discussion demonstrates that there always exists a deviation $(U, \bm{x}^U)$ that significantly improves the utility of every agent in $U$, no matter how the integral exchange performs, which means that there still exists a blocking $\bm{x}$ before the discretization and contradicts assumption of core-stability of $\bm{x}$.

\subsection{Convex Payoff and Convex Cost}
We now construct a data exchange instance where core-stable exchange does not exist for convex payoff functions and convex cost functions.
Set the number of agents to three and use the same payoff function as \Cref{sec:non_existence}.
Define the cost function as follows:
$$
c_i(x) = \frac{1}{\epsilon}\cdot(x_{i,\alpha} + x_{i,\beta} - 1)_+ + c_{i,\alpha}\cdot x_{i,\alpha} +c_{i,\beta}\cdot x_{i,\beta},
$$
where $\{\alpha, \beta\} = [3]\setminus \{i\}$.
As the payoff functions are the same, we can still discretize the data exchange to make sure every $x_{i, j} =0$ or $x_{i, j} > 1-\epsilon$.
It can be observed that, for every agent $i$, it is impossible that $x_{i, \alpha}$ and $x_{i, \beta}$ are both at least $1-\epsilon$, since the incurred cost will become larger than $1/\epsilon\cdot (1-2\epsilon)$, which is much larger than the payoff she can receive.
Using the same analysis as before, we can conclude that there always exists a deviation from $\bm{x}$.

\section{More Details on Pivoting Algorithm}
\label{app:pivoting}

\subsection{Hardness of Verifying Core-Stability}
\label{app:hardness_of_verify_core}
\begin{theorem}
It is \coNP-complete to verify whether an exchange $\bm{x}$ is core-stable when the payoff functions are linear and the cost functions are convex.
\end{theorem}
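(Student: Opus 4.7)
The plan is to prove \coNP-completeness by establishing membership and hardness separately.

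For membership in \coNP, I would argue that non-core-stability admits a short certificate. Given a candidate subset $U \subseteq [n]$, the set of blocking exchanges is $\{\bm{x}^U : u_i(\bm{x}^U) > u_i(\bm{x}) \text{ for all } i \in U\}$. Since each $p_i$ is linear and each $c_i$ is convex, every $u_i$ is concave, so this set is an open convex region defined by polynomially many rational strict concave inequalities. The ellipsoid method (with a $2^{-\text{poly}(n)}$ perturbation to handle the strict inequalities) decides feasibility in polynomial time and, when feasible, returns a rational $\bm{x}^U$ of polynomial bit complexity. Thus either $(U, \bm{x}^U)$ is a polynomial-size certificate, or equivalently $U$ alone suffices if we let the verifier run the ellipsoid-based decision procedure in polynomial time.

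For \coNP-hardness I would reduce from \TSAT, adapting the gadget used in the proof of \Cref{lem:np_hard_core}. Given a formula $\varphi$ with variables $v_1, \dots, v_n$ and clauses $C_1, \dots, C_m$, I would build the directed graph of \Cref{fig:reduction_general_core} (literal, selector, and clause agents), but set each payoff $p_a(\bm{x}) = \sum_{(b,a) \in E} w_{b,a}\, x_{b,a}$ to be purely linear and retain the truncated convex cost $c_a(\bm{x}) = \Gamma \bigl(\sum_{(a,b) \in E} x_{a,b} - \tau_a\bigr)_+$ with $\tau_a = 1$ for literal and selector agents, $\tau_a = 3$ for clause agents, and $\Gamma$ sufficiently large. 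The designated exchange is $\bm{x}^\star = \bm{0}$, which gives every agent utility $0$. A blocking coalition $(U, \bm{x}^U)$ must then grant every member strictly positive utility, which forces every $x^U_{i,j}$ to be either $0$ or close to $1$ (a discretisation analogous to the one in the proof of \Cref{claim:np_empty_normal}). Then the threshold constraints force $U$ to pick exactly one literal per variable, propagate those choices through the selectors, and involve every clause agent; the clause-agent threshold $\tau=3$ is violated precisely when a clause's three literal neighbours in $U$ all correspond to negations of literals in that clause, so blocking coalitions correspond bijectively to satisfying assignments of $\varphi$. Hence $\bm{x}^\star$ is core-stable iff $\varphi$ is unsatisfiable, yielding the desired \coNP-hardness.

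The main obstacle is preserving the all-or-nothing combinatorial behaviour of the gadget despite the restriction to \emph{linear} payoffs. In \Cref{lem:np_hard_core} the ``$(x - (1-\epsilon))_+$''-shaped payoff was doing the work of eliminating small positive shares; with linear payoffs that trick is unavailable, and one must lean entirely on the convex cost truncation, choosing $\Gamma$ large enough that any fractional over-share above $\tau_a$ strictly dominates every achievable linear payoff gain. Verifying that the discretisation and case analysis of \Cref{lem:np_hard_core}, together with the rigidity of the clause-agent constraint, still go through under this weakened payoff structure is the crux of the reduction.
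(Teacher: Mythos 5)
Your \coNP{} membership argument is fine (the paper simply takes the pair $(U,\bm{x}^U)$ as the certificate and checks the blocking inequalities directly, which is even simpler than invoking the ellipsoid method), but the hardness reduction has a genuine gap: with the designated exchange set to $\bm{x}^\star=\bm{0}$ and purely linear payoffs, the claim that a blocking coalition ``forces every $x^U_{i,j}$ to be either $0$ or close to $1$'' is false. The truncated cost $\Gamma\bigl(\sum_b x_{a,b}-\tau_a\bigr)_+$ is identically zero below the threshold, so it penalizes nothing for small shares; and a linear payoff rewards arbitrarily small incoming shares. The gadget graph contains a directed cycle (e.g.\ $s_1\to C_m\to\cdots\to C_1\to s_{n+1}\to v_n\to s_n\to\cdots\to v_1\to s_1$), so the coalition consisting of that cycle can exchange a tiny amount $\delta>0$ along the cycle edges: every member gains a strictly positive linear payoff and incurs zero cost, hence strictly positive utility, and this blocks $\bm{0}$ regardless of whether $\varphi$ is satisfiable. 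Your reduction therefore always outputs a non-core-stable exchange, and the ``core-stable iff unsatisfiable'' equivalence collapses.

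The paper avoids exactly this trap by not using the all-zero exchange as the designated instance. It adds a \emph{generous agent} $g$ (zero cost, zero payoff, hence always indifferent) who in the designated exchange $\bm{x}^0$ shares fully with every normal agent, giving each selection/clause agent baseline utility $1-\epsilon$ and each literal agent $\ell$ baseline $d^{in}_\ell-\epsilon$ --- i.e.\ within $\epsilon$ of the maximum payoff each agent could ever extract from the normal agents under the unit-weight linear payoff $\sum_{(j,i)\in E}x_{j,i}$. To strictly improve on this baseline an agent must receive more than $1-\epsilon$ on essentially every incoming edge, which is precisely the all-or-nothing behaviour you were trying to recover; the propagation through selectors, literals, and clauses, and the threshold-$3$ violation at an unsatisfied clause, then go through. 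If you want to keep your architecture, you need to replace $\bm{x}^\star=\bm{0}$ with an analogous near-saturated baseline (or otherwise raise every agent's status-quo utility to within $\epsilon$ of her maximal linear payoff); without that, no choice of $\Gamma$ can rescue the reduction.
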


\begin{figure}[h]
\centering
\begin{tikzpicture}[
roundnode/.style={circle, draw=blue!60, fill=blue!5, very thick, minimum size=5mm},
squarednode/.style={rectangle, draw=red!60, fill=red!5, very thick, minimum size=5mm},
trianglenode/.style={regular polygon, regular polygon sides=3, draw=orange!60, fill=orange!5, very thick, minimum size=5mm}, 
dotsnode/.style={minimum size=10mm},
]

\node[roundnode] (S1) at (0, 2) [label=left:$s_1$] {};
\node[roundnode] (S2) at (2, 2) [label=left:$s_2$] {};
\node[roundnode] (S3) at (4, 2) [label=left:$s_3$] {};
\node[roundnode] (Sn) at (8, 2) [label=left:$s_n$] {};
\node[roundnode] (Sn1) at (10, 2) [label=right:$s_{n+1}$] {};

\node[squarednode]  (v1) at (1, 3) [label=above:$v_1$] {};
\node[squarednode]  (v2) at (3, 3) [label=above:$v_2$] {};
\node[squarednode]  (vn-1) at (7, 3) [label=above:$v_{n-1}$] {};
\node[squarednode]  (vn) at (9, 3) [label=above:$v_n$] {};

\node[squarednode] (v1b) at (1, 1) [label=below:$\bar{v}_1$] {};
\node[squarednode] (v2b) at (3, 1) [label=below:$\bar{v}_2$] {};
\node[squarednode] (vn-1b) at (7, 1) [label=below:$\bar{v}_{n-1}$] {};
\node[squarednode] (vnb) at (9, 1) [label=below:$\bar{v}_n$] {};

\node[trianglenode] (C1) at (10, -1) [label=below:$C_1$] {};
\node[trianglenode] (C2) at (7, -1) [label=below:$C_2$] {};
\node[dotsnode] (C3) at (4, -1) [label=center:$\cdots$] {};
\node[trianglenode] (Cn) at (0, -1) [label=below:$C_n$] {};

\node (Cdots) at (5, 2) {$\cdots$};

\draw[->, line width=0.8pt] (v1) -- (S1);
\draw[->, line width=0.8pt] (v1b) -- (S1);
\draw[->, line width=0.8pt] (S2) -- (v1);
\draw[->, line width=0.8pt] (S2) -- (v1b);

\draw[->, line width=0.8pt] (v2) -- (S2);
\draw[->, line width=0.8pt] (v2b) -- (S2);

\draw[->, line width=0.8pt] (S3) -- (v2); 
\draw[->, line width=0.8pt] (S3) -- (v2b); 

\draw[->, line width=0.8pt] (Sn) -- (vn-1);
\draw[->, line width=0.8pt] (Sn) -- (vn-1b);

\draw[->, line width=0.8pt] (vn) -- (Sn);
\draw[->, line width=0.8pt] (vnb) -- (Sn);

\draw[->, line width=0.8pt] (Sn1) -- (vn);
\draw[->, line width=0.8pt] (Sn1) -- (vnb);


\draw[->, line width=0.8pt] (Cn) to [out=0, in=0] (v1b);

\draw[->, line width=0.8pt] (C2) -- (C1);
\draw[->, line width=0.8pt] (C3) -- (C2);
\draw[->, line width=0.8pt] (Cn) -- (C3);
\draw[->, line width=0.8pt, bend right=60] (S1) to  (Cn);
\draw[->, line width=0.8pt, bend right=60] (C1) to (Sn1);
\end{tikzpicture}
\caption{Construction of the graph gadget for the reduction from \TSAT to \CoreSt.
The blue round nodes represent the selection vertices, the red square nodes represent the variable vertices, and the orange triangle nodes represent the clause vertices.
Each clause agent has an outgoing edge to the negation of each literal it contains.
For example, if $C_n = (v_1 \vee \cdots)$, there will be a arc $(C_n, \bar{v}_1)$ constructed.
}
\label{fig:reduction_core_stable}
\end{figure}
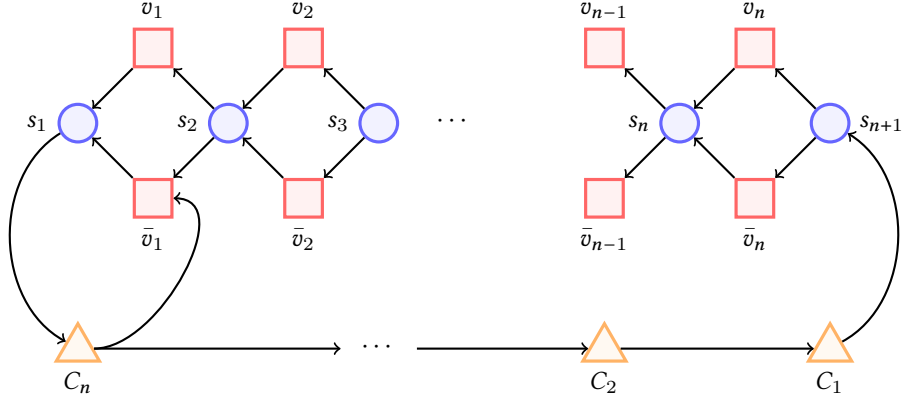

\begin{proof}
The problem \CoreSt is in \coNP\xspace since once we are given another data exchange $\bm{x}'$ and a coalition that can deviate to, it can be verified in polynomial time whether it blocks $\bm{x}$.
Next, we show the \coNP-hardness by reducing from the \TSAT problem.
Given a \TSAT problem instance with $m$ clauses $\mathcal{C}=(C_i)_{i\in [m]}$ and $n$ variables $\mathcal{V}=(v_i)_{i\in [n]}$, we construct a data exchange instance $\mathcal{D}$ as follows.

First, we construct a graph gadget and use it to construct a data exchange instance then.
For each variable $v_i$, we construct two literal vertices $v_i$ and $\bar{v}_i$ and a selection vertex $s_i$.
For each clause $C_j$, we construct a vertex $C_j$.
Then we connect these vertices as follows (see Figure~\ref{fig:reduction_core_stable}): 
For each variable $v_i$, we create two arcs from literal vertex $v_i$ and literal vertex $\bar{v}_i$ to $s_i$.
Meanwhile, we create arcs $(s_{i+1}, v_i)$ and $(s_{i+1}, \bar{v}_i)$. 
Then, we create a path from $s_1$ to $s_{n+1}$ that goes through all the clause vertices $C_n, \ldots, C_1$.
Moreover, for each clause $C_j$, if variable $v_i$ occurs positively within $C_j$, then we construct an arc from $C_j$ to $\bar{v}_i$.
Else, if $v_i$ occurs negatively within $C_j$, we construct an arc from $C_j$ to $v_i$.
For example, in Figure~\ref{fig:reduction_core_stable}, if $C_n$ is in form of $C_n = (v_1 \vee \cdots)$, then we construct an arc from $C_n$ to $\bar{v}_1$.
Hence, we observe that, for each clause vertex, its out-degree is exactly $4$ and its in-degree is $1$. 

Now, we construct the data exchange instance $\mathcal{D}$.
Denote the set of all vertices by $V$ and the out-degree (in-degree) of each vertex $v$ by $d^{out}_v$ ($d^{in}_v$).
The data exchange instance involves two types of agents: (1) The first is one \emph{generous agent}, denoted by $g$; (2) The second is a set of agents $\{a_v\}_{v\in V}$, where $V$ is the set of all vertices in the graph gadget. We call these ``normal'' agents. In what follows, when referring to the agents, we will alternatively use the label of the corresponding graph vertex.

The payoff functions and cost functions are constructed as follows:
The generous agent $g$ receives no payoff from the normal agents and incurs no cost when sharing data.
Her utility is always zero in any data exchange.
For each normal agent, we define her payoff function as follows.

\paragraph{Payoff Functions.} Fix a small constant $\epsilon$ such that $0 <\epsilon < 1/4$.
Define the function $p_i(\bm{x})$ for each normal agent as the sum of payoffs from the following two sources:
\begin{itemize}[leftmargin=0.5cm]
    \item \emph{Payoff from the generous agent}: For every selection vertex $i = s_j$ for $j\in [n]$ and every clause vertex $i = C_j$ for $j \in [n]$, we define $p^g_i(\bm{x}) = (1-\epsilon)\cdot x_{g, i}$. For every literal variable $\ell$, we let $p^g_\ell(\bm{x}) = (d^{in}_\ell - \epsilon)\cdot x_{g, \ell}$. 

    \item \emph{Payoff from other normal agents}: As mentioned, the generous agent gets no payoff from the shares of these normal agents. 
    For each normal agent $i\in V$, we define $a^{V}_i(\bm{x}) = \sum_{(j, i)\in E}x_{j, i}$.
    That means, for each normal agent $p_i$, her payoff from $V$ is equal to the sum of the fraction of shares from agents that are adjacent to her in the graph gadget.     
\end{itemize}

\paragraph{Cost Functions.}
Next, we define the cost functions for each agent.
The cost function of the generous agent $g$ is defined as the constant zero.
The cost functions of the normal agents are given as follows:
\begin{alignat*}{2}
&\mr{Selection/Literal agent} \quad & c_i(\bm{x}) &= \frac{1}{\mu}\left(\sum\nolimits_{j \in V} x_{i, j} - 1 \right)_+ \\
&\mr{Clause agent} \quad & c_i(\bm{x}) &= \frac{1}{\mu}\left(\sum\nolimits_{j \in V} x_{i, j} - 3 \right)_+
\end{alignat*}
where $\mu$ is a sufficiently small constant smaller than $1-4\epsilon$.
The idea behind the construction is that each normal agent will incur a large cost if the total fraction of her shares with other agents is larger than a threshold.

We construct the two oracles $\mU$ and $\nabla \mU$ as follows:
We answer $\mU(i, \bm{x})$ as the utility $p_i(\bm{x}) - c_i(\bm{x})$ as constructed above, which takes constant time since the out-degree of every vertex is bounded.
Additionally, as every utility function can be rewritten as the maximum of $O(1)$ linear functions, we can figure out the supergradient $\nabla u_i(\bm{x})$ in $O(1)$ time.

Now we show that deciding whether the following data exchange $\bm{x}^0$ is $\alpha$-core-stable is equivalent to deciding whether the input \TSAT instance is satisfiable, where
\begin{equation*}
\bm{x}^0 = (x_{i,j}) \quad \text{where } x_{i, j} = \left\{
    \begin{array}{ll}
    1 & \text{if } i = g, \\
    0  & \text{otherwise.} 
    \end{array}
\right.
\end{equation*}
Note that the generous agent is sharing her entire data with every normal agent, while the fractions of all other shares are set to zero.
Hence, in $\bm{x}^0$, the utility of the generous agent is equal to zero, i.e., $u_g(\bm{x}^0) = 0$.
In addition, every normal agent only gets data from the generous agent.
Each selection agent or each clause agent receives a utility of $1-\epsilon$ while each literal agent $\ell$ receives $u_\ell(\bm{x}^0) = d^{in}_\ell -\epsilon$.

Next, we show a \textbf{YES} to \textbf{NO} mapping and a \textbf{NO} to \textbf{YES} mapping.
Let us first show the former one.
As the input \TSAT instance is a \textbf{YES} instance, there exists a truth assignment such that each clause is satisfied.
We then define a deviation $(U, \bm{x}^U)$ containing all the literal agents assigned true, all selection agents, and all clause agents as follows.
If a variable $v_i$ is assigned as true, then we let $x_{v_i, s_i} = 1$ and $x_{\bar{v}_i, s_i} = 0$.
Meanwhile, we let $s_{i+1}$ share $1$ to $v_i$ and $0$ to $\bar{v}_i$ for every $i\in [n]$.
If a variable $v_i$ is assigned as false, then we define the shares symmetrically.
Let selection agent $s_1$ share one unit of data with clause agent $C_n$, and every clause agent shares one unit of data with her adjacent agent.
In particular, $C_n$ shares one unit of data with $C_{n-1}$, $C_{n-1}$ shares one unit of data with $C_{n-2}$, and so on. 
In addition, for each clause agent, $C_j$, for each literal $\ell$ it contains, if the literal is assigned as false, then we let $C_j$ share $1$ unit of data with the agent corresponding to the negation of $\ell$.
For literal agents not assigned true in the assignment, we just throw them away and only include the other agents in $U$.
\begin{proposition}
The deviation $(U, \bm{x}^U)$ blocks the original exchange $\bm{x}^0$.
\end{proposition}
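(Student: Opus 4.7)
The plan is to verify, for every agent $i \in U$, that $u_i(\bm{x}^U) > u_i(\bm{x}^0)$ by separately computing the payoff and the cost in $\bm{x}^U$ and then comparing to the known baseline utilities in $\bm{x}^0$. Recall that in $\bm{x}^0$ the generous agent $g$ shares $1$ unit with every normal agent and no one else shares anything, which makes $u_g(\bm{x}^0)=0$, $u_{s_i}(\bm{x}^0) = u_{C_j}(\bm{x}^0) = 1-\epsilon$, and $u_\ell(\bm{x}^0) = d^{in}_\ell - \epsilon$ for each literal agent $\ell$. Since $g \notin U$, the payoff contribution of $g$ disappears in $\bm{x}^U$, so the only way agents in $U$ can strictly improve is via the in-shares they receive from one another in $\bm{x}^U$; we must also ensure that none of them exceeds their cost threshold.

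First I would analyze the \emph{cost} of each agent in $U$. Each selection agent has exactly one outgoing share of $1$ in $\bm{x}^U$ (namely $s_i$ sends to the true literal among $\{v_{i-1},\bar v_{i-1}\}$, $s_1$ sends to $C_n$, and $s_{n+1}$ sends to the true literal among $\{v_n,\bar v_n\}$), so $\sum_j x^U_{s_i,j} = 1$ and the cost is $0$. Each true-literal agent $\ell$ (with associated variable $v_i$) sends exactly $1$ unit to $s_i$ and nothing else, so its cost is $0$. Each clause agent $C_j$ sends $1$ unit to its neighbor on the $s_1 \to C_n \to \cdots \to C_1 \to s_{n+1}$ path, plus $1$ unit to the agent corresponding to the negation of every \emph{false} literal in $C_j$. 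Because the input assignment satisfies every clause, each $C_j$ contains at least one true literal, hence at most two false literals, so $\sum_j x^U_{C_j, j} \le 1+2=3$, which keeps the clause-agent cost at $0$. This satisfiability-based counting is the single crucial inequality in the whole argument.

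Next I would compute the \emph{payoff} of each agent in $U$ from the normal-agent contributions. A selection agent $s_i$ (for $i\in [n]$) receives $1$ unit from the unique true literal among $\{v_i,\bar v_i\}$, $s_{n+1}$ receives $1$ from $C_1$, and $s_1$ receives $1$ from the true literal among $\{v_1,\bar v_1\}$; in each case the payoff is at least $1$. Each clause agent $C_j$ receives $1$ from its predecessor in the path (either $s_1$ or $C_{j+1}$), giving payoff $\ge 1$. A true-literal agent $\ell$ receives $1$ from $s_{i+1}$ and, crucially, $1$ from \emph{every} clause agent $C_j$ whose arc points to $\ell$, i.e.\ every $C_j$ containing the negated literal $\neg \ell$; since $\neg \ell$ is false in every such $C_j$, the construction dictates that $C_j$ indeed sends $1$ unit to $\ell$. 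Summing, the in-payoff of $\ell$ equals its graph-gadget in-degree $d^{in}_\ell$.

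Finally I combine the two computations: $u_{s_i}(\bm{x}^U) \ge 1 > 1-\epsilon = u_{s_i}(\bm{x}^0)$, $u_{C_j}(\bm{x}^U) \ge 1 > 1-\epsilon = u_{C_j}(\bm{x}^0)$, and $u_\ell(\bm{x}^U) \ge d^{in}_\ell > d^{in}_\ell - \epsilon = u_\ell(\bm{x}^0)$. Thus every agent in $U$ strictly improves, so $(U,\bm{x}^U)$ blocks $\bm{x}^0$. The main obstacle to watch is really the clause-agent cost: if the assignment failed to satisfy some clause, that clause would have to share with four distinct neighbors, its out-share would be $\ge 4 > 3$, and the $1/\mu$ penalty would destroy the blocking property. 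Everything else is a routine bookkeeping over the gadget arcs.
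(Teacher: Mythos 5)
Your proof is correct and follows essentially the same route as the paper's: you verify agent by agent that out-shares stay within the cost thresholds (with the satisfiability of each clause giving the key bound of at most three out-shares for a clause agent) and that the in-shares deliver payoffs of $1$, $1$, and $d^{in}_\ell$ to selection, clause, and true-literal agents respectively, strictly beating the baseline utilities $1-\epsilon$ and $d^{in}_\ell-\epsilon$ in $\bm{x}^0$. The only difference is organizational (costs first, then payoffs, with slightly more explicit arc bookkeeping), not substantive.
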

\begin{proof}
As the input \TSAT instance is a \textbf{YES} instance, we know that each clause is satisfied.
Hence, for each clause agent, at least one literal of it is assigned true.
So, the total fractions of shares of the clause agent are at most $(3- 1) + 1 =3$, which is no more than the threshold of the cost function.
Hence, her cost will still be equal to zero.
In addition, as she also receives one unit of data from the adjacent agent, her utility will be equal to $1$.

For each literal agent $\ell= v_j \text{ or } \bar{v}_j$ included in $U$, we know $s_{j+1}$ shares $1$ unit of data to her, and any clause agent containing its negation also shares one unit of data to her.
Thus, her total utility will be equal to $d^{in}_\ell$.
Also, as she just shares one unit of data with the selection agent, it does not exceed the threshold of the cost function.
Thus, her utility is also more than her old utility $d^{in}_\ell - \epsilon$.

Finally, we know each selection agent also receives an entire dataset from one of the adjacent literal agents and shares one unit of data with a literal agent.
Hence, her utility is equal to $1-0=1$, which is also larger than her old utility in $\bm{x}^0$ (which is $1-\epsilon$).
\end{proof}

Now, we prove the remaining part of the reduction -- the \textbf{NO} to \textbf{YES} mapping.
If the input \TSAT instance is a \textbf{NO} instance, we now prove the data exchange is core-stable by contradiction.
Suppose there exists a possible deviation $(U, \bm{x}^U)$ from $\bm{x}^0$.
First, as the generous agent always has the utility of zero, she does not have the incentive to deviate.
Hence, $U$ must be a subset of $V$.

Next, we demonstrate that when $U$ is a subset of $V$, it cannot guarantee that every agent in $U$ receives a higher utility than in $\bm{x}^0$.
If $U$ includes one selection agent (denoted by $s_j$ without loss of generality), as she has the incentive to deviate, she should get more than $1-\epsilon$ in the data exchange $\bm{x}^U$.
Hence, one of $v_{j}$ and $\bar{v}_j$ must be included in $U$ as $\epsilon > 0$. Similarly, $s_{j+1}$ should also be included in $U$, and continuing this process, we can eventually conclude that all selection agents and clause agents should be included in $U$.
In addition, at least one of the two literal agents $v_i$ and $\bar{v}_i$ is included for every $i\in [n]$.
We arbitrarily choose one from each pair, and next construct an assignment $\varphi$ for the input \TSAT instance as follows.
If $v_i$ is included, then we assign the variable $v_i$ as false, $\varphi(v_i) = \text{false}$.
Otherwise, we assign it as true, $\varphi(v_i) = \text{true}$.
For every included literal agent, her utility in the old data exchange $\bm{x}^0$ is $d^{in} - \epsilon$.
Hence, as she has the incentive to deviate, she has received at least a $1-\epsilon$ fraction from every clause agent adjacent to her.
As the input \TSAT instance is a \NO instance, there must exist a clause $C_i$ where all the literals are assigned false.
Hence, all three literal agents corresponding to three literals within $C_i$ should be included in $U$, and the clause agent $C_i$ should share a fraction of at least $1-\epsilon$ with them.
Therefore,  the utility of the clause agent would be at most $1 - 1/\mu\cdot (4\cdot (1-\epsilon) - 3)^+ = 1 - 1/\mu\cdot (1-4\epsilon) < 0$ as $\mu < 1-4\epsilon$ and $\epsilon > 0$. This implies the utility of that clause agent is smaller than in the original data exchange $\bm{x}^0$ and causes the contradiction.

Therefore, when the input \TSAT instance is \NO instance, we can conclude that the exchange $\bm{x}^0$ is core-stable, which concludes the \coNP-hardness of determining whether $\bm{x}^0$ is core-stable in the constructed data exchange instance.
\end{proof}

\subsection{Scarf's Lemma}
\label{app:scarf_lem}
In the main text of our paper, we show the data exchange game is balanced and then conclude the existence of core-stable data exchange.
Next, we show that approximate core-stable data exchange can be constructed using the coalition matrix and the solution of Scarf's Lemma.
\begin{lemma}[Scarf's Lemma]
Let $n < m$ and $\mathbf{B}\in \mathbb{R}^{n\times m}$ such that the first $n$ columns of $\mathbf{B}$ form an identical matrix.
Let $\mathbf{b}$ a non-negative vector in $\mathbb{R}_{\ge 0}^n$, such that the set $\{\mathbf{x}: \mathbf{B}\cdot \mathbf{x} = \mathbf{b}\}$ is bounded.
Let $\mathbf{C}$ be an $n\times m$ matrix such that $c_{i, i} \le c_{i,k} \le c_{i, j}$ whenever $i, j\le m, i\neq j$ and $k > n$.
Then there exists a subset $O\subseteq [m]$ with size of $n$ such that
\begin{itemize}[leftmargin=0.5cm]
    \item $\mathbf{B}\cdot \bm{\delta} = \mathbf{b}$ for some $\bm{\delta} \in \mathbb{R}_{\ge 0}^m$ such that $\delta_j = 0$ for $j\notin O$, and
    \item For every $k\in [m]$, there exists $i\in [n]$ such that $c_{i, k} \le c_{i, j}$ for all $j\in O$.
\end{itemize}
\end{lemma}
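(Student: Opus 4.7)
The plan is to follow Scarf's original complementary-pivoting argument. For a size-$n$ subset $O \subseteq [m]$, call $O$ \emph{cardinally feasible} if $\mathbf{B}_O$ is a non-singular basis admitting a non-negative representation $\mathbf{B}_O \bm{\delta} = \mathbf{b}$ with $\bm{\delta} \geq 0$, and call $O$ \emph{ordinally dominating} if for every $k \in [m]$ there exists $i$ with $c_{i,k} \leq \min_{j \in O} c_{i,j}$. The goal is to exhibit an $O$ that is both. After a generic perturbation of $\mathbf{B}$ and $\mathbf{C}$ (with a limit at the end), I may assume non-degeneracy: every cardinally feasible basis yields strictly positive $\bm{\delta}$, and the entries within each row of $\mathbf{C}$ are all distinct.

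The proof is a Lemke--Howson style path-following argument. I construct a graph $G$ whose vertices are cardinally feasible bases $O$ together with an ``almost-dominance'' structure: either $O$ is already ordinally dominating (a \emph{terminal} vertex) or there is exactly one column failing to be dominated by $O$, paired with the distinguished row singled out by the next-to-minimum structure of $\mathbf{C}_O$. Edges come in two flavours. A \emph{cardinal pivot} exchanges one column in $O$ for a column outside $O$ via the usual minimum-ratio rule applied to $\mathbf{B}_O^{-1} \mathbf{B}_k$; boundedness of $\{\mathbf{x} : \mathbf{B}\mathbf{x} = \mathbf{b}\}$ ensures the rule always selects a unique leaving column. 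An \emph{ordinal pivot} leaves the basis unchanged but updates the distinguished witness column and row by advancing one step in the row-minimum reorganisation of $\mathbf{C}_O$. Under non-degeneracy both pivots are uniquely defined, and a case analysis shows that every non-terminal vertex has degree exactly $2$ while every terminal vertex has degree $1$. Hence $G$ is a disjoint union of paths and even cycles.

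To conclude, I identify a canonical degree-$1$ starting vertex built from $O_0 = \{1, \ldots, n\}$. Cardinal feasibility of $O_0$ is immediate since the first $n$ columns of $\mathbf{B}$ form the identity and $\bm{\delta} = \mathbf{b} \geq 0$. The hypothesis $c_{i,i} \leq c_{i,k} \leq c_{i,j}$ pins down the row-minimum structure of $\mathbf{C}_{O_0}$: in each row $i$, column $i$ is the unique minimizer within $O_0$, and each $k > n$ is sandwiched strictly between $c_{i,i}$ and the remaining entries indexed by $O_0$. This structure distinguishes a canonical ``first'' undominated column and hence a canonical starting vertex of $G$. Following the unique path from this vertex, I must reach another degree-$1$ vertex; by definition such a vertex is an ordinally dominating and cardinally feasible $O$, which is the desired conclusion. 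The main obstacle is the combinatorial verification that every non-terminal vertex really has degree exactly $2$: this requires unpacking how the row-minimum profile of $\mathbf{C}_O$ changes under each pivot and showing that the two pivots are mutually inverse along a path. Degeneracies (ties in the $\mathbf{B}$-ratio test or in rows of $\mathbf{C}$) are handled by the initial generic perturbation followed by a standard compactness argument, and the boundedness of $\{\mathbf{x} : \mathbf{B}\mathbf{x} = \mathbf{b}\}$ rules out the path escaping to infinity during cardinal pivots.
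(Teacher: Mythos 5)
The paper does not actually prove this lemma: it is quoted verbatim from Scarf (1967) and used as a black box, with Algorithm~\ref{alg:pivoting_alg} in the appendix implementing the constructive argument you describe. Your outline is the right one and matches both Scarf's original proof and the paper's cardinal-pivot/ordinal-pivot algorithm: start from the cardinal basis $\{1,\dots,n\}$ given by the identity columns, alternate cardinal and ordinal pivots along a uniquely determined path, and terminate at a basis that is simultaneously cardinally feasible and ordinally dominating. The use of the hypothesis $c_{i,i}\le c_{i,k}\le c_{i,j}$ to pin down the starting configuration, and of boundedness of the feasible set to guarantee the ratio test never fails, are both correctly placed.

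However, as a proof the proposal has a genuine gap that you yourself flag but do not close: the claim that every non-terminal vertex has degree exactly $2$. This rests on the \emph{ordinal replacement lemma} --- given an ordinal basis $O$ and a designated column $j_\ell\in O$ to be removed, there is exactly one column $j^*\notin O$ such that $(O\setminus\{j_\ell\})\cup\{j^*\}$ is again an ordinal basis (under row-wise distinctness of the entries of $\mathbf{C}$) --- together with its cardinal counterpart and the verification that the two pivot types interleave consistently, so that the walk neither branches nor stalls. This is the entire technical content of Scarf's proof; without it you have not ruled out, for example, that an ordinal pivot admits zero or several legal entering columns, which would break the path structure and with it the parity argument. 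The remaining ingredients (generic perturbation to remove ties, a limiting argument to recover the degenerate case, finiteness of the vertex set so that the path from the canonical degree-$1$ start must end at another degree-$1$ vertex) are standard and correctly invoked, but the replacement lemma must be proved, not merely named, for the argument to stand on its own. If the intent is simply to cite the result, as the paper does, this is moot; as a self-contained proof it is incomplete at precisely the step that makes the theorem true.
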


Note that since we discretize the utility space in the construction of the coalition matrix, we are already $\epsilon$ away from a possible utility vector. 
Further, \Cref{claim:time_complexity_of_determine_core} gives whether a particular utility vector is possible up to an $\epsilon'$ error. Therefore, we have all possible utility vectors up to $(\epsilon+\epsilon')$ error. 
By setting $\epsilon' = \epsilon$, we have the following proposition.
\begin{proposition}\label{prop:two_epsilon}
For any utility vector $\bm{v}$ that is attainable by a coalition $S$, there exists a utility vector $\bm{u}$ of the columns of $\mathbf{U}$ such that $u_i \ge v_i - 2\epsilon$ for any $i\in S$.
\end{proposition}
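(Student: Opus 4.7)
The plan is to combine two facts already in hand: the downward closure of $V(S)$ proved inside \Cref{thm:core_existence} (via \Cref{alg:exchange}), together with the one-sided soundness of the Ellipsoid-based feasibility check of \Cref{claim:time_complexity_of_determine_core}.

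Starting from an attainable $\bm{v}\in V(S)$, witnessed by some $\bm{x}^\ast$ with $u_i(\bm{x}^\ast)=v_i$ for each $i\in S$, I would round each coordinate \emph{down} to the nearest grid multiple of $\epsilon$: set $\tilde{v}_i := \epsilon\lfloor v_i/\epsilon\rfloor$, so that $\tilde{v}_i\in(v_i-\epsilon,\, v_i]$ for every $i\in S$. In particular $\tilde{\bm{v}}\le \bm{v}$ coordinate-wise on $S$, and by the downward closure of $V(S)$ (the second bullet of \Cref{lem:balanced_core}, verified for our $V(\cdot)$ inside the proof of \Cref{thm:core_existence} via the iterative decrement procedure of \Cref{alg:exchange}) we conclude $\tilde{\bm{v}}\in V(S)$.

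Because $\tilde{\bm{v}}$ is genuinely attainable, applying \Cref{claim:time_complexity_of_determine_core} to the grid point $\tilde{\bm{v}}$ cannot produce the \NO branch: \NO asserts the absence of any exchange with $u_i(\cdot)\ge\tilde{v}_i$ on $S$, yet $\bm{x}^\ast$ itself is such a witness. Hence the check outputs a data exchange $\bm{x}$ satisfying $u_i(\bm{x})\ge\tilde{v}_i-\epsilon$ for all $i\in S$, and by the column-insertion rule described above, the grid point $\tilde{\bm{v}}$ is inserted as a column $\bm{u}$ of $\mathbf{U}$ with $u_i=\tilde{v}_i$ for $i\in S$ (and $u_i=M$ for $i\notin S$). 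This column satisfies $u_i=\tilde{v}_i\ge v_i-\epsilon\ge v_i-2\epsilon$ for each $i\in S$, as required. The two copies of $\epsilon$ correspond exactly to the two sources of error flagged in the remark preceding the statement — one from discretising $V(S)$ to the $\epsilon$-grid, and one from the Ellipsoid's $\epsilon'$-tolerance with $\epsilon'=\epsilon$ (the latter being the slack one needs if one wants the column to correspond to an exchange that is actually realisable, not merely a YES label of the check). The only step requiring any care is rounding \emph{downward} rather than to the nearest grid point, so that the inequality $\tilde{\bm{v}}\le\bm{v}$ — and hence the invocation of downward closure — remains valid; beyond this, the argument is essentially immediate.
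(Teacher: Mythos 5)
Your argument is correct and is essentially a careful formalization of the paper's own (unwritten) justification, which consists only of the remark preceding the statement: one $\epsilon$ of slack from rounding the attainable vector down to the $\epsilon$-grid (using the downward closure of $V(S)$, or more directly the witness $\bm{x}^\ast$ to rule out the \NO branch of \Cref{claim:time_complexity_of_determine_core}), and the second $\epsilon$ reserved for the Ellipsoid tolerance $\epsilon'=\epsilon$. Your observation that rounding must be downward so that the grid point remains attainable is exactly the point the paper leaves implicit.
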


Based on that, we can construct a core-stable data exchange by applying Scarf's Lemma.
\begin{lemma}\label{label:core_existence_by_scarf}
For any $\epsilon > 0$, there exists a $\epsilon$-core-stable data exchange when the payoff functions are concave and the cost functions are convex.
\end{lemma}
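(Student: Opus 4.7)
The plan is to apply Scarf's Lemma to the coalition matrix $\mathbf{C}$ and utility matrix $\mathbf{U}$ built in \Cref{sec:coalition_utility_matrix}, using a sufficiently fine grid resolution $\epsilon' > 0$ (to be chosen so the total accumulated error is at most $\epsilon$). The first $n$ columns of $\mathbf{C}$ encode the singleton coalitions and form the identity; the matching columns of $\mathbf{U}$ have $u_{i,i}=0$ and $u_{j,i}=M$ for $j\neq i$, which validates the ordering hypothesis $c_{i,i}\le c_{i,k}\le c_{i,j}$ of Scarf's Lemma since agent utilities are nonnegative and entries are bounded above by $M$. Scarf's Lemma then returns an index set $O\subseteq[m]$ of size $n$ and weights $\{\delta_k\}_{k\in O}\ge 0$ with $\mathbf{C}\delta=\mathbf{1}$, so $\{S_k\}_{k\in O}$ is a balanced collection. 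I define the target profile $u_i^{\ast}:=\min_{k\in O:\,i\in S_k}u_{i,k}$.

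Next I would realise $\bm{u}^{\ast}$ by mimicking the balancedness argument from the proof of \Cref{thm:core_existence}. For each $k\in O$, since $u_i^{\ast}\le u_{i,k}$ for every $i\in S_k$ and $V(S_k)$ is downward-closed, \Cref{claim:time_complexity_of_determine_core} produces an exchange $\bm{x}^k$ supported on $S_k$ with $u_i(\bm{x}^k)\ge u_i^{\ast}-\epsilon'$ for all $i\in S_k$. Setting $\bm{x}=\sum_{k\in O}\delta_k\,\bm{x}^k$, feasibility $x_{i,j}\in[0,1]$ holds since $\sum_k\delta_k\le 1$, and concavity of payoffs combined with convexity of costs (Jensen, exactly as in \Cref{thm:core_existence}) gives
\[
u_i(\bm{x})\ \ge\ \sum_{k\in O:\,i\in S_k}\delta_k\cdot u_i(\bm{x}^k)\ \ge\ u_i^{\ast}-\epsilon'.
\]

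Finally, I would verify $\epsilon$-core-stability. Consider any deviation $(U,\bm{x}^U)$ with profile $\bm{v}=(u_i(\bm{x}^U))_{i\in U}$. \Cref{prop:two_epsilon} furnishes a column $k'$ of $\mathbf{U}$ with coalition $U$ and $u_{i,k'}\ge v_i-2\epsilon'$ for all $i\in U$. The ordinal output of Scarf's Lemma applied to $k'$ gives an agent $i^{\ast}$ with $u_{i^{\ast},k'}\le u_{i^{\ast},j}$ for every $j\in O$; the large entry $M$ forces $i^{\ast}\in U$, since any agent outside $U$ has $u_{i^{\ast},k'}=M$, which would require $u_{i^{\ast},j}=M$ for every $j\in O$, contradicting balancedness (some $j\in O$ with $\delta_j>0$ must cover $i^{\ast}$). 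Chaining the three inequalities yields
\[
u_{i^{\ast}}(\bm{x}^U)\ \le\ u_{i^{\ast},k'}+2\epsilon'\ \le\ u_{i^{\ast}}^{\ast}+2\epsilon'\ \le\ u_{i^{\ast}}(\bm{x})+3\epsilon',
\]
so setting $\epsilon'=\epsilon/3$ certifies $\bm{x}$ as $\epsilon$-core-stable. The main obstacle is the careful error bookkeeping---combining the grid discretization error in $\mathbf{U}$, the Ellipsoid slack in each $\bm{x}^k$, and the convex-combination gap between $\bm{u}^{\ast}$ and $\bm{x}$---and ensuring the $M$-padding of $\mathbf{U}$ is large enough that the ordinal condition provably forces the witness agent $i^{\ast}$ to lie inside the deviating coalition.
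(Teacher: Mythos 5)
Your proposal is correct and follows essentially the same route as the paper's proof: apply Scarf's Lemma to the augmented coalition and utility matrices, form the $\delta$-weighted combination of the column exchanges, use concavity of payoffs and convexity of costs (Jensen) to lower-bound the utilities of the combined exchange, and invoke the ordinal guarantee together with \Cref{prop:two_epsilon} to rule out profitable deviations. Your explicit error bookkeeping (grid, Ellipsoid slack, and convex-combination gap, giving $\epsilon'=\epsilon/3$) is in fact slightly more careful than the paper's, which sets the parameter to $\epsilon/2$ and leaves the accumulation implicit.
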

\begin{proof}
First construct the coalition matrix $\mathbf{C}$ and utility matrix $\mathbf{U}$ using parameter $\epsilon/2$ as described in \Cref{sec:coalition_utility_matrix}.
To meet the preconditions of Scarf's Lemma, we slightly modify $\mathbf{C}$ and $\mathbf{U}$ by considering all the singleton coalitions (where only one agent forms a coalition).
Therefore, $\mathbf{C}$ is then changed to an augmented matrix with an identity matrix in the left part while a zero matrix is added to the left of $\mathbf{U}$.
Similarly, we add slightly different $M$ to the blanks in the first $n$ columns of $\mathbf{U}$ like discussed before.

The two matrices then meet the preconditions of Scarf's lemma.
We next construct a data exchange $\bm{x}$ according to the solution $(O, \bm{\delta})$ as follows: $\bm{x} = \delta_i\cdot \bm{x}^i$, where $\bm{x}^i$ is the data exchange archiving the $i$-th utility column of $\mathbf{U}$.
According to the second guarantee of Scarf's Lemma and the concavity of the utility function, the deviation corresponding to every column of $\mathbf{U}$ cannot block $\bm{x}$.
Therefore, by \Cref{prop:two_epsilon}, we can then conclude that $\bm{x}$ is $\epsilon$-core-stable.
\end{proof}

\subsection{Pivoting Algorithm}
\label{app:pa}
Next, we show the pseudo-codes of the pivoting algorithm for finding an $\epsilon$-core-stable exchange in Algorithm~\ref{alg:pivoting_alg}, which mainly follows the constructive proof of Scarf's Lemma~\cite{scarf1967core}.
In \Cref{line:constru_coalition_matrix}, we first construct the two matrices: coalition matrix $\mathbf{C}$ and utility matrix $\mathbf{U}$ with the input approximation parameter $\epsilon$.
Then we apply the pivoting algorithm to find a solution of Scarf's Lemma, which further induces an $\epsilon$-core-stable data exchange.
For completeness, we first introduce the concepts of \emph{cardinal basis} and \emph{ordinal basis}.
Denote the size of the two matrices by $n \times m$, where $m$ is the number of possible coalitions.
\begin{definition}[Cardinal basis]
Let $\mathbf{b}=\mathbf{1}$.
Consider the polytope $P = \{\mathbf{C}\cdot x = \mathbf{b}\}$.
A set of columns $B$ is a \emph{cardinal basis} for  $(\mathbf{C}, \mathbf{b})$ if (i) $\abs{B} = n$ and; (ii) the submatrix induced by $B$ has a full rank.
\end{definition}

\begin{algorithm}[t]
\textbf{Input}: Payoff functions $\{p_i\}_i$, cost functions $\{c_i\}_i$ and the parameter $\epsilon >0$\;
Construct $\mathbf{C}$ and $\mathbf{U}$ with parameter $\epsilon > 0$, payoff functions $\{u_i(\cdot)\}$ and cost functions $\{c_i(\cdot)\}$ \label{line:constru_coalition_matrix}\;
Update $\mathbf{C}\leftarrow (I_n\mid \mathbf{C})$ and $\mathbf{U} \leftarrow (\mathbf{0} \mid \mathbf{U})$ and then fill the blanks of $\mathbf{U}$ with slightly different sufficiently large value $M$.
Let the size of $\mathbf{C}$ and $\mathbf{U}$ be $n\times m$\;
Let $B \leftarrow \{1, 2, \ldots, n\}$ and $O \leftarrow \{1, \ldots, n\} \cup \mathrm{argmax}_{j \in [n+1, m]}U_{1, j} $\;
\While{$B \neq O$}
{
{\texttt{\color{blue}$\triangleright$ Cardinal Pivot}}

Let $j$ be the column in $O\setminus B$, $\mathbf{C}_B$ be the submatrix induced by $B$ and $\mathbf{b}_j$ be the column indexed by $j$\;
Let $\mathbf{x}$ and $\mathbf{y}$ respectively be the solution of the linear equations $\mathbf{C}_B\cdot \mathbf{x} = \mathbf{1}$ and $\mathbf{C}_B \cdot \mathbf{y} = \mathbf{b}_j$\;
Let $j^* \leftarrow \mathrm{argmin}_{j:y_j >0} \frac{x_j}{y_i} $\;
Update $B$ by $B\leftarrow B\setminus \{j\} \cup \{j^*\}$ \tcp*{update the cardinal basis}

\If{$B = O$} {
\textbf{break}\;
}

{\texttt{\color{blue} $\triangleright$ Ordinal Pivot}}

Let $j_\ell$ be the column in $O\setminus B$ to be pivoted out\;
Let $i_\ell$ be the row minimizer of column $j_\ell$, i.e., $U_{i_\ell, j_\ell} = \mathrm{argmin}_{j\in O} U_{i_\ell, j}$\;
Let $j_r \leftarrow \mathrm{argmin}_{j \in O\setminus \{j_\ell\}} U_{i_\ell, j}$\;
Let $i_r$ be the row minimizer of column $j_r$, i.e., $U_{i_r, j_r} = \mathrm{argmin}_{j\in O} U_{i_r, j_r}$\;
Let $K\leftarrow \{k\in [m]\setminus O: U_{i, k} > U_i^{O\setminus \{j_\ell\}}, \text{for all } i\neq i_r \}$\;
$j^* \leftarrow \mathrm{argmax}_{k\in K} C_{i_r, k}$\;
Update the ordinal basis by $O\leftarrow O \cup \{j^*\}\setminus \{j_\ell\}$ \tcp*{update the ordinal basis}
}
Find the solution $\bm{\delta}$ of the linear equation $\mathbf{C}_B\cdot \bm{\delta} = \mathbf{1}$\;
Let $\bm{x} \leftarrow \sum_{i\in [m]} \delta_i\cdot \bm{x}^i$ with $\bm{x}^i$ corresponding the data exchange achieving utility vector $\bm{u}_i$ \label{line:weighted_sum_of_core}\;
\Return{the data exchange $\bm{x}$}\;
\caption{Pivoting algorithm for finding $\epsilon$-core-stable exchange}
\label{alg:pivoting_alg}
\end{algorithm}

\begin{definition}[Ordinal basis]
A subset of columns $O$ is a called \emph{ordinal basis} for the utility matrix $\mathbf{U}$ if (i) $\abs{O} = n$ and; (ii) for every column $j\in [m]$, there exists a row $i\in [n]$ such that $U_i^O \ge U_{i, j}$, where $U_i^O = \min_{j\in O} U_{i, j}$. 
\end{definition}

During PA, it respectively maintains two bases $B$ and $O$ for each of the two matrices $\mathbf{C}$ and $\mathbf{U}$. 
These bases evolve until they are equal. 
To get to this, we will pivot to a new basis in each matrix. 
A pivot step in the coalition matrix $\mathbf{C}$ is defined like the usual linear programming pivoting step. 
A pivoting step in the utility matrix $\mathbf{U}$ is called an \emph{ordinal pivot step}. 
\begin{definition}[Cardinal Pivot]
For the coalition matrix $\mathbf{C}$, given a basis $B = (j_1, \ldots, j_n)$ for the matrix consisting of $n$ of its columns, we can take any column $j \notin B$ and remove one of the columns from $B$ to get a new basis using standard linear algebra. Such a movement is called a \emph{cardinal pivot}.
\end{definition}

\begin{definition}[Ordinal Pivot]
For the utility matrix $\mathbf{U}$, given a basis $O = (j_1, \ldots, j_n)$ consisting of $n$ columns of the matrix, we can take any column in the basis and replace it with a unique column from outside the basis. 
Such a step is called an \emph{ordinal pivot} step.
\end{definition}

Once PA terminates, we get an identical basis $B$, which is both a cardinal basis and an ordinal basis.
Then we solve the linear equation $\mathbf{C}_B \cdot \bm{\delta} = \mathbf{1}$ and get the weights for each column.
Afterward, in \Cref{line:weighted_sum_of_core}, we calculate the weighted data exchange $\bm{x}= \sum_i \delta_i\cdot \bm{x}^i$ with $\bm{x}^i$ as the data exchange corresponding to the $i$-th column.
The data exchange is guaranteed to be $\epsilon$-core-stable by \Cref{label:core_existence_by_scarf}.

\bibliographystyle{alpha}
\bibliography{references}

\end{document}